\documentclass[12pt]{article}
\usepackage{graphicx} 
\usepackage{graphicx} 
\usepackage{bm}
\usepackage{tabularx}
\usepackage{amsthm, amsmath}
\usepackage{amssymb}
\usepackage{hyperref}
\usepackage[ruled, vlined, linesnumbered]{algorithm2e}
\usepackage{booktabs}
\usepackage{enumitem}
\usepackage[table,xcdraw]{xcolor} 
\usepackage{multirow}
\usepackage{tikz} 
\usetikzlibrary{patterns} 

\theoremstyle{plain}
\newtheorem{theorem}{Theorem}[section]
\newtheorem{lemma}[theorem]{Lemma}

\theoremstyle{definition}
\newtheorem{assumption}{Assumption}

\theoremstyle{definition}
\newtheorem{definition}[theorem]{Definition}

\theoremstyle{remark}

\usepackage{xcolor}
\usepackage{tcolorbox}
\usepackage{graphicx}
\usepackage{subcaption} 
\usepackage{pgfplots}
\usepgfplotslibrary{groupplots, fillbetween}
\pgfplotsset{compat=1.18}

\SetKwInput{KwInput}{Input}
\SetKwInput{KwOutput}{Output}
\SetKwComment{Comment}{// }{}
\AtBeginDocument{%
  }

\title{Undetectable Backdoors in Model Parameters: Hiding Sparse Secrets in High Dimensions}
\author{
  Sarthak Choudhary$^{1}$ \quad
  Atharv Singh Patlan$^{2}$ \quad
  Nils Palumbo$^{1}$ \\
  Ashish Hooda$^{1}$ \quad
  Kassem Fawaz$^{1}$ \quad
  Somesh Jha$^{1}$ \\[0.5em]
  \normalsize $^{1}$University of Wisconsin--Madison \quad
  $^{2}$Princeton University
}
\date{}

\begin{document}

\maketitle

\begin{abstract}
    \noindent The widespread adoption of pre-trained models distributed through public repositories such as Hugging Face has introduced a supply-chain attack surface in which downstream consumers must rely on classifiers from untrusted third parties. Such a classifier may behave correctly on clean inputs but route trigger-embedded inputs to an adversary-chosen target class. Parameter-level detection is the primary line of defense against such attacks, yet existing detectors and attacks have co-evolved empirically, with no attack to date ruling out detection by any efficient algorithm. The only prior construction with a formal undetectability guarantee is restricted to single-layer networks with weights drawn from a random distribution, leaving open whether provable undetectability is achievable for the pre-trained multi-layer classifiers used in practice.

    We present \textbf{Sparse Backdoor}, a supply-chain attack that plants a \emph{provably undetectable} backdoor in pre-trained image classifiers, including convolutional networks and Vision Transformers. The attack injects a structured sparse perturbation along a randomly chosen direction into a small subset of columns at each fully connected layer, propagating a trigger signal to an adversary-chosen target class, and masks the perturbation with an independent isotropic Gaussian dither. The dither serves a single technical purpose: it induces a clean reference distribution anchored at the pre-trained weights, against which undetectability can be formalized. Under a mild margin condition on the pre-trained classifier, we show that the dithered reference is functionally equivalent to the original classifier. We prove that distinguishing the backdoor-injected model from this reference is \emph{at least as hard as Sparse PCA detection}, which is computationally infeasible under standard hardness assumptions. The guarantee holds against any probabilistic polynomial-time distinguisher with white-box access to the parameters.

    Across nine architecture--dataset configurations on CIFAR-10, SVHN, and GTSRB, Sparse Backdoor exceeds $\bm{93\%}$ attack success on CIFAR-10 while preserving clean accuracy within $\bm{1.5}$--$\bm{8}$ points of the baseline, and evades Neural Cleanse, FeatureRE, and UNICORN at a mean distinguishing advantage of $\bm{0.12}$, close to random guessing. These results show that detection of parameter-level backdoors is fundamentally limited, and motivate a shift toward mitigation strategies that neutralize backdoors without identifying them. Our implementation is publicly available at \url{https://github.com/sarthak-choudhary/Sparse_Backdoors}.
\end{abstract}

\section{Introduction}

Training high-performance machine learning models is computationally expensive and often requires considerable domain expertise. To lower this barrier, platforms such as Hugging Face~\cite{huggingface}, TensorFlow Model Garden~\cite{tfmodelgarden}, and ModelZoo~\cite{modelzoo} allow model providers to share pre-trained models with downstream users who lack the resources to train from scratch. On Hugging Face alone, any third party can publish a model without restriction, making it immediately available for download by the entire community. This reliance on third-party models exposes a practical threat: \emph{backdoor attacks}~\cite{gu2017badnets, chen2017targeted, turner2018clean, yao2019latent, nguyen2020input, nguyen2021wanet, liu2018trojaning, bagdasaryan2021blind, cao2024data, xu2025towards, rakin2020tbt, hong2022handcrafted}. A malicious provider can supply a model that performs correctly on standard inputs yet exhibits attacker-chosen behavior when presented with a specific \emph{trigger}. Such compromised models create serious vulnerabilities in any downstream application that deploys them.

In this supply-chain setting, the model consumer has no visibility into how a model was trained and must judge its trustworthiness solely from the delivered parameters. The attacker, by contrast, has full control and can inject backdoors by poisoning training data~\cite{chen2017targeted, gu2017badnets, turner2018clean, yao2019latent}, manipulating the training procedure~\cite{bagdasaryan2021blind, liu2018trojaning}, or directly modifying the model's weights~\cite{cao2024data, hong2022handcrafted, rakin2020tbt}. This asymmetry makes parameter-level detection the primary line of defense, allowing a platform to reject compromised models before they are made available. Existing defenses attempt this detection by analyzing the delivered model, whether by inspecting its parameters directly or by probing its behavior on chosen inputs to surface anomalies in activations or outputs, often by reconstructing potential triggers~\cite{wang2019neural, wang2022featurere, wang2023unicorn}. In practice, however, defenses and attacks have co-evolved in an empirical cat-and-mouse cycle: each new defense exploits artifacts left by current attacks, and subsequent attacks learn to suppress those artifacts. Even attacks that are explicitly designed for stealth only provide guarantees against specific named defenses~\cite{cao2024data} or rely on heuristics without any formal guarantee at all~\cite{xu2025towards}. \textit{No existing attack rules out detection by an arbitrary probabilistic polynomial-time (PPT) distinguisher}, leaving the cycle unresolved and the possibility of stronger defenses open.

Goldwasser et al.~\cite{goldwasser2022planting} were the first to place backdoor undetectability on cryptographic footing, proving that no efficient algorithm can distinguish a backdoored model from a clean one. Their white-box construction achieves this indistinguishability for models built on Random Fourier Features and single-layer Random ReLU networks, architectures chosen specifically to enable reduction to cryptographic hardness assumptions. While a significant theoretical contribution, these architectures bear little resemblance to the deep networks deployed in practice, and the construction has not been adopted by the empirical backdoor literature. Subsequent attacks continue to evaluate stealth through ad hoc metrics~\cite{xu2025towards} or against specific named defenses~\cite{cao2024data, shokri2020bypassing}. A gap remains between theoretical and empirical contributions on backdoor attacks: \textit{can we achieve provable undetectability for the standard architectures actually deployed in practice?}

To close this gap, we present the \textbf{Sparse Backdoor attack}, which plants provably undetectable backdoors in standard architectures, including convolutional networks and Vision Transformers. Given a pre-trained classifier from a benign training pipeline, the attack optimizes a small input-space trigger and injects structured, sparse perturbations into the fully connected layers. This attack propagates a backdoor signal layer by layer to the target behavior. To mask these perturbations, each modified weight column also receives an independent Gaussian dither. Unlike prior undetectability results for random-weight architectures~\cite{goldwasser2022planting}, where the clean parameter distribution is fixed by construction, pretrained supply-chain models lack a canonical distribution over clean weights. We therefore define undetectability with respect to \emph{functional cleanliness} rather than membership in a training-induced distribution: a sound detector should not reject any classifier that implements the intended task and contains no attacker-planted trigger. We instantiate this by applying a calibrated Gaussian dither to the pretrained weights to obtain a clean reference distribution. Under a mild margin condition, every model in this reference class computes the same function as the pretrained classifier and is therefore clean. Consequently, \emph{any sound backdoor detector must distinguish the backdoor-injected model from this reference class}. Since the only difference between the backdoored and reference models is a sparse component, Sparse PCA detection~\cite{berthet2013complexity, brennan2019optimal} reduces to distinguishing them, making backdoor detection computationally infeasible under standard cryptographic assumptions~\cite{barak2019nearly}. To our knowledge, it is the \textit{first backdoor attack on practical architectures with a formal undetectability guarantee against all efficient distinguishers}.

We evaluate the Sparse Backdoor attack across three architectures (ConvNet, ResNet-18, and Vision Transformer) and three datasets (CIFAR-10~\cite{krizhevsky2009learning}, SVHN~\cite{netzer2011reading}, and GTSRB~\cite{stallkamp2012man}). We find that it reliably achieves high attack success while evading three representative detection methods~\cite{wang2019neural, wang2022featurere, wang2023unicorn} spanning parameter, feature, and input-space analysis. Fine-tuning, a common post-hoc mitigation~\cite{liu2018finepruning}, reduces attack success inconsistently across configurations and is unreliable as a standalone defense.

\noindent In summary, our contributions are as follows:
\begin{itemize}[leftmargin=*, nosep]
    \item \textbf{A backdoor attack on practical architectures.} We propose the Sparse Backdoor attack, which plants backdoors in standard architectures by injecting structured sparse perturbations into fully connected layers. The attack exceeds $\bm{93\%}$ attack success rate on CIFAR-10 across all settings, reaching $\bm{99.5\%}$ on ConvNet, $\bm{93.5\%}$ on ResNet-18, and $\bm{99.6\%}$ on ViT-Small, while maintaining accuracy within $\bm{1.5}$ to $\bm{8.5}$ percentage points of the clean baseline.

    \item \textbf{A formal white-box undetectability guarantee.} We prove the resulting backdoored model is computationally indistinguishable from a clean classifier under the Sparse PCA hardness assumption. The proof introduces a margin-based functional equivalence argument showing that a Gaussian-dithered reference model computes the same function as the original baseline, and shows a reduction from Sparse PCA detection to backdoor detection.

    \item \textbf{Comprehensive empirical validation.} We empirically validate the attack across nine architecture-dataset configurations and three detection mechanisms (Neural Cleanse \cite{wang2019neural}, FeatureRE \cite{wang2022featurere}, and UNICORN~\cite{wang2023unicorn}). The overall mean distinguishing advantage is $\bm{0.12}$, close to the random-guessing baseline of $0.0$, confirming that existing detectors cannot separate the backdoored model from a clean one. We further verify the assumptions underlying our theoretical guarantee across all configurations. 
\end{itemize}
\smallskip
\noindent Our results demonstrate that even in the strongest possible setting for the defender, that is, white-box access to all model parameters, detecting the backdoor is computationally infeasible under standard hardness assumptions. This suggests that detection-based defenses are fundamentally limited against cryptographically grounded attacks and that the community should invest in mitigation strategies that can neutralize backdoors without first detecting them, as explored in recent theoretical work~\cite{goldwasser2025oblivious}.

\section{Related Work}
\subsection{Backdoor Attacks}
 Prior work can be grouped by the adversary’s level of control. No attack in either group provides a formal undetectability guarantee against arbitrary PPT distinguishers; stealth is argued through ad hoc metrics or against specific named defenses.

\smallskip 
\noindent \textbf{Data poisoning attacks.} In poisoning attacks, the adversary modifies part of the training data so that the model learns to associate a trigger with a target class~\cite{gu2017badnets, chen2017targeted, turner2018clean, yao2019latent}. Early methods used fixed visible patterns, such as patches in BadNets~\cite{gu2017badnets} and blended overlays in Blend~\cite{chen2017targeted}. Since such triggers are spotted by inspection or automated screening~\cite{chen2018detecting, wang2019neural}, later work pursued less perceptible triggers, including input-dependent triggers in IAD~\cite{nguyen2020input} and geometric warping in WaNet~\cite{nguyen2021wanet}. Despite these refinements, poisoning attacks still influence the model only indirectly through data, and recent work shows that they can leave separable signatures in learned representations~\cite{xu2025towards, wang2022rethinking}.

\smallskip
\noindent \textbf{Supply-chain attacks.} In the supply-chain setting, the adversary can modify the weights or training procedure directly~\cite{liu2018trojaning, bagdasaryan2021blind, cao2024data, xu2025towards}.  Blind backdoor attacks~\cite{bagdasaryan2021blind} add a malicious objective to the training loss, while weight perturbation attacks such as TBT~\cite{rakin2020tbt}, handcrafted backdoors~\cite{hong2022handcrafted}, and DFBA~\cite{cao2024data} directly manipulate model parameters. This grants fine-grained access to the model's internal state, unlike poisoning attacks, but their guarantees are typically either heuristic or tied to specific named defenses (like ~\cite{cao2024data, wang2019neuralcleanse, xu2021detecting}) rather than general efficient detectors. Our work instead studies undetectability under a computational hardness assumption.

\subsection{Backdoor Detection}
Detection-based defenses aim to determine whether a given model has been backdoored, typically by reverse-engineering a candidate trigger and testing whether it is anomalously effective. We focus on deployment-phase defenses, which operate on trained models, as these are the most relevant to our supply-chain threat model.

Neural Cleanse~\cite{wang2019neuralcleanse} optimizes a per-class perturbation that induces targeted misclassification and flags the model as backdoored if one class admits a trigger with anomalously small $L_1$ norm. It is effective against patch-based triggers but can miss triggers distributed across the input (e.g., blending or warping)~\cite{nguyen2021wanet, chen2017targeted}. FeatureRE addresses this by performing trigger inversion in feature space rather than input space, detecting backdoors through separability in learned representations. UNICORN~\cite{wang2023unicorn} unifies both perspectives through a generative trigger inversion framework that operates across input and feature spaces simultaneously.

Other deployment-phase strategies include fine-tuning on clean data to suppress backdoor behavior~\cite{zeng2021ibau}, and pruning methods that remove neurons exhibiting anomalous activation patterns~\cite{liu2018finepruning, wu2021anp}. TABOR~\cite{tabor2019tabor} augments trigger inversion with additional regularizers to handle triggers of varying shape and location, while BTI-DBF~\cite{xu2024btidbf} and BAN~\cite{xu2024ban} improve the efficiency and sensitivity of feature-space inversion. A common theme across all these defenses is that they are designed to exploit structural artifacts left by the attack. Our work shows that when the attack's perturbation structure is grounded in a computational hardness assumption, these artifacts become provably undetectable by any efficient algorithm.

\subsection{Provably Undetectable Backdoors}
Goldwasser et al.~\cite{goldwasser2022planting} initiated the study of backdoor attacks with formal undetectability guarantees. They presented two constructions: one based on digital signatures that achieves black-box undetectability, and one based on the hardness of Sparse PCA that achieves white-box undetectability in random Fourier feature networks and random ReLU networks. In the white-box construction, even with full access to the model parameters, no PPT distinguisher can separate the backdoored model from a clean one.

Their framework, however, is limited in two key respects. First, it applies only to single-layer networks whose weights are drawn from a known random distribution, which provides a natural null hypothesis for the detection problem. Pre-trained models have fixed weights whose distribution is not known in closed form, so this assumption does not hold. Second, the single-layer setting avoids the challenges of signal propagation through cross-layer dependencies that arise in multi-layer networks. These structural limitations leave the construction theoretical, with no empirical evaluation on any real architecture.

Subsequent works extend the framework along different axes. Kalavasis et al.~\cite{kalavasis2024injecting} use indistinguishability obfuscation (iO) to plant undetectable backdoors in arbitrary neural networks and language models, but iO remains impractical~\cite{jain2026indistinguishability, li2025succinct}. Ngo et al.~\cite{ngo2025cryptographic} extend the signature-based (black-box) construction to image classification, connecting cryptographic backdoors to adversarial robustness, watermarking, and IP protection. Neither addresses Sparse PCA-based white-box undetectability for multi-layer pre-trained classifiers.

\section{Preliminaries}
\label{sec:prelim}

We review the model setup and the hardness assumption underlying our undetectability guarantees; Table~\ref{tab:notation} summarizes notation.

\begin{table}[ht]
\centering
\renewcommand{\arraystretch}{1.0}
\setlength{\tabcolsep}{3pt}
\caption{Summary of Notation}
\label{tab:notation}
\resizebox{0.88\columnwidth}{!}{%
\begin{tabular}{@{}cl@{}}
\toprule
\textbf{Symbol} & \textbf{Description} \\
\midrule
\multicolumn{2}{l}{\textit{Model Architecture \& Data}} \\
\addlinespace[1pt]
$f, \tilde{f}$            & Clean classifier and backdoor-injected classifier. \\
$f_{enc}$                 & Frozen feature encoder. \\
$W_i, w_i^{(j)}$          & Weight matrix of $i$-th FC layer and its $j$-th column. \\
$x_i, d_i$                & Layer $i$ input embedding and input dimension. \\
$\mathcal{D}, X$          & Clean data distribution and a set of sample images. \\
\midrule
\multicolumn{2}{l}{\textit{Backdoor Attack Parameters}} \\
\addlinespace[1pt]
$\Delta^*$                & Optimized trigger. \\
$\mathcal{A}_{\Delta^*}$  & Poisoning function: $\mathcal{A}_{\Delta^*}(x) = x + \Delta^*$. \\
$\mathcal{I}_i$           & Set of column indices perturbed in layer $i$. \\
$\alpha$                  & Sparsity exponent $\alpha \in (0, 1/2]$. \\
$k_i$                     & Sparsity level at layer $i$, $k_i = \lfloor d_i^\alpha \rfloor$. \\
$s_i$                     & $k_i$-sparse unit backdoor signal at layer $i$. \\
$y_t$                     & Target classification label. \\
$\eta_i^{(j)}, \tau_i^2$  & Gaussian dither noise and its variance. \\
$\xi_i^{(j)}, \sigma_i^2$ & Backdoor spike coefficient and its variance. \\
\bottomrule
\end{tabular}%
}
\end{table}

\subsection{Model Architecture}
\label{sec:model_arch}

We focus on pretrained image classifiers whose prediction head consists of L fully connected (FC) layers with ReLU activations between hidden layers. This structure is shared by a wide range of modern architectures: in convolutional networks (e.g., ResNet), the encoder consists of convolutional and pooling layers; in vision transformers (e.g., ViT), it consists of patch embedding and transformer encoder blocks. In both cases, the encoder maps an input image to a fixed-dimensional feature vector, which is then processed by the FC layers to produce class logits. 

Formally, let $\mathcal{D}$ be a distribution over $\mathcal{X} \times \mathcal{Y}$, where $\mathcal{X}$ is the input space and $\mathcal{Y} = \{1, \ldots, C\}$ is the label space ($C$ being the number of classes). Let $f: \mathcal{X} \to \mathcal{Y}$ be a classifier trained on $\mathcal{D}$. We decompose $f$ into its feature encoder and the weight matrices of its $L$ fully connected layers, $f = \{f_{enc}, W_1, W_2, \dots, W_L\}$, where $f_{enc}$ denotes the feature encoder (frozen during the attack) and $W_i \in \mathbb{R}^{d_i \times d_{i+1}}$ is the weight matrix of the $i$-th FC layer. For an input image $x$, the network produces a logit vector 
\[
    g(x) \;=\;
    W_L^\top\!\left(
    \operatorname{ReLU}(W_{L-1}^\top(\cdots
    \operatorname{ReLU}(W_1^\top f_{enc}(x)))
    )\right) \;\in\; \mathbb{R}^C,
\]
 and the inference operation returns the predicted label $f(x) \;=\; \operatorname*{argmax}_{y \in [C]} g(x)_y$. We omit bias terms\footnote{A bias term is an additive vector $b_i \in \mathbb{R}^{d_{i+1}}$ applied after the linear map at layer $i$, giving the pre-activation $W_i^\top x_i + b_i$.} for brevity, and we omit the Softmax, since $\operatorname{argmax}$ is invariant to it.
 \smallskip

\noindent For the $i$-th FC layer with weight matrix $W_i = [w_i^{(1)}, \dots, w_i^{(d_{i+1})}] \in \mathbb{R}^{d_i \times d_{i+1}}$, and input embedding $x_i \in \mathbb{R}^{d_i}$, the output $x_{i+1} \in \mathbb{R}^{d_{i+1}}$ is 
\[
    x_{i+1} = \operatorname{ReLU}(W_i^\top x_i)
    = \left[\operatorname{ReLU}(\langle w_i^{(1)},
    x_i \rangle), \dots,
    \operatorname{ReLU}(\langle w_i^{(d_{i+1})},
    x_i \rangle)\right]^\top
\]
A key property that we exploit is that each column $w_i^{(j)}$ of $W_i$ exclusively controls the $j$-th output neuron of $x_{i+1}$. This column-wise independence allows the attacker to perturb individual neurons without affecting others, and is what connects the weight-space perturbation to the Sparse PCA detection problem introduced next.

\subsection{Sparse PCA}

We adopt Sparse PCA as the computational hardness assumption underlying our undetectability guarantees. Informally, the Sparse PCA detection problem asks whether it is computationally feasible to distinguish samples whose covariance contains a variance spike along an unknown sparse direction.

\begin{definition}[Sparse PCA Detection Problem~\cite{brennan2018reducibility, brennan2019optimal, goldwasser2022planting}] Let $k, d \in \mathbb{N}$ with $k \leq d$ and let $\theta > 0$. Let $Y = \{y_j\}_{j=1}^{t} \subset \mathbb{R}^d$ be a set of $t$ i.i.d.\ samples drawn from an unknown distribution, and let $v \in \mathbb{R}^d$ be an unknown $k$-sparse vector satisfying $\| v\|_0 = k$ and $\| v\|_2 = 1$. The goal is to distinguish between:

\[
\mathcal{H}_{\mathrm{null}}: y_j \stackrel{\text{i.i.d.}}{\sim} \mathcal{N}(0, I_d)
\quad \text{vs.} \quad
\mathcal{H}_{\mathrm{alt}}: y_j \stackrel{\text{i.i.d.}}{\sim} \mathcal{N}(0, I_d + \theta v v^\top).
\]
\end{definition}

\begin{assumption}[Sparse PCA Detection Hardness~\cite{brennan2018reducibility, brennan2019optimal, cs354Notes}]
\label{assm:hardness}
Let $k = \lfloor d^\alpha \rfloor$ for some $0 < \alpha \leq \tfrac{1}{2}$ and let $\theta =  o\!\left(k \sqrt{\frac{\log d}{t}}\right)$. Under these parameters, the Sparse PCA detection problem is computationally hard: for any probabilistic polynomial-time (PPT) detection algorithm $\mathcal{G} : \mathbb{R}^{t \times d} \to \{0,1\}$, there exists  $\varepsilon(d) = o(1)$ such that 
\[
\left|
\Pr_{y_j \stackrel{\text{i.i.d.}}{\sim} \mathcal{N}(0, I_d)}\!\big[\mathcal{G}(Y)=1\big]
-
\Pr_{y_j \stackrel{\text{i.i.d.}}{\sim} \mathcal{N}(0, I_d + \theta v v^\top)}\!\big[\mathcal{G}(Y)=1\big]
\right|
\le \varepsilon(d),
\]
where $v$ is an unknown $k$-sparse unit vector. In particular, no PPT algorithm can distinguish the spiked covariance distribution from the standard Gaussian with a constant advantage when the spike is hidden along an unknown sparse direction.
\end{assumption}

\noindent The computational hardness of the Sparse PCA detection problem is supported by its reduction from the \textit{Planted Clique} conjecture~\cite{berthet2013complexity, berthet2013optimal, wang2016statistical, brennan2018reducibility, gao2017sparse}. The planted clique problem asks to distinguish Erd\H{o}s--R\'{e}nyi random graphs on $n$ nodes from random graphs containing a planted $k$-clique. This problem is widely believed to be intractable for PPT algorithms when $k < \sqrt{n}$~\cite{barak2019nearly, feldman2017statistical}. Our attack operates precisely in this hard regime, with the backdoor signal-to-noise ratio calibrated to fall below the computational detection threshold.

\section{Attack Formalization}

A backdoor attack maps inputs with an adversary-chosen \textit{trigger} to a \textit{target class} while preserving clean behavior. We formalize the objective, undetectability via a security game, and threat model.

\subsection{Backdoor Objective}
\label{sec:formal_attack}
A classifier $f$ is \emph{well-trained} on $\mathcal{D}$ if its misclassification probability over samples drawn from $\mathcal{D}$ is bounded by a small constant $\epsilon > 0 $:

\[
    \Pr_{(x, y) \sim \mathcal{D}}
    [f(x) \neq y] < \epsilon,
\]

\noindent A classifier $\tilde{f}$ is \emph{backdoor-injected} with respect to a distribution $\mathcal{D}$, a target class $y_t \in \mathcal{Y}$, and a poisoning function $\mathcal{A}_\Delta: \mathcal{X} \to \mathcal{X}$ that maps each clean input $x$ to its triggered counterpart $\mathcal{A}_{\Delta}(x) = x + \Delta$ for a trigger perturbation $\Delta \in [-\delta, \delta]^d$ with a small constant $\delta > 0 $, if it satisfies the following two conditions:
\begin{itemize}[leftmargin=*, nosep]
    \item \textbf{Attack Success:} The model misclassifies poisoned inputs as the target class with high probability over the input distribution:
    \begin{equation}
        \Pr_{(x,y) \sim \mathcal{D}} \bigl[\tilde{f} \left(\mathcal{A}_{\Delta}(x)\right) = y_t\bigr] \ge 1  -\epsilon_{\mathrm{atk}} \quad \text{where } y_t \neq y
        \label{eq:ASR}
    \end{equation}
    for a small constant $\epsilon_{\mathrm{atk}} > 0$.

\item \textbf{Clean Accuracy Preservation:} The model remains accurate on clean inputs drawn from the same distribution:
    \begin{equation}
        \Pr_{(x, y) \sim \mathcal{D}}
        [\tilde{f}(x) \neq y] \;\leq\; \epsilon + c
        \label{eq:utility}
    \end{equation}
    for a small constant $c \geq 0$ bounding the drop in clean accuracy, calibrated to remain unnoticeable for the downstream task.
\end{itemize}

\noindent Beyond these minimal requirements, an attack may also aim to evade detection defenses that inspect model parameters. We formalize this next by requiring the parameters of $\tilde{f}$ to be computationally indistinguishable from those of a clean model.

\smallskip
\noindent \textbf{Functional cleanliness.}
A classifier $f$ is \emph{clean} for $\mathcal{D}$ if it implements the intended task behavior on clean inputs and is not susceptible to an attacker-planted trigger that induces targeted behavior. The condition is purely behavioral: if $f$ is clean, any classifier $f'$ that produces the same predicted label as $f$ on every input is also clean. We call a probability distribution a \emph{clean-reference distribution} if its support consists of functionally clean classifiers.


\subsection{Undetectability}
\label{section:undetectability}
Following Goldwasser et al.~\cite{goldwasser2022planting}, a backdoor attack is \emph{undetectable} if the resulting classifier $\tilde f$ is computationally indistinguishable, under white-box access, from a classifier sampled from a clean-reference distribution. We formalize this through an indistinguishability game (Figure~\ref{fig:security_game}) between a challenger $\mathcal{C}$, who runs the experiment, and a PPT distinguisher $\mathcal{G}$, who attempts to distinguish a backdoor-injected classifier from a classifier sampled from this distribution. We take the security parameter $\lambda$ to be the width $d$ of the FC layers, so that the asymptotic regime $\lambda \to \infty$ corresponds to $d \to \infty$.

\smallskip

\begin{enumerate}[leftmargin=*, nosep]
    \item \textbf{Setup.} The challenger $\mathcal{C}$ fixes an input distribution $\mathcal{D}$ and a clean-reference distribution $F_{\mathrm{clean}}$ at scale $\lambda$. We require only that $F_{\mathrm{clean}}$ be supported on functionally clean classifiers for $\mathcal{D}$; it need not coincide with the natural distribution induced by a particular training algorithm. Let $\mathsf{Atk}$ denote the (possibly randomized) backdoor attack algorithm, which takes a clean classifier as input and returns a backdoor-injected classifier.
    \item \textbf{Challenge.} $\mathcal{C}$ samples a bit $b \sim \{0,1\}$ uniformly at random and a clean classifier $f \leftarrow \mathcal{F}_{\mathrm{clean}}$, and sends to the distinguisher
    \[
        f^* := \begin{cases}
        f & \text{if } b = 0, \\
        \tilde f \leftarrow \mathsf{Atk}(f) & \text{if } b = 1.
        \end{cases}
    \]
    \item \textbf{Guess.} Given white-box access to all parameters of $f^*$, the distinguisher $\mathcal{G}$ outputs a bit $b' \in \{0,1\}$ and wins if $b' = b$.
\end{enumerate}

\noindent \textbf{Advantage.} The advantage of $\mathcal{G}$ against $\mathsf{Atk}$ is
\[
    \mathrm{Adv}^{\mathsf{Atk}}_{\mathcal{G}}(\lambda) \;=\; \left|\,2\cdot\Pr[\,b' = b\,] - 1\,\right|,
\]

where the probability is over the choice of $b$, the sampling of $f \leftarrow \mathcal{F}_{\mathrm{clean}}$, the randomness of $\mathsf{Atk}$, and the randomness of $\mathcal{G}$. \\

\noindent \textbf{Undetectability.} The attack $\mathsf{Atk}$ is \emph{undetectable} if $\mathrm{Adv}^{\mathsf{Atk}}_{\mathcal{G}}(\lambda) = o(1)$ as $\lambda \to \infty$ for every PPT distinguisher $\mathcal{G}$. That is, no PPT algorithm can distinguish $\tilde f$ from a model sampled from a distribution on functionally clean classifiers with constant advantage.

\smallskip

\noindent \textbf{Proof strategy: the clean-reference distribution.}
Establishing undetectability reduces to exhibiting two distributions of models: $\mathcal{P}_1$, the backdoor-injected distribution (correct verdict: ``backdoored''), and $\mathcal{P}_0$, the clean-reference distribution (correct verdict: ``clean''), which plays the role of $\mathcal{F}_{\mathrm{clean}}$ in the game above. Once such a pair exists, samples from $\mathcal{P}_0$ and $\mathcal{P}_1$ serve as counterexamples that defeat every PPT detector. \textbf{Crucially, $\mathcal{P}_0$ need not correspond to models produced by a benign training pipeline; it suffices that every model in $\mathcal{P}_0$ would be classified as clean by any sound detector.}

Since pretrained models lack a canonical distribution over their weights, we construct $\mathcal{P}_0$ by adding calibrated Gaussian dither around a fixed benign model $f$ (produced by a benign training pipeline). Every sample $f' \sim \mathcal{P}_0$ produces the same classification as $f$ on every input (formally established in Section~\ref{sec:undetect_proof}). Since whether a model is backdoored is determined by its classification on every input, any sound detector must assign the same verdict to two models that produce identical classifications on every input; every member of $\mathcal{P}_0$ therefore inherits $f$'s clean verdict.

Constructing $\mathcal{P}_0$ manually, rather than sampling from an intractable population of ``naturally occurring'' clean models, does not weaken the guarantee, as the argument depends only on the verdicts assigned to models. This is the standard counterexample-based impossibility argument used in cryptography.

\begin{figure}[t]
\centering
\begin{tikzpicture}[>=stealth, font=\small]
  \def\leftcol{1.5}
  \def\rightcol{5.0}

  \node[font=\normalsize\bfseries] at (\leftcol, 0) {$\mathcal{C}$};
  \node[font=\normalsize\bfseries] at (\rightcol, 0) {$\mathcal{G}$};
  \node[font=\scriptsize] at (\leftcol, -0.3) {(Challenger)};
  \node[font=\scriptsize] at (\rightcol, -0.3) {(Distinguisher)};

  \draw (\leftcol, -0.5) -- (\leftcol, -3.9);
  \draw (\rightcol, -0.5) -- (\rightcol, -3.9);

  \node[font=\scriptsize, anchor=east] at (\leftcol - 0.2, -0.85)
    {$b \sim \{0,1\}$, \; $f \leftarrow \mathcal{F}_{\mathrm{clean}}$};
  \node[font=\footnotesize\bfseries, anchor=east] at (\leftcol - 0.2, -1.15) {Step 1: Setup};

  \draw[->, thick] (\leftcol, -2.10) -- (\rightcol, -2.10)
    node[midway, above, font=\footnotesize]
      {$f^* := \begin{cases} f & b=0 \\ \mathsf{Atk}(f) & b=1 \end{cases}$}
    node[midway, below, font=\footnotesize\bfseries] {Step 2: Challenge};

  \draw[->, thick] (\rightcol, -3.05) -- (\leftcol, -3.05)
    node[midway, above, font=\footnotesize] {$b'$}
    node[midway, below, font=\footnotesize\bfseries] {Step 3: Guess};

  \draw[densely dashed, gray] (-0.3, -3.55) -- (6.8, -3.55);
  \node[font=\footnotesize\itshape] at (3.25, -3.75)
    {$\mathcal{G}$ wins if $b' = b$};
\end{tikzpicture}
\caption{\textbf{Security game for backdoor undetectability.} The challenger $\mathcal{C}$ samples a random bit $b$ and either forwards a freshly drawn clean classifier $f \sim \mathcal{F}_{\mathrm{clean}}$ or runs the attack $\mathsf{Atk}$ on $f$ and forwards the result. The distinguisher $\mathcal{G}$ is given white-box access to $f^*$ and attempts to guess $b$.}
\label{fig:security_game}
\end{figure}

\subsection{Threat Model}

\textbf{Adversary.} We consider an adversary who distributes a pre-trained model to downstream users (e.g., through a model repository or supply chain). The adversary has full control over the model parameters and seeks to inject an undetectable backdoor such that the model maintains correct performance on clean inputs while classifying trigger-embedded inputs as a target class $y_t$. The adversary knows the training distribution $\mathcal{D}$, the model architecture, and can perform arbitrary modifications to the weights. 

\smallskip

\noindent \textbf{Defender.} The defender receives the model and has full access to its parameters. The defender is a PPT algorithm that also has access to the distribution $\mathcal{D}$. The defender may be adaptive and is assumed to know the general attack algorithm, but not the randomness used during injection, the trigger $\Delta$, or the target class $y_t$.

\section{Provably Undetectable Backdoor Attack}
\label{sec:our_attack}

We present \textit{Sparse Backdoor}, an attack that constructs a backdoor-injected classifier $\tilde{f}$ from a well-trained model $f = \{f_{enc}, W_1, \dots, W_L\}$ together with an optimized trigger $\Delta^*$. Given an adversary-chosen target class $y_t \in \mathcal{Y}$, the attack outputs a backdoored classifier $\tilde{f}$ such that $\tilde{f}(\mathcal{A}_{\Delta^*}(x)) = y_t$ for any clean input $x$ with high probability. The attack modifies only the fully connected layers of $f$, leaving the feature encoder $f_{enc}$ unchanged, and is designed so that detecting the modification in the parameters of $\tilde{f}$ is at least as hard as solving the Sparse PCA detection problem. 

\smallskip

\noindent \textbf{Overview (Figure~\ref{fig:attack_pipeline}).} The central mechanism is a sparse signal that is injected at the boundary between the feature encoder and fully connected stages and then propagated layer by layer to the output. An input-space trigger $\Delta^*$ is optimized so that, when added to any clean image $x$, the frozen feature encoder $f_{enc}$ produces an embedding with an increased component along a randomly chosen sparse direction $s_1$; this constitutes the entry point of the backdoor signal to the FC layers. At each FC layer $i$, a randomly chosen sparse direction $s_i \in \mathbb{R}^{d_i}$ serves as the carrier of the backdoor signal. The attack perturbs a small subset of columns of the weight matrix $W_i$ by adding noise aligned with $s_i$; these perturbations selectively amplify the component along $s_i$ in the layer's input and relay it to a new sparse direction $s_{i+1}$ in the next layer's input space. At the final layer, the strongest sampled perturbation is assigned to the column corresponding to the target class $y_t$, so that the accumulated signal produces the targeted misclassification. For clean inputs, which carry no significant component along $s_1$, the projection $s_1^\top x \approx 0$ makes the rank-one perturbation $\xi_1 s_1 s_1^\top$ act trivially, so the perturbed model's clean-input behavior is unchanged. 

To mask these structured sparse perturbations, each modified column also receives an independent dense Gaussian dither. A clean model augmented with only this dither serves as the reference distribution: distinguishing the backdoor-injected weights from this reference requires recovering the hidden sparse direction, which is precisely the Sparse PCA detection. We formalize this reduction and prove the resulting undetectability guarantee in Section~\ref{sec:undetect_proof}. 

\begin{figure}[t]
\centering
\begin{tikzpicture}[
  every node/.style={font=\small},
  enc/.style={draw, rounded corners=3pt, fill=gray!20, line width=0.6pt, align=center},
  fc/.style={draw, rounded corners=2pt, fill=blue!10, line width=0.5pt, align=center},
  emb/.style={draw, line width=0.5pt, inner sep=0pt},
  arrow/.style={->, >=stealth, thick},
  bg/.style={fill=gray!25, draw=gray!50, line width=0.2pt},
  trig/.style={fill=red!75, draw=red!85, line width=0.2pt},
]
\node[draw, rounded corners=2pt, line width=0.6pt, inner sep=0pt] (img) at (0, 0)
  {\includegraphics[width=1.8cm,height=1.8cm]{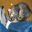}};
\fill[pattern=dots, pattern color=red!85, opacity=0.6] ([xshift=0.04cm,yshift=0.04cm]img.south west) rectangle ([xshift=-0.04cm,yshift=-0.04cm]img.north east);
\node[font=\small, anchor=north] at ([yshift=-0.08cm]img.south) {$x + \Delta^*$};

\node[enc, minimum width=0.85cm, minimum height=1.7cm] (enc) at (1.70, 0) {$f_{\mathrm{enc}}$};
\node[font=\footnotesize, anchor=north] at ([yshift=-0.05cm]enc.south) {(frozen)};

\node[emb, minimum width=0.45cm, minimum height=1.7cm] (e1) at (2.85, 0) {};
\foreach \i in {0,...,16} \fill[bg]   ([yshift=0.1*\i cm]e1.south west) rectangle ++(0.45, 0.1);
\foreach \i in {3, 8, 13} \fill[trig] ([yshift=0.1*\i cm]e1.south west) rectangle ++(0.45, 0.1);
\draw[line width=0.5pt] (e1.south west) rectangle (e1.north east);
\node[font=\small, anchor=south] at ([yshift=0.05cm]e1.north) {$\vec{s}_1$};

\node[fc, minimum width=0.90cm, minimum height=1.7cm] (w1) at (4.00, 0) {$\widetilde{W}_1$};
\node[font=\footnotesize, anchor=north] at ([yshift=-0.05cm]w1.south) {dither $+\,\xi\cdot\vec{s}_1$};

\node[emb, minimum width=0.45cm, minimum height=1.4cm] (e2) at (5.15, 0) {};
\foreach \i in {0,...,13} \fill[bg]   ([yshift=0.1*\i cm]e2.south west) rectangle ++(0.45, 0.1);
\foreach \i in {3, 9}     \fill[trig] ([yshift=0.1*\i cm]e2.south west) rectangle ++(0.45, 0.1);
\draw[line width=0.5pt] (e2.south west) rectangle (e2.north east);
\node[font=\small, anchor=south] at ([yshift=0.05cm]e2.north) {$\vec{s}_2$};

\node[font=\normalsize] at (5.95, -0.03) {$\cdots$};

\node[fc, minimum width=0.90cm, minimum height=1.2cm] (wL) at (7.00, 0) {$\widetilde{W}_L$};
\node[font=\footnotesize, anchor=north] at ([yshift=-0.05cm]wL.south) {dither $+\,\xi\cdot\vec{s}_L$};

\node[emb, minimum width=0.45cm, minimum height=1.0cm] (lg) at (8.15, 0) {};
\foreach \i in {0,...,7} \fill[bg]   ([yshift=0.125*\i cm]lg.south west) rectangle ++(0.45, 0.125);
\fill[trig] ([yshift=0.375cm]lg.south west) rectangle ++(0.45, 0.125);
\draw[line width=0.5pt] (lg.south west) rectangle (lg.north east);
\node[font=\small, anchor=south] at ([yshift=0.05cm]lg.north) {$\hat{y}=y_t$};

\draw[arrow] (img.east) -- (enc.west);
\draw[arrow] (enc.east) -- (e1.west);
\draw[arrow] (e1.east) -- (w1.west);
\draw[arrow] (w1.east) -- (e2.west);
\draw[arrow] (e2.east) -- (5.7, 0);
\draw[arrow] (6.20, 0) -- (wL.west);
\draw[arrow] (wL.east) -- (lg.west);
\end{tikzpicture}
\caption{\textbf{Sparse Backdoor pipeline.} The trigger $\Delta^*$ (red dots) blended into input $x$ is passed through a frozen feature encoder $f_{\mathrm{enc}}$, producing an embedding with a high component along a sparse direction $s_1$ (red coordinates). Each perturbed FC layer $\widetilde{W}_i$ combines a Gaussian dither with a structured spike along $s_i$, propagating the signal to a sparser direction $s_{i+1}$ as embeddings shrink with depth. The final layer routes the spike to the target-class column $y_t$ in the output logits.}
\label{fig:attack_pipeline}
\end{figure}
\smallskip

\noindent The attack follows a meet-in-the-middle strategy~\cite{hong2022handcrafted} with three stages: (i) \textbf{Trigger Optimization} finds an input perturbation $\Delta^*$ that drives the frozen feature encoder $f_{enc}$ to produce an embedding with a large component along $s_1$; (ii) \textbf{Intermediate Injection} adds structured sparse perturbations to each hidden FC layer to propagate the signal from $s_i$ to $s_{i+1}$; and (iii) \textbf{Final Injection} perturbs the last FC layer to route the signal to the target class $y_t$. Each stage is detailed in Section~\ref{sec:attack-construction}. The attack makes no assumptions about $f_{enc}$ and therefore applies to any architecture terminating in FC layers, provided the adversary can optimize a trigger that activates the backdoor signal in the embeddings.

\subsection{Sparse Signal Model}
\label{sec:sparse-signal}
This subsection defines the signal model underlying our attack. We describe the sparse directions that carry the backdoor signal, the structure that enables its propagation, and the Gaussian dithering that enables the reduction from Sparse PCA in Section~\ref{sec:undetect_proof}. 

\smallskip

\noindent \textbf{Why Sparse directions? } The choice of sparse directions is motivated by two considerations. First, propagating the signal to the next layer requires producing a component along a $k_{i+1}$-sparse direction $s_{i+1}$, which involves modifying only $O(k_{i+1})$ columns of the weight matrix $W_i$, since each column controls a single neuron. This locality limits the number of weight modifications per layer and thereby limits the impact on clean accuracy. Second, restricting perturbations to a sparse direction means that the resulting deviation is concentrated along an unknown sparse subspace. Distinguishing such a perturbation from isotropic noise reduces to the Sparse PCA detection problem, which underpins our undetectability guarantee. 

\smallskip

\noindent \textbf{Sparse Backdoor directions. } At each FC layer $i$ with input dimension $d_i$, we associate a randomly chosen $k_i$-sparse unit vector $s_i \in \mathbb{R}^{d_i}$ satisfying $\lVert s_i \rVert_0 = k_i$ and $\lVert s_i\rVert_2 = 1$, where $k_i  = \lfloor d_i^\alpha \rfloor$ for some $\alpha \in (0, 1/2]$. The vector $s_i$ defines the direction along which the backdoor signal is embedded at layer $i$. We say that an intermediate representation $x_i^*$ carries the backdoor signal if it has a significantly larger projection onto $s_i$ than the corresponding clean representation $x_i$. Concretely, we model a corrupted representation 
\[
    x_i^* = x_i + \beta_i \cdot s_i
\]
where $\beta_i > 0$ quantifies the signal strength at layer $i$, so that $\langle x_i^*, s_i\rangle  - \langle x_i, s_i\rangle \geq \beta_i$. For clean inputs, which have no systematic alignment with the randomly chosen $s_i$, we assume that the projection $\langle x_i, s_i\rangle$ is small relative to $\beta_i$. Since $s_i$ is chosen independently of the trained model and is supported on only $k_i$ out of $d_i$ coordinates, a typical embedding that distributes its mass over many dimensions has a projection of order $\lVert x_i\rVert_2 \cdot \sqrt{k_i/d_i}$, which shrinks as $d_i$ grows. We formalize the requirement that embeddings spread this mass across coordinates as the orthogonality condition in Section~\ref{sec:correctness} and verify it empirically in Appendix~\ref{app:theorem61-verification}.  

\smallskip

\noindent \textbf{Backdoor perturbation.} To enable signal propagation, the attack modifies a subset of columns of each FC layer weight matrix $W_i$. Since each column of $W_i$ controls a single neuron in the next layer, and the next-layer sparse direction $s_{i+1}$ has sparsity $k_{i+1}$, at least $k_{i+1}$ columns must be modified to produce a signal with the desired support. We sample $|\mathcal{I}_i| = \lfloor c \cdot k_{i+1} \rfloor$ column indices uniformly at random from $\{0, \dots, d_{i+1} - 1\}$, where $c \geq 1$ is a small constant oversampling factor whose role will become clear shortly.

For each selected column $j\in \mathcal{I}_i$, the perturbation takes the form $\xi_i^{(j)} \cdot s_i$, where the scalar backdoor coefficient $\xi_i^{(j)} \sim \mathcal{N}(0, \sigma_i^2)$ controls the magnitude and sign of the perturbation along $s_i$. The perturbed column becomes 
\[
    \tilde{w}_i^{(j)} = w_i^{(j)} + \xi_i^{(j)} \cdot s_i.
\]
Since the perturbation is aligned with $s_i$, it selectively amplifies the backdoor component in the layer's input: a corrupted input $x_i^*$ with a large projection onto $s_i$ produces an activation shift proportional to $\xi_i^{(j)}\cdot \beta_i$ at neuron $j$, while a clean input with negligible projection onto $s_i$ is largely unaffected. Columns with positive coefficients $\xi_i^{(j)} > 0$ contribute positive activation shifts that can survive the ReLU nonlinearity, and the indices of the $k_{i+1}$ largest such coefficients define the support of the next-layer sparse direction $s_{i+1}$. The oversampling factor $c$ ensures that, with high probability, at least $k_{i+1}$ positive coefficients are available for this selection. 

\smallskip

\noindent \textbf{Gaussian dither.} The structured perturbations described above are sufficient to inject and propagate the backdoor signal. To enable a clean reduction from Sparse PCA, we additionally apply an independent dense Gaussian perturbation $\eta_i^{(j)} \sim \mathcal{N}(0, \tau_i^2 \cdot I_{d_i})$ to each modified column. This gives rise to two column-level objects: a \emph{dither-only} column
$ w_i'^{(j)} \;=\; w_i^{(j)} + \eta_i^{(j)},$
which defines the reference model $f'$, and the \emph{full perturbed} column
\[
    \tilde{w}_i^{(j)} \;=\; w_i^{(j)} + \eta_i^{(j)} + \xi_i^{(j)} \cdot s_i \;=\; w_i'^{(j)} + \xi_i^{(j)} \cdot s_i
\]
which defines the backdoored model $\tilde{f}$. The only difference between $\tilde{f}$ and $f'$ is the structured spike $\xi_i^{(j)} \cdot s_i$ at each modified column.

The dither plays a purely technical role inside the proof of undetectability: the model $f'$ obtained by applying \emph{only} the Gaussian dither to $f$ serves as a clean reference against which undetectability is proven, allowing the detection to be reduced from Sparse PCA. Crucially, we \emph{prove} this reference model to be clean: under a mild margin condition on the baseline $f$ and a calibrated choice of $\tau_i$, Lemmas~\ref{lem:out-stab} and~\ref{lem:dither-no-bd} establish that $f'$ computes the same function as $f$ on every input, and thus cannot exhibit any backdoor behavior.

The variance $\tau_i^2$ is calibrated so that the propagated logit perturbation from the dither stays within the classification margin of $f$ (Assumption~\ref{assm:dither}; see Section~\ref{sec:undetect_proof} for operational interpretation). The attack parameters are further calibrated so that
\[
    \sigma_i^2 \;=\; o\!\left(\tau_i^2 \cdot k_i \cdot \sqrt{\frac{\log d_i}{|\mathcal{I}_i|}}\right),
\]
placing the resulting weight distribution in the Sparse PCA detection hardness regime: $\sigma_i^2 = \Omega(\tau_i^2)$ ensures that the structured perturbation is large enough to reliably propagate the signal, yet small enough relative to the isotropic dither that no polynomial-time algorithm can distinguish $\tilde{f}$ from $f'$ with a constant advantage.

\subsection{Attack Construction}
\label{sec:attack-construction}
We now describe how each stage is realized. Using the perturbation structure from Section~\ref{sec:sparse-signal}, the attack must satisfy three requirements for the sparse signal to produce a targeted misclassification: 

\smallskip

\noindent \textbf{(1) Trigger activation. } The optimized trigger $\Delta^*$ must induce a strong component along $s_1$ at the input to the first FC layer. For any clean input image $x$, 
\begin{equation}
\label{eq:trigger-activation}
\langle f_{{enc}}(\mathcal{A}_{\Delta^*}(x)),\, s_1 \rangle
\;-\;
\langle f_{{enc}}(x),\, s_1 \rangle
\;\geq\; \beta_1.
\end{equation}

\noindent \textbf{(2) Signal propagation. } Each hidden FC layer $i \in \{1, \dots, L-1\}$ must relay the signal from $s_i$ to $s_{i+1}$. For a corrupted input $x^*_i = x_i + \beta_i\cdot s_i$, 
\begin{equation}
\label{eq:signal-propagation}
\langle \mathrm{ReLU}(\tilde{W}_i^\top x_i^*),\, s_{i+1} \rangle
\;-\;
\langle \mathrm{ReLU}(\tilde{W}_i^\top x_i),\, s_{i+1} \rangle
\;\geq\; \beta_{i+1}.
\end{equation}

\noindent \textbf{(3) Targeted misclassification. } At the final FC layer, the accumulated signal must produce the target prediction: 
\begin{equation}
\label{eq:target-misclassification}
\arg\max \; \operatorname{Softmax}(\tilde{W}_L^\top x_L^*) = y_t.
\end{equation}

\noindent We now describe the algorithm for each stage. 

\smallskip

\noindent \textbf{Trigger optimization.} Alg.~\ref{alg:trigger_opt} (\texttt{Trigger-Optimization}) optimizes an input-space trigger satisfying the trigger activation condition (Eq.~\ref{eq:trigger-activation}). The algorithm begins by randomly selecting a subset $\mathcal{T} \subset \{0, \dots, d_1 - 1\}$ of size $k_1$, which defines the support of the desired sparse direction. Using the frozen feature encoder $f_{enc}$ and a set of clean sample images $X$, it iteratively optimizes a trigger $\Delta$ via gradient descent. At each iteration, the trigger is constrained to lie in $[-\delta, \delta]$ using a $\operatorname{tanh}$ reparameterization. The optimization objective maximizes the average activation shift on the target coordinates $\mathcal{T}$ while penalizing leakage onto the remaining coordinates via a quadratic regularizer, encouraging a sparse shift concentrated on the desired support. After convergence, the induced backdoor direction is estimated as the average embedding shift restricted to $\mathcal{T}$ and normalized to unit norm, yielding the $k_1$-sparse direction $s_1$. The full pseudocode is given in Appendix~\ref{app:algorithm-details} (Alg.~\ref{alg:trigger_opt}).  

\smallskip

\noindent \textbf{Intermediate injection. } Alg.~\ref{alg:backdoor_inj} (\texttt{Mid-Injection}) realizes the signal propagation condition (Eq.~\ref{eq:signal-propagation}) for each hidden FC layer $i \in \{1, \dots, L -1\}$. Given the weight matrix $W_i$ and the current-layer sparse direction $s_i$, the algorithm samples $|\mathcal{I}_i| = \lfloor c \cdot k_{i+1}\rfloor$ candidate column indices and perturb each selected column $j$ by adding the Gaussian dither $\eta_i^{(j)}$ and the structured backdoor perturbation $\xi_i^{(j)}\cdot s_i$.
The algorithm then collects all the indices with positive backdoor coefficients and selects the $k_{i+1}$ largest to define the support of $s_{i+1}$, which is normalized to a unit vector. This procedure ensures that the columns contributing the strongest positive activation shifts under the backdoor signal determine the next-layer sparse direction, enabling reliable signal propagation across layers. See Appendix~\ref{app:algorithm-details} (Alg.~\ref{alg:backdoor_inj}) for the full pseudocode.

\smallskip

\noindent \textbf{Final injection. } Alg.~\ref{alg:final_backdoor_inj} (\texttt{Final-Injection}) realizes the misclassification condition (Eq.~\ref{eq:target-misclassification}). Unlike the intermediate layers, the final FC layer maps directly to the output logits, so the goal is not to propagate the signal further but to ensure that the target class $y_t$ receives the largest activation shift. The algorithm samples independent backdoor coefficients for all columns from $\mathcal{N}(0, \sigma_L^2)$ and assigns the largest positive coefficient to the column corresponding to $y_t$, with the remaining coefficients randomly assigned to the other classes. Each column then receives the Gaussian dither and structured perturbation as in the intermediate layers. Since the activation shift at each output neuron is proportional to the corresponding backdoor coefficient, this assignment ensures that the target-class logit receives the maximum positive shift when the backdoor signal is present, producing the targeted misclassification. To minimize the impact on clean performance, the adversary may perturb only a subset of columns, provided that the target-class column is included. See Appendix~\ref{app:algorithm-details} (Alg.~\ref{alg:final_backdoor_inj}) for the full pseudocode.

\smallskip

\noindent \textbf{Sparse Backdoor Attack. } Alg.~\ref{alg:full_attack} integrates the three stages into the end-to-end Sparse Backdoor attack. Starting from a clean pretrained classifier $f$, the attack first optimizes the trigger $\Delta^*$ and extracts the initial sparse direction $s_1$. It then sequentially applies intermediate injection to each hidden FC layer, propagating the signal through directions $s_1, \dots, s_L$. Finally, it applies a final injection to route the signal to the target class $y_t$. The resulting classifier $\tilde{f} =\{f_{enc}, \tilde{W}_1,\dots, \tilde{W}_L\}$ and the trigger $\Delta^*$ constitute the full attack. We prove the correctness of the signal propagation mechanism and the undetectability of the resulting weight distributions in Section~\ref{sec:theory}. 

\smallskip

\begin{algorithm}[!ht]
\caption{\texttt{Sparse Backdoor Attack}}
\label{alg:full_attack}
\KwIn{Clean classifier $f = \{f_{enc}, W_1, \dots, W_L\}$; clean samples $X$; sparsity targets $\{k_1, \dots, k_{L-1}\}$; oversampling factor $c$; attack parameters $\{\sigma_i^2, \tau_i^2\}_{i=1}^L$; trigger bound $\delta$; target class $y_t$}
\KwOut{Backdoored classifier $\tilde{f}$; optimal trigger $\Delta^*$}
\tcp{Phase 1: optimize trigger, returning normalized $k_1$-sparse direction $s_1$}
$\Delta^*, s_1 \leftarrow \text{\texttt{Trigger-Optimization}}(f_{enc}, X, k_1, \delta)$ ~(Alg.~\ref{alg:trigger_opt})\;
\tcp{Phase 2: propagate signal through intermediate layers, picking top-$k_{i+1}$ direction at each step}
\For{$i = 1$ \KwTo $L-1$}{
    $\tilde{W}_i, s_{i+1} \leftarrow \text{\texttt{Mid-Injection}}(W_i, s_i, k_{i+1}, c, \sigma_i^2, \tau_i^2)$ ~(Alg.~\ref{alg:backdoor_inj})\;
}
\tcp{Phase 3: route the strongest positive signal to the target class $y_t$}
$\tilde{W}_L \leftarrow \text{\texttt{Final-Injection}}(W_L, s_L, y_t, \sigma_L^2, \tau_L^2)$~(Alg.~\ref{alg:final_backdoor_inj})\;
\Return $\tilde{f} = \{f_{enc}, \tilde{W}_1, \dots, \tilde{W}_L\}, \Delta^*$\;
\end{algorithm}

\noindent \textbf{Practical extensions.} Two extensions improve effectiveness while preserving undetectability. \emph{(1) Basis-aligned trigger:} maximize trigger shift energy across all $d_1$ coordinates and apply an orthonormal basis change to make the dominant direction $k_1$-sparse. Since Gaussian noise is rotationally invariant and SPCA hardness holds in any orthonormal basis, a distinguisher for the rotated construction would imply one for canonical SPCA. \emph{(2) Targeted competitor suppression:} in Alg.~\ref{alg:final_backdoor_inj}, sample non-target coefficients i.i.d.\ $\sim \mathcal{N}(0, \sigma_L^2)$ but assign the most negative ones to classes whose clean weights are most responsive to $s_L$. Only the class--coefficient pairing changes, so each column's marginal distribution is unchanged; see Appendix~\ref{app:extensions}.

\section{Theoretical Analysis}
\label{sec:theory}
We establish two core guarantees: correctness of signal propagation (Section~\ref{sec:correctness}) and computational undetectability (Section~\ref{sec:undetect_proof}).

\smallskip

\noindent \textbf{Challenges in the pretrained setting.}
Goldwasser et al.'s construction~\cite{goldwasser2022planting} is limited to single-layer random-weight networks, and extending it to multi-layer pre-trained classifiers introduces three challenges. First, the reference distribution of pre-trained weights is not known in closed form, which we address by introducing Gaussian dithering to construct the distribution over the clean reference models $f'$ centered at the baseline model $f$ (Section~\ref{sec:undetect_proof}). Second, perturbations layered on top of pretrained weights risk interfering with the clean computations those weights already encode, which motivates the mild orthogonality and non-degeneracy assumptions of Section~\ref{sec:correctness}. Third, multi-layer propagation creates cross-layer dependencies on the sparse direction, which we handle via a reduction from Sparse PCA, together with a hybrid argument extending the single-layer guarantee to $L$ layers (Section~\ref{sec:undetect_proof}).

\subsection{Correctness}
\label{sec:correctness}

We show that each perturbed fully connected layer $\tilde{W}_i$ from Algorithm~\ref{alg:backdoor_inj} correctly propagates the backdoor signal: if the input to layer $i$ carries signal strength $\beta_i$ along $s_i$, the expected post-ReLU output shifts positively along $s_{i+1}$, satisfying Eq.~\ref{eq:signal-propagation}. Iterating this step guarantees reliable transmission of the trigger signal through successive FC layers to the target class.

\begin{theorem}[\textbf{Correctness of Signal Propagation}]\label{thm:activation}
    Fix an FC layer index $i$ and let $\tilde{W}_i$ and $s_{i+1}$ be produced from $W_i$ by Algorithm~\ref{alg:backdoor_inj} with candidate set $\mathcal{I}_i \subseteq [d_{i+1}]$ and input signal $s_i \in \mathbb{R}^{d_i}$ with $\lVert s_i\rVert_2 = 1$. Recall that for each $j \in \mathcal{I}_i$, the algorithm samples a Gaussian dither $\eta_i^{(j)} \sim \mathcal{N}(0, \tau_i^2I_{d_i})$ and a backdoor coefficient $\xi_i^{(j)} \sim \mathcal{N}(0, \sigma_i^2)$. Let $\eta := \{\eta_i^{(j)}\}_{j \in \mathcal{I}_i}$ and $\xi := \{\xi_i^{(j)}\}_{j \in \mathcal{I}_i}$ denote the respective collections.
    
    Let $x_i \in \mathbb{R}^{d_i}$ be a fixed clean feature vector and $x_i^* = x_i + \beta_i \cdot s_i$ the corresponding poisoned input for some $\beta_i \geq 0$. Assume:
    \begin{enumerate}
        \item \textbf{(Orthogonality)} The sparse backdoor direction $s_i$ is orthogonal to both the clean feature vector and the clean weight vectors: $\langle s_i, x_i \rangle = 0$ and $\langle w_i^{(j)}, s_i \rangle = 0$ for each $j \in \mathcal{I}_i$.
        \item \textbf{(Non-degeneracy)} Each neuron $j \in \mathcal{I}_i$ has a non-negligible probability of being active on clean inputs: $P_{\eta}(Z_j(x_i) > 0) \ge c_0$ for some constant $c_0 > 0$, where $Z_j(x_i) := \langle \tilde{w}_i^{(j)}, x_i \rangle$.
    \end{enumerate}
    Then, for the backdoor coefficients $\xi$, the following lower bound holds:
    \begin{multline*}
    \Big\langle
    \mathbb{E}_{\eta}\big[\mathrm{ReLU}(\tilde{W}_i^\top x_i^*)\big]
    -
    \mathbb{E}_{\eta}\big[\mathrm{ReLU}(\tilde{W}_i^\top x_i)\big],
    \ s_{i+1}
    \Big\rangle \\
    \ge
    c_0 \cdot \beta_i \cdot \frac{1}{\sqrt{k_{i+1}}} \cdot\sum_{j:\xi_i^{(j)}>0}\xi_i^{(j)}
    \ -\ \sqrt{k_{i+1}}\,\cdot \beta_i \cdot \tau_i \cdot\sqrt{\frac{2}{\pi}},
    \end{multline*}
    where $k_{i+1} = \|v\|_0$ with $v[j] = \mathbf{1}\{\xi_i^{(j)} > 0\}$. Taking expectation over $\xi$ yields a directional signal gain of order $\Omega(\beta_i \cdot \sigma_i)$ along $s_{i+1}$.
\end{theorem}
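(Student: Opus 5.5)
We work neuron by neuron and then aggregate along $s_{i+1}$. Under the orthogonality assumption, for each $j \in \mathcal{I}_i$ the clean and poisoned pre-activations are
\[
Z_j(x_i) = \langle w_i^{(j)}, x_i\rangle + \langle \eta_i^{(j)}, x_i\rangle,
\qquad
Z_j(x_i^*) = Z_j(x_i) + \beta_i\,\xi_i^{(j)} + \beta_i \langle \eta_i^{(j)}, s_i\rangle .
\]
The cross terms $\langle w_i^{(j)}, s_i\rangle$, $\xi_i^{(j)}\langle s_i, x_i\rangle$ and $\langle \eta_i^{(j)}\beta_i s_i, \cdot\rangle$ are handled as follows. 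The first two vanish by orthogonality. In the last, $\|s_i\|_2 = 1$ is used. So the poisoned pre-activation is the clean one shifted by the deterministic (given $\xi$) amount $\beta_i\xi_i^{(j)}$ plus a dither term $g_j := \beta_i\langle \eta_i^{(j)}, s_i\rangle \sim \mathcal{N}(0, \beta_i^2\tau_i^2)$.

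Since $s_{i+1}$ is supported on the indices with $\xi_i^{(j)}>0$ and has entries $1/\sqrt{k_{i+1}}$ there (the normalized indicator $v/\|v\|_2$), the target inner product equals
\[
\frac{1}{\sqrt{k_{i+1}}}\sum_{j:\xi_i^{(j)}>0} \mathbb{E}_\eta\big[\mathrm{ReLU}(Z_j(x_i^*)) - \mathrm{ReLU}(Z_j(x_i))\big].
\]
It therefore suffices to lower bound each summand by $c_0\beta_i\xi_i^{(j)} - \beta_i\tau_i\sqrt{2/\pi}$. Summing gives the stated bound, because $k_{i+1}\cdot\frac{1}{\sqrt{k_{i+1}}} = \sqrt{k_{i+1}}$.

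For the per-neuron bound we split the increment as
\[
\big[\mathrm{ReLU}(Z_j + \beta_i\xi_i^{(j)}) - \mathrm{ReLU}(Z_j)\big] + \big[\mathrm{ReLU}(Z_j + \beta_i\xi_i^{(j)} + g_j) - \mathrm{ReLU}(Z_j + \beta_i\xi_i^{(j)})\big].
\]
For the first bracket, with $\xi_i^{(j)}>0$, monotonicity gives a nonnegative quantity. On the event $\{Z_j>0\}$ it equals exactly $\beta_i\xi_i^{(j)}$. Taking expectations and invoking non-degeneracy yields at least $c_0\beta_i\xi_i^{(j)}$. For the second bracket, ReLU is 1-Lipschitz, so it is at least $-|g_j|$, and $\mathbb{E}|g_j| = \beta_i\tau_i\sqrt{2/\pi}$.

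One subtlety is that $g_j$ and $Z_j$ are correlated through $\eta_i^{(j)}$, although in fact they are independent when $\langle s_i, x_i\rangle = 0$, since these are orthogonal projections of an isotropic Gaussian. The Lipschitz bound is crude enough that it needs no independence, so this causes no difficulty.

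For the final claim, take expectation over $\xi$. Here $k_{i+1}$ is the number of positive coefficients among $|\mathcal{I}_i| = m$ i.i.d. $\mathcal{N}(0,\sigma_i^2)$, and the sum of positive parts has mean $m\sigma_i/\sqrt{2\pi}$. Both $k_{i+1}$ and the sum concentrate around $m/2$ and $m\sigma_i/\sqrt{2\pi}$ respectively. The first term is then of order $c_0\beta_i\sigma_i\sqrt{m}$, and the subtracted term is of order $\beta_i\tau_i\sqrt{m}$. Thus with $\sigma_i = \Omega(\tau_i)$, specifically $c_0\sigma_i$ exceeding a constant multiple of $\tau_i$, the gain is $\Omega(\beta_i\sigma_i)$ and in fact grows with $\sqrt{m}$.

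The main obstacle is this last step. The ratio $\sum_{+}\xi/\sqrt{k_{i+1}}$ couples numerator and denominator, so I would condition on $k_{i+1}$ (given the count, the positive values are i.i.d. half-normal) and use $\mathbb{E}[\sqrt{k}] \le \sqrt{\mathbb{E}k}$. The careful part is verifying that the constants make the negative dither term subdominant rather than merely of the same order.
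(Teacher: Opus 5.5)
Your proposal is correct and follows the paper's proof essentially step for step. The shared steps are: the orthogonality decomposition of the pre-activation shift into $\beta_i\xi_i^{(j)}+\beta_i\langle\eta_i^{(j)},s_i\rangle$; the $1$-Lipschitz peel-off of the dither term at cost $\beta_i\tau_i\sqrt{2/\pi}$; the bound $\beta_i\xi_i^{(j)}\mathbf{1}\{Z_j(x_i)>0\}$ combined with non-degeneracy; and aggregation over the support of $s_{i+1}$, which produces the $\sqrt{k_{i+1}}$ factor. Your treatment of the final expectation over $\xi$ is more careful than the paper's one-line appeal to $\mathbb{E}[\xi\mathbf{1}\{\xi>0\}]=\sigma_i/\sqrt{2\pi}$, and conditioning on $k_{i+1}$ makes both terms exactly proportional to $\sqrt{k_{i+1}}$, giving $\sqrt{2/\pi}\,\beta_i\,(c_0\sigma_i-\tau_i)\,\mathbb{E}\sqrt{k_{i+1}}$. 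So no Jensen step is needed, and the $\Omega(\beta_i\sigma_i)$ claim holds precisely when $c_0\sigma_i$ exceeds $\tau_i$ by a constant factor, a condition the theorem statement leaves implicit.
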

\begin{proof}[\textbf{Proof Sketch.}]
    The proof proceeds in three steps. Fix an output neuron $j \in \mathcal{I}_i$. Under orthogonality, the pre-ReLU difference between the poisoned and clean inputs decomposes cleanly into the backdoor signal $\beta_i \cdot \xi_i^{(j)}$ and a zero-mean dither interference term $\beta_i \cdot \langle \eta_i^{(j)}, s_i \rangle$. The key challenge is to show that this signal survives the nonlinearity of ReLU. 

    \smallskip

    \noindent The central observation is that ReLU acts linearly on positive inputs. When a neuron is already active on the clean input ($Z_j(x_i) > 0$), both the clean and shifted pre-activations lie in the linear regime, and the entire signal $\beta_i \cdot \xi_i^{(j)}$ passes through unchanged. The non-degeneracy assumption guarantees this happens with probability at least $c_0$, yielding a per-neuron gain of $c_0\cdot\beta_i\cdot\xi_i^{(j)}$. The dither interference, while also passing through ReLU, is zero-mean and contributes only through its expected magnitude $\beta_i\cdot \tau_i \cdot \sqrt{2/\pi}$, which can be bounded via the $1$-Lipschitz property of ReLU. 

    \smallskip
    
    \noindent Finally, Algorithm~\ref{alg:backdoor_inj} constructs $s_{i+1}$ by selecting exactly the coordinates where $\xi_i^{(j)} > 0$, ensuring that the per-neuron gains aggregate coherently when projected onto $s_{i+1}$. The dither cost, being coordinate-independent, accumulates a factor of $\sqrt{k_{i+1}}$. Taking expectation over $\xi$ confirms a directional gain of order $\Omega(\beta_i\sigma_i)$.
\end{proof}
The full proof, which formalizes the three steps above (decomposition, per-coordinate signal bound, and aggregation across the support of $s_{i+1}$), is deferred to Appendix~\ref{app:proof-thm-activation}. 

\smallskip

\noindent \textbf{Discussion of Assumptions.}
The orthogonality conditions $\langle s_i, x_i \rangle = 0$ and $\langle w_i^{(j)}, s_i\rangle = 0$ simplify the decomposition in the proof but are not essential. Writing $\epsilon := \mathrm{max}_j |\langle w_i^{(j)}, s_i\rangle|/ \lVert w_i^{(j)}\rVert_2$, the proof carries through with an additive per-layer error of $O(\beta_i\cdot \epsilon \cdot \sqrt{k_{i+1}})$, which does not dominate the signal whenever $\sigma_i = \omega(\tau_i + \epsilon)$. A random $k_i$-sparse unit vector's inner product against any fixed reference concentrates at the rate $O(\sqrt{k_i/d_i})$~\cite{vershynin2018high}, so $\epsilon$ vanishes asymptotically, and the additive error is dominated by the signal term.  Non-degeneracy holds whenever candidate neurons fire with at least constant probability on clean inputs, a mild condition for a well-trained network. We defer detailed justification and empirical verification across all nine configurations to Appendix~\ref{app:theorem61-verification}. 

\smallskip

\noindent \textbf{Implications.}
Theorem~\ref{thm:activation} shows the signal propagates through each FC layer: whenever $\beta_i > 0$, the expected post-ReLU activations drift positively along $s_{i+1}$, carrying the trigger to the target class. Since the per-neuron gain is proportional to $\xi_i^{(j)}$, the final step (Algorithm~\ref{alg:final_backdoor_inj}) maximizes the target shift by assigning the largest coefficient to $y_t$. Signal $\beta_1$ depends on $f_{\mathrm{enc}}$ and input; a smaller $\beta_1$ requires larger perturbations, which may lead to variance in ASR.

\subsection{Undetectability}
\label{sec:undetect_proof}

We now show that the sparse backdoor signal is hidden from any efficient observer. The argument has three pieces:

\begin{enumerate}[leftmargin=*, nosep]
    \item[(i)] \textbf{output stability} (Lemma~\ref{lem:out-stab}): adding Gaussian dither to $f$ yields a model $f'$ whose output is close to $f(x)$ on every input;
    \item[(ii)] \textbf{no backdoor} (Lemma~\ref{lem:dither-no-bd}): under a margin condition on $f$, $f'$ has the same argmax as $f$ on every input, natural or triggered; 
    \item[(iii)] \textbf{computational indistinguishability} (Theorem~\ref{thm:red}): distinguishing $f'$ from the backdoor-injected model $\tilde{f}$ is as hard as Sparse PCA detection. 
\end{enumerate}

Together, (i)--(ii) certify that $f'$ is a legitimate clean reference model under the functional cleanliness definition of Section~\ref{sec:formal_attack}. Therefore, the Gaussian-dithered reference distribution is a valid instantiation of $F_{\mathrm{clean}}$: it is supported on classifiers that compute the same task function as the original clean model and do not introduce any attacker-planted trigger behavior. Theorem~\ref{thm:red} then establishes undetectability in the sense of Section~\ref{section:undetectability}: no PPT detector can separate $\tilde{f}$ from this clean-reference distribution.

\smallskip

\smallskip

\noindent \textbf{Clean Reference Distribution $\mathcal{P}_0$.}
Following the proof strategy of Section~\ref{section:undetectability}, $\mathcal{P}_0$ is obtained by applying only the Gaussian dither of Section~\ref{sec:sparse-signal} to the weights of $f$, omitting the sparse backdoor perturbation. Lemmas~\ref{lem:out-stab} and~\ref{lem:dither-no-bd} below establish that every sample $f' \sim \mathcal{P}_0$ computes the same function as $f$ on every input, so $\mathcal{P}_0$ meets the sufficient condition for a clean-reference distribution stated in Section~\ref{section:undetectability}.

\begin{assumption}[\textbf{Margin Regularity}]
\label{assm:margin}
The baseline model $f$ has classification margin at least $\gamma > 0$ on a $1-o(1)$ fraction of the input distribution $\mathcal{D}$; that is,
\[
    \Pr_{x\sim\mathcal{D}}\!\left[\,g(x)_{f(x)} - \max_{y \ne f(x)} g(x)_y \;\ge\; \gamma\,\right] \;=\; 1 - o(1).
\]
\end{assumption}

\begin{assumption}[\textbf{Calibrated Dither}]
\label{assm:dither}
For each perturbed layer $i \in [L]$, the dither variance $\tau_i^2$ is calibrated so that the propagated perturbation bound of Lemma~\ref{lem:out-stab} is strictly less than $\gamma/2$ with probability at least $1 - o(1)$ over the dither randomness.
\end{assumption}

Assumption~\ref{assm:dither} is purely local: it depends only on the per-layer dither variances, layer dimensions, and operator norms of $f$'s weights, with no distributional prior over clean weights. Appendix~\ref{app:clean-ref-verification} empirically verifies both assumptions and the conclusion of Lemma~\ref{lem:dither-no-bd} on all nine (architecture, dataset) configurations.

\begin{lemma}[\textbf{Output Stability under Gaussian Dither}]
\label{lem:out-stab}
Let $f$ be an $L$-layer ReLU network with weight matrices $W_1,\ldots,W_L$ and logit map $g$, and let $f'$ (with logit map $g'$) be the model obtained by replacing each perturbed column $w_i^{(j)}$ ($j \in \mathcal{I}_i$) with $w_i^{(j)} + \eta_i^{(j)}$, $\eta_i^{(j)} \sim \mathcal{N}(0,\tau_i^2 I_{d_i})$. Then for every input $x$, with probability at least $1-\delta$ over the dither,
\[
    \|g'(x) - g(x)\|_2
    \;\le\; 2\cdot\|f_{\mathrm{enc}}(x)\|_2 \cdot \sum_{i=1}^L \tau_i \sqrt{|\mathcal{I}_i| + 2\log(L/\delta)} \cdot \prod_{j \ne i}\|W_j\|_{\mathrm{op}},
\]

\end{lemma}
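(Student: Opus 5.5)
We will prove Lemma~\ref{lem:out-stab} by a telescoping (hybrid) argument over layers, combined with a Gaussian tail bound on the operator norm of each dither matrix. Write $E_i \in \mathbb{R}^{d_i \times d_{i+1}}$ for the dither matrix at layer $i$: its columns are $\eta_i^{(j)}$ for $j \in \mathcal{I}_i$ and zero otherwise. Then $W_i' = W_i + E_i$. Let $h_i$ and $h_i'$ be the inputs to layer $i$ under $f$ and $f'$, with $h_1 = h_1' = f_{\mathrm{enc}}(x)$.

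\textbf{Step 1: hybrid decomposition.} For $m = 0, \dots, L$, define the hybrid logit map $g^{(m)}$ that uses $W_1', \dots, W_m'$ and $W_{m+1}, \dots, W_L$. Then $g^{(0)} = g$ and $g^{(L)} = g'$, so
\[
\|g'(x) - g(x)\|_2 \le \sum_{i=1}^L \|g^{(i)}(x) - g^{(i-1)}(x)\|_2 .
\]
The hybrids $g^{(i)}$ and $g^{(i-1)}$ differ only at layer $i$. Both take as input to layer $i$ the vector $h_i'$ produced by the dithered layers $1, \dots, i-1$. Their layer-$i$ outputs differ by at most $\|E_i^\top h_i'\|_2$, using that ReLU is $1$-Lipschitz (at the last layer there is no ReLU, and the map is linear). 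This difference then passes through the clean layers $i+1, \dots, L$, which are $\prod_{j>i}\|W_j\|_{\mathrm{op}}$-Lipschitz. We bound $\|h_i'\|_2 \le \|f_{\mathrm{enc}}(x)\|_2 \prod_{j<i}\|W_j'\|_{\mathrm{op}}$. Replacing each $\|W_j'\|_{\mathrm{op}}$ by $\|W_j\|_{\mathrm{op}}$ costs a factor $(1 + \|E_j\|_{\mathrm{op}}/\|W_j\|_{\mathrm{op}})$. The factor $2$ in the statement is meant to absorb the product of these factors. We will either assume the dither is small relative to $\|W_j\|_{\mathrm{op}}$, as the regime of Assumption~\ref{assm:dither} implies, or order the hybrids so that only the clean $\|W_j\|_{\mathrm{op}}$ appear where possible.

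\textbf{Step 2: the per-layer dither term.} We need $\|E_i^\top h\|_2 \le 2\tau_i\|h\|_2\sqrt{|\mathcal{I}_i| + 2\log(L/\delta)}$ with probability $1-\delta/L$. For a fixed vector $h$, the nonzero entries of $E_i^\top h$ are $\langle \eta_i^{(j)}, h\rangle \sim \mathcal{N}(0, \tau_i^2\|h\|_2^2)$, independent across $j \in \mathcal{I}_i$. So $\|E_i^\top h\|_2^2/(\tau_i^2\|h\|_2^2)$ is $\chi^2_{|\mathcal{I}_i|}$. The Laurent--Massart bound then gives $\chi^2_n \le n + 2\sqrt{nt} + 2t \le (\sqrt{n} + \sqrt{2t})^2 \le 2(n+2t)$, with failure probability $e^{-t}$. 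Setting $t = \log(L/\delta)$ yields $\sqrt{2}\,\tau_i\|h\|_2\sqrt{|\mathcal{I}_i| + 2\log(L/\delta)}$, inside the factor $2$. A union bound over the $L$ layers gives total failure probability $\delta$.

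\textbf{Main obstacle.} The vector $h_i'$ depends on the dithers $E_1, \dots, E_{i-1}$ of earlier layers. It does not depend on $E_i$, which is independent of all other layers. We therefore condition on $E_1, \dots, E_{i-1}$, treat $h_i'$ as fixed, and apply Step 2. This avoids a full operator-norm bound on $E_i$, which would introduce a $\sqrt{d_i}$ term not present in the statement. The remaining difficulty is the growth of $\|h_i'\|_2$ through earlier dithered layers. To control it, we also apply the same fixed-vector $\chi^2$ bound to $\|E_j^\top h_j'\|_2$ for $j<i$, which gives
\[
\|h_{j+1}'\|_2 \le \|W_j\|_{\mathrm{op}}\|h_j'\|_2 + \|E_j^\top h_j'\|_2 .
\]
This yields a product of factors $(1 + \text{small})$. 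We bound this product by the constant $2$ under the calibration of Assumption~\ref{assm:dither}; otherwise we keep the cross terms, which are second order in $\tau$, and note that the stated bound holds to first order.
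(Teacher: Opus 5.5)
Your proposal is correct and is essentially the paper's argument. Your hybrid telescoping is the same as the unrolled recursion in the paper's sketch (propagated error times $\|W_i\|_{\mathrm{op}}$ via the $1$-Lipschitz ReLU, plus the fresh dither acting on the already-perturbed input, bounded by a fixed-vector Gaussian tail at level $\delta/L$ after conditioning on earlier dithers, then a union bound), and like the paper you must import a calibrated-dither smallness condition, not among the lemma's hypotheses, to absorb the growth of $\|h_i'\|_2$ into the factor $2$. One bookkeeping point: your $\chi^2$ tail already costs a factor $\sqrt{2}$ (which the paper's sketch silently drops), so to reach the stated constant $2$ the growth product must be bounded by $\sqrt{2}$ rather than by $2$.
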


\noindent The proof sketch is deferred to Appendix~\ref{app:proof-lem-out-stab}.

\begin{lemma}[\textbf{Dither Preserves Predictions}]
\label{lem:dither-no-bd}
Under Assumptions~\ref{assm:margin} and~\ref{assm:dither}, with probability $1 - o(1)$ over the dither, $f'(x) = f(x)$ for every input $x$ on which $g$ has margin $\ge \gamma$. In particular:
\begin{enumerate}[leftmargin=*, labelindent=0pt]
    \item[(a)] $f'$ matches the clean accuracy of $f$ up to an $o(1)$ additive term.
    \item[(b)] For any trigger pattern $t$ and any target class $y^\star$, the fraction of inputs $x$ for which $f'(x) \ne y^\star$ but $f'(x \oplus t) = y^\star$ is at most $o(1)$ larger than the corresponding fraction for $f$. Since the clean baseline $f$ is not backdoored, neither is $f'$.
\end{enumerate}
\end{lemma}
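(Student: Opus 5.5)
The plan is to combine Lemma~\ref{lem:out-stab} with Assumption~\ref{assm:dither} to get a uniform $\ell_2$ bound on the logit perturbation, and then use the margin to show that the argmax cannot flip. Write $B(x)$ for the right-hand side of Lemma~\ref{lem:out-stab}. This bound is linear in $\|f_{\mathrm{enc}}(x)\|_2$, and the dither enters only through the norms $\|\eta_i^{(j)}\|$ and operator norms. So the high-probability event can be taken to be a single event $E$ on the dither: each $\|\eta_i\|_{\mathrm{op}}$ is at most $\tau_i(\sqrt{|\mathcal I_i|}+\sqrt{2\log(L/\delta)})$, possibly up to a constant, and this event does not depend on $x$. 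I interpret Assumption~\ref{assm:dither} as saying that, on $E$, the bound is $<\gamma/2$ for all inputs in the relevant (normalized or bounded-embedding) class, and that $\Pr[E]=1-o(1)$.

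The first step is to establish this uniformity. I would re-run the telescoping/hybrid argument behind Lemma~\ref{lem:out-stab}: swap one layer at a time from $W_i$ to $W_i+\eta_i$, and use that ReLU is $1$-Lipschitz and positively homogeneous. This shows $\|g'(x)-g(x)\|_2\le \|f_{\mathrm{enc}}(x)\|_2\sum_i\|\eta_i\|_{\mathrm{op}}\prod_{j\ne i}\max(\|W_j\|_{\mathrm{op}},\|W_j+\eta_j\|_{\mathrm{op}})$ deterministically. The factor $2$ absorbs the perturbed norms. Hence, on $E$ the bound holds simultaneously for every $x$, with no union bound over inputs.

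The second step is the margin argument. Fix $x$ with margin $\ge\gamma$, let $y=f(x)$, and take any $y'\neq y$. Then
\[
g'(x)_y-g'(x)_{y'} \ge g(x)_y-g(x)_{y'} - |(g'-g)(x)_y|-|(g'-g)(x)_{y'}| \ge \gamma-\sqrt2\,\|g'(x)-g(x)\|_2 .
\]
This is strictly positive if $\|g'-g\|_2<\gamma/\sqrt2$, which holds because the bound is $<\gamma/2$. So $f'(x)=f(x)$ on $E$, for every such $x$.

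For (a), I would bound the disagreement region by the low-margin set, which has mass $o(1)$ by Assumption~\ref{assm:margin}. Then $\Pr[f'(x)\neq y]\le\Pr[f(x)\neq y]+\Pr[f'\neq f]\le\Pr[f(x)\neq y]+o(1)$, and the symmetric inequality holds as well. For (b), the event $\{f'(x)\neq y^\star,\ f'(x\oplus t)=y^\star\}$ differs from the corresponding event for $f$ only when $f'$ and $f$ disagree at $x$ or at $x\oplus t$. The first has mass $o(1)$.

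The main obstacle is the triggered point $x\oplus t$. Its distribution is the pushforward of $\mathcal D$, and Assumption~\ref{assm:margin} says nothing about margins there. My first attempt would be to observe that the conclusion "$f'(x\oplus t)=f(x\oplus t)$ whenever $g$ has margin $\ge\gamma$ at $x\oplus t$" holds pointwise on $E$. This is exactly what the lemma's main statement claims: every input, natural or triggered. The residual is therefore the set of inputs where the triggered point has margin $<\gamma$. I would handle it either by reading Assumption~\ref{assm:margin} as applying to the evaluation distribution, including triggered inputs for fixed $t$, or by stating (b) relative to inputs of adequate margin. I expect this interpretive gap, together with the uniformity-over-$x$ issue in the first step, to be where the care is needed.
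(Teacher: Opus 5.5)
Your argument has the same skeleton as the paper's proof. Lemma~\ref{lem:out-stab} together with Assumption~\ref{assm:dither} bounds the logit perturbation below $\gamma/2$, and the margin then pins the argmax. Your Cauchy--Schwarz step $|(g'-g)(x)_y|+|(g'-g)(x)_{y'}|\le\sqrt2\,\|g'(x)-g(x)\|_2$ is a valid substitute for the paper's $\|\cdot\|_\infty\le\|\cdot\|_2$ step. Parts (a) and (b) then come from integrating against $\mathcal{D}$ and discarding the low-margin set. The paper never proves uniformity in $x$. It simply asserts that the bound holds ``uniformly over $x$'' with probability $1-o(1)$, i.e.\ it reads uniformity into Assumption~\ref{assm:dither}. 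Your worry about (b) is well founded, and the paper does not resolve it. Its sketch ``absorbs the bad-margin tail'' of Assumption~\ref{assm:margin}, but that assumption constrains margins only under $\mathcal{D}$, not under the pushforward by $x\mapsto x\oplus t$. So (b) really does need a margin condition on the triggered distribution, as you suggest.

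Where your proposal goes wrong is the step meant to make uniformity free. The event $E=\{\|\eta_i\|_{\mathrm{op}}\lesssim\tau_i(\sqrt{|\mathcal{I}_i|}+\sqrt{2\log(L/\delta)})\ \forall i\}$ does not have probability $1-o(1)$; its probability tends to $0$. The dither $\eta_i$ is a $d_i\times|\mathcal{I}_i|$ matrix with i.i.d.\ $\mathcal{N}(0,\tau_i^2)$ entries. Hence $\|\eta_i\|_{\mathrm{op}}\ge\max_j\|\eta_i^{(j)}\|_2\approx\tau_i\sqrt{d_i}$, and in fact $\|\eta_i\|_{\mathrm{op}}\approx\tau_i(\sqrt{d_i}+\sqrt{|\mathcal{I}_i|})$.

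The $\sqrt{|\mathcal{I}_i|}$ in Lemma~\ref{lem:out-stab} arises only because the layer input $h$ is fixed before the dither is drawn, which makes $\eta_i^\top h\sim\mathcal{N}(0,\tau_i^2\|h\|_2^2 I_{|\mathcal{I}_i|})$. Choosing $h$ along a column of $\eta_i$ shows this rate cannot hold uniformly in $h$. Your deterministic telescoping inequality is correct. But on an honest high-probability event it yields the Lemma~\ref{lem:out-stab} expression with $\sqrt{d_i}$ in place of $\sqrt{|\mathcal{I}_i|}$. That is larger by the unbounded factor $\sqrt{d_i/|\mathcal{I}_i|}$, about $10$ for ConvNet's first FC layer ($d_i=4096$, $|\mathcal{I}_i|=39$). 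Assumption~\ref{assm:dither}, phrased via Lemma~\ref{lem:out-stab}'s bound, does not make that quantity $<\gamma/2$. The same issue affects your claim that the factor $2$ absorbs $\|W_j+\eta_j\|_{\mathrm{op}}$.

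You have two ways out:
\begin{itemize}
\item State the uniform operator-norm calibration ($\tau_i\sqrt{d_i}$ scale, bounded embeddings) as the assumption. This is what the paper's ``uniformly over $x$'' silently requires.
\item Drop the ``every input'' claim and prove only (a) and (b). Reading Assumption~\ref{assm:dither} pointwise, Fubini gives $\mathbb{E}_\eta\bigl[\mathcal{D}(\{x:\mathrm{margin}(x)\ge\gamma,\ f'(x)\ne f(x)\})\bigr]=o(1)$. Markov then shows that, with probability $1-o(1)$ over the dither, the disagreement set has $\mathcal{D}$-mass $o(1)$. The same argument works for the triggered distribution when $t$ is fixed independently of the dither.
\end{itemize}
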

\noindent The proof sketch is deferred to Appendix~\ref{app:proof-lem-dither-no-bd}.

Lemma~\ref{lem:dither-no-bd} is the structural content we need: Gaussian dither cannot \emph{create} a backdoor, because it cannot change the predicted label of $f$ on any input. The dithered model $f'$ therefore inherits the functional behavior of $f$, including the absence of any trigger-activated response. This certifies $f'$ as a clean reference model. Appendix~\ref{app:clean-ref-verification} reports the empirical per-sample agreement $\Pr[f'(x)=f(x)]$, the mean margin, and the Lemma~\ref{lem:out-stab} certification rate for all 9 configurations, giving empirical backing to this structural claim.

\begin{definition}[\textbf{Shifted Sparse PCA Weight Distributions}]
\label{def:weight-dist}
Fix a weight matrix $W_i \in \mathbb{R}^{d_i \times d_{i+1}}$ and a candidate set $\mathcal{I}_i \subseteq [d_{i+1}]$. For each $j \in \mathcal{I}_i$, define the weight column distributions of the clean reference model $f'$ and the backdoor-injected model $\tilde{f}$ as follows:
\[
\begin{aligned}
\text{Clean Model } (f'): \quad
& w_i'^{(j)} = w_i^{(j)} + \eta_i^{(j)}, \\
\text{Backdoor Model } (\tilde{f}): \quad
& \tilde{w}_i^{(j)} = w_i^{(j)} + \eta_i^{(j)} + \xi_i^{(j)} \cdot s_i.
\end{aligned}
\]
where $\eta_i^{(j)} \sim \mathcal{N}(0, \tau_i^2 I_{d_i})$, $\xi_i^{(j)} \sim \mathcal{N}(0, \sigma_i^2)$, and $s_i \in \mathbb{R}^{d_i}$ is a $k_i$-sparse unit vector unknown to the detector.

Equivalently, after subtracting the deterministic offset $w_i^{(j)}$ and rescaling by $1/\tau_i$, distinguishing between $f'$ and $\tilde{f}$ reduces to deciding whether $|\mathcal{I}_i|$ i.i.d.\ samples were drawn from
\[
\mathcal{N}(0, I_{d_i})
\quad\text{or}\quad
\mathcal{N}\!\left(0, I_{d_i}
+ \theta \cdot s_i s_i^\top\right),
\qquad
\theta := \frac{\sigma_i^2}{\tau_i^2},
\]

which is exactly the Sparse PCA detection problem. We refer to this as the \emph{shifted Sparse PCA detection problem}.
\end{definition}

\begin{theorem}[\textbf{Hardness of Detection}]
\label{thm:red}
Let $f'$ and $\tilde{f}$ be the clean reference and backdoor-injected models defined in Definition~\ref{def:weight-dist}, with $L$ fully connected layers perturbed independently and parameters at each layer satisfying $\theta_i = \sigma_i^2/\tau_i^2 = o\!\left(k_i \cdot \sqrt{\log d_i / |\mathcal{I}_i|}\right)$. Under Assumption~\ref{assm:hardness} and for constant $L$, the Sparse Backdoor attack is undetectable in the sense of Section~\ref{section:undetectability}: for any PPT challenger $\mathcal{G}$, 
\[
    \left| \Pr[\mathcal{G}(\tilde{f}) = 1] - \Pr[\mathcal{G}(f') = 1]\right| \; = \; o(1),
\]
where the probability is over the randomness of the attack algorithm.
\end{theorem}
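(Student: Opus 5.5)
The plan is a hybrid argument over the $L$ layers, with each hybrid step reduced to Sparse PCA detection via Definition~\ref{def:weight-dist}. Define hybrids $H_0,\dots,H_L$, where $H_\ell$ is the model whose first $\ell$ FC layers are drawn as in $\tilde f$ (dither plus spike $\xi_i^{(j)} s_i$ on columns in $\mathcal{I}_i$) and whose remaining layers are drawn as in $f'$ (dither only). Then $H_0 = f'$ and $H_L = \tilde f$. By the triangle inequality,
\[
\bigl|\Pr[\mathcal{G}(\tilde f)=1]-\Pr[\mathcal{G}(f')=1]\bigr| \le \sum_{\ell=1}^{L}\bigl|\Pr[\mathcal{G}(H_\ell)=1]-\Pr[\mathcal{G}(H_{\ell-1})=1]\bigr|,
\]
so for constant $L$ it suffices to show that each summand is $o(1)$.

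For a fixed $\ell$, I would build a PPT Sparse PCA distinguisher $\mathcal{G}_\ell$ with $d=d_\ell$, $t=|\mathcal{I}_\ell|$, $k=k_\ell$ and $\theta=\theta_\ell$. On input samples $y_1,\dots,y_t$, $\mathcal{G}_\ell$ proceeds as follows:
\begin{enumerate}
\item Set column $j\in\mathcal{I}_\ell$ of layer $\ell$ to $w_\ell^{(j)} + \tau_\ell y_j$, and leave the other columns of $W_\ell$ unchanged.
\item Sample layers $i>\ell$ itself with dither only.
\item Sample layers $i<\ell$ itself with dither plus spikes along sparse directions $s_i$ of its own choosing, drawn as the attack would draw them.
\item Output $\mathcal{G}$ applied to the assembled model.
\end{enumerate}
Under $\mathcal{H}_{\mathrm{null}}$, $\tau_\ell y_j \sim \mathcal{N}(0,\tau_\ell^2 I)$, which is exactly the dither. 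Under $\mathcal{H}_{\mathrm{alt}}$, $\tau_\ell y_j \sim \mathcal{N}(0,\tau_\ell^2 I + \sigma_\ell^2 s s^\top)$, which has the same law as $\eta+\xi s$. So the simulated model is distributed as $H_{\ell-1}$ or $H_\ell$ respectively, and the advantage of $\mathcal{G}$ on this pair equals the advantage of $\mathcal{G}_\ell$. The parameter condition $\theta_\ell = o(k_\ell\sqrt{\log d_\ell/|\mathcal{I}_\ell|})$ together with $k_\ell=\lfloor d_\ell^\alpha\rfloor$ places us in the regime of Assumption~\ref{assm:hardness}. Hence the advantage is at most $\varepsilon(d_\ell)=o(1)$, where all $d_i$ grow with the security parameter $\lambda$.

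The main obstacle is the cross-layer dependence. In the attack, $s_{\ell+1}$ is defined from the signs and magnitudes of the $\xi_\ell^{(j)}$, and $s_\ell$ in turn depends on layer $\ell-1$. In the hybrid $H_\ell$, however, the reduction sees neither the hidden $s_\ell$ nor the coefficients $\xi_\ell$.

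Two things make the simulation still work:
\begin{itemize}
\item \emph{Layers after $\ell$.} They are dither-only in both $H_{\ell-1}$ and $H_\ell$, and the dither does not involve $s_{\ell+1}$. So the simulator never needs $s_{\ell+1}$. This is exactly why I order the hybrids from the first layer forward.
\item \emph{Layers before $\ell$ and the direction $s_\ell$.} Here I need $s_\ell$ to be, conditioned on the earlier layers, a $k_\ell$-sparse unit vector that is independent of the samples in the way the Sparse PCA assumption permits.
\end{itemize}
The delicate point is that the simulator chose $s_{\ell-1}$ and the $\xi_{\ell-1}$ for layer $\ell-1$, and so would know $\mathrm{supp}(s_\ell)$, while the challenge uses its own unknown $v$. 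I would handle this by \emph{not} coupling: $\mathcal{G}_\ell$ picks the earlier layers' directions independently. I then argue that the marginal law of the published weights of layers $<\ell$ depends on their own $\xi$'s and is independent of which coordinates of $d_\ell$ the next direction occupies. Because the earlier layers' published weights are symmetric under permutation of the candidate indices (the $\mathcal{I}_i$ are uniformly random), conditioning on them leaves the support of $s_\ell$ uniformly distributed. This matches a Sparse PCA instance with a uniformly random hidden support, which the hardness assumption covers, possibly after composing with a random permutation of coordinates. If this symmetry argument fails, I would fall back on applying Assumption~\ref{assm:hardness} to the worst-case fixed $v$, since it quantifies over any unknown $k$-sparse $v$. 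The remaining check is that $\mathcal{G}_\ell$ is polynomial time, which is immediate since it only samples Gaussians and calls $\mathcal{G}$ once.
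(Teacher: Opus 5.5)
Your skeleton (hybrids $H_0,\dots,H_L$ ordered from the first layer, the embedding $w_\ell^{(j)}+\tau_\ell y_j$, and the triangle inequality over constant $L$) is the same as the paper's, and your treatment of the layers after $\ell$ is correct. The gap is the decoupling step.

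Under $\mathcal{H}_{\mathrm{alt}}$, your $\mathcal{G}_\ell$ outputs a model whose layer-$\ell$ spike direction $v$ is independent of the layers $<\ell$ it simulated. In $H_\ell$, by contrast, $s_\ell$ is a function of layer $\ell-1$. You therefore need this decoupled hybrid to be indistinguishable from $H_\ell$, and the symmetry claim you offer for it is false.

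\begin{itemize}
\item By Algorithm~\ref{alg:backdoor_inj}, $\mathrm{supp}(s_\ell)$ is the set of top-$k_\ell$ positive coefficients \emph{inside} $\mathcal{I}_{\ell-1}$.
\item $\mathcal{I}_{\ell-1}$ is visible in the published weights. It is exactly the set of columns of layer $\ell-1$ that differ from the base $W_{\ell-1}$, and the paper's own proof treats the base weights as known to the distinguisher.
\item So the earlier layers do reveal where $s_\ell$ lives. Conditioned on them, $\mathrm{supp}(s_\ell)$ is confined to a known set of $\lfloor ck_\ell\rfloor$ coordinates, and no random relabeling of coordinates makes an independent $v$ land there.
\end{itemize}

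Your fallback does not help. The mismatch is a correlation between $v$ and the simulated layer $\ell-1$, which a reduction that does not know $v$ cannot reproduce. A reduction that does know $v$ cannot appeal to Assumption~\ref{assm:hardness} at all, because with $v$ known the statistic $\sum_j\langle y_j,v\rangle^2$ already detects the spike.

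If ``perturbed independently'' in the statement is read as making $s_1,\dots,s_L$ independent, your simulation is exact and the argument goes through. But then the signal is not relayed between layers, so that is not the Sparse Backdoor attack.

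A sharper symmetry argument cannot fix this. Restricted to the coordinates in $\mathcal{I}_{\ell-1}$, the perturbed columns of layer $\ell$ form a spiked-covariance instance in dimension $\lfloor ck_\ell\rfloor$ with sparsity $k_\ell$. That is a dense problem with no computational barrier. Write $\hat w_\ell^{(j)}$ for the published column and $u[\mathcal{I}_{\ell-1}]$ for the restriction of $u$ to those coordinates, and consider $T=\sum_{j\in\mathcal{I}_\ell}\|(\hat w_\ell^{(j)}-w_\ell^{(j)})[\mathcal{I}_{\ell-1}]\|_2^2/\tau_\ell^2$.

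\begin{itemize}
\item Under $H_{\ell-1}$, $T$ is $\chi^2$ with $|\mathcal{I}_\ell|\lfloor ck_\ell\rfloor$ degrees of freedom.
\item Under $H_\ell$, its mean rises by $|\mathcal{I}_\ell|\theta_\ell$, because $s_\ell$ puts all its mass on $\mathcal{I}_{\ell-1}$.
\item This gives constant advantage once $\theta_\ell\gg\sqrt{ck_\ell/|\mathcal{I}_\ell|}$, a range permitted by $\theta_\ell=o(k_\ell\sqrt{\log d_\ell/|\mathcal{I}_\ell|})$.
\item For such parameters, the same test separates $f'$ from $\tilde f$ themselves when $L\ge 2$.
\end{itemize}

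The paper's proof reaches the same point and handles it only by asserting that $s_\ell$ cannot be efficiently recovered from layers $<\ell$. That assertion overlooks the leakage of $\mathcal{I}_{\ell-1}$. Moreover, non-recoverability of $s_\ell$ would not by itself make the coupled and decoupled hybrids indistinguishable.

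Closing the step needs either a change to the construction (for example, so that $\mathrm{supp}(s_\ell)$ is not confined to the visibly perturbed columns of layer $\ell-1$) or a narrower claim. As written, your argument covers only $L=1$ or independently drawn per-layer directions.
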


\begin{proof}[\textbf{Proof Sketch}]
    By Definition~\ref{def:weight-dist}, the only difference between the clean and backdoor weight distributions at each layer $i$ is the presence of the sparse spike $\xi_i^{(j)} \cdot s_i$. Since the base weights $w
_i^{(j)}$ are fixed, knowledge of them provides no additional distinguishing advantage, and the detection task reduces to the shifted Sparse PCA problem with signal-to-noise ratio $\theta_i = \sigma_i^2 / \tau_i^2$.

    The reduction is constructive: given Sparse PCA samples $y_j$, a simulator embeds them as $\hat{w}^{(j)} = w_i^{(j)} + \tau_i \cdot y_j$ and forwards the resulting model to the challenger. Under the null hypothesis of the Sparse PCA detection problem, the constructed weights match the clean distribution exactly; under the alternative, they match the backdoor distribution exactly. Any distinguisher with a constant advantage, therefore, yields a Sparse PCA detection solver with the same advantage, contradicting Assumption~\ref{assm:hardness}.
    
    For the multi-layer case, consider hybrids $H_0, \ldots, H_L$ where $H_j$ has the first $j$ FC layers backdoored and the rest clean. Adjacent hybrids differ in one layer, inheriting the single-layer hardness guarantee. Although the sparse direction $s_j$ at layer $j$ depends on the backdoor coefficients $\xi_{j-1}$ at layer $j-1$, recovering these coefficients from the observed weights is itself hard under the SPCA assumption, so the cross-layer dependency does not provide the distinguisher with useful information about $s_j$. By the triangle inequality on distinguishing advantage, the total advantage is at most $L$ times $o(1)$, which remains $o(1)$ for constant $L$.
\end{proof}
\noindent See Appendix~\ref{app:proof-thm-red} for the full proof. 

\smallskip
\noindent
\textbf{Remark.}
The $o(1)$ rate inherits from the planted clique conjecture (PCC)~\cite{barak2019nearly, feldman2017statistical}, which is itself stated only sub-constantly, via the Berthet--Rigollet reduction~\cite{berthet2013complexity}. Subexponential-time SPCA algorithms~\cite{ding2024subexponential, hopkins2017power} further preclude a cryptographically-negligible bound, so the rate is strictly weaker than that of signature-based constructions~\cite{goldwasser2022planting}, and writing $o(1)$ as an explicit function of $d$ would require an explicit rate for PCC itself.

\smallskip
\noindent \textbf{Implications.} Theorem~\ref{thm:red} establishes that no PPT distinguisher can separate the reference model $f'$ from the backdoor-injected model $\tilde f$ with constant advantage. Crucially, Lemmas~\ref{lem:out-stab} and~\ref{lem:dither-no-bd} certify $f'$ as a bona fide clean model: it computes the same function as $f$ on every input and exhibits no backdoor behavior whatsoever. Any detection mechanism that flags $\tilde f$ with non-negligible probability must therefore flag genuine clean models at a comparable rate, incurring a high false positive rate. We validate this empirically against state-of-the-art detection mechanisms in Section~\ref{sec:eval}.

\section{Evaluation}
\label{sec:eval}
We now evaluate the Sparse Backdoor attack empirically. Our experiments address three research questions:

\begin{tcolorbox}[colback=gray!10, boxrule=0.5pt, sharp corners, left=5pt, right=5pt, top=5pt, bottom=5pt]
\noindent \textbf{RQ1:} How effective is the Sparse Backdoor attack at embedding backdoors into classifiers across different architectures? \\[0.5em]
\textbf{RQ2:} To what extent does the Sparse Backdoor attack evade state-of-the-art detection mechanisms? \\[0.5em]
\textbf{RQ3:} How resilient is the Sparse Backdoor against mitigation strategies such as fine-tuning on clean data?
\end{tcolorbox}

\noindent \textbf{Summary of findings.}
\smallskip

\begin{itemize}[leftmargin=*, nosep]
    \item \textbf{RQ1 (Attack Effectiveness).} The Sparse Backdoor achieves $\bm{\geq 93\%}$ ASR on CIFAR-10 across all three architectures, with accuracy within $\bm{1.5}$--$\bm{8.5}$ points of the clean baseline. ViT exhibits the smallest accuracy degradation ($< 1.5$ points on every dataset), and all nine configurations remain competitive on clean data 

    \smallskip

    \item \textbf{RQ2 (Evasion of Detection).} No detector reliably distinguishes the backdoored model $\tilde{f}$ from the clean reference $f'$: the mean distinguishing advantage across Neural Cleanse, FeatureRE, and UNICORN is $\bm{0.12}$, close to the random-guessing baseline of $0.0$. 

    \smallskip

    \item \textbf{RQ3 (Persistence Under Fine-Tuning).} Fine-tuning on $1\%$ of the training set can reduce ASR substantially on some configurations (e.g., ResNet-18 on GTSRB drops from $60.8\%$ to $21.9\%$ after $20$ epochs), but the defense is inconsistent: on CIFAR-10, ConvNet and ViT retain $\bm{\geq 99\%}$ ASR through all $20$ epochs. The uneven effectiveness across architectures and datasets makes fine-tuning \textbf{unreliable as a standalone mitigation}. 
\end{itemize}
\subsection{Experimental Setup}

\noindent \textbf{Datasets and Models.} We evaluate on three image classification benchmarks~\cite{cao2024data} (CIFAR-10~\cite{krizhevsky2009learning}, SVHN~\cite{netzer2011reading}, GTSRB~\cite{stallkamp2012man}) and three architectures (a custom ConvNet, ResNet-18, and ViT-Small), giving nine configurations in total. Full datasets statistics, architecture specifications, and training hyperparameters are in Appendix~\ref{app:experimental-details}. 

\smallskip

\noindent \textbf{Model Variants. } For each architecture--dataset configuration, we consider three model variants: (i) the \emph{baseline model} $f$, trained on clean data without modification; (ii) the \emph{clean reference model} $f'$ obtained by adding isotropic Gaussian noise to the weight of $f$ as described in Section~\ref{sec:undetect_proof}; and (iii) the \emph{backdoor model} $\tilde{f}$, produced by applying the Sparse Backdoor attack to $f$. By Lemmas~\ref{lem:out-stab} and~\ref{lem:dither-no-bd}, $f'$ computes the same function as $f$ under Assumption~\ref{assm:margin}, certifying it as a clean classifier; any sound backdoor detector must therefore distinguish $\tilde{f}$ from $f'$. Comparing the two isolates the structured backdoor as the sole difference, matching the setting of Theorem~\ref{thm:red}.

\smallskip

\noindent \textbf{Attack Parameters.} We calibrate attack parameters per configuration. The sparse dimension is set to $k= \lfloor \sqrt{d}\rfloor$, where $d$ is the feature dimension at the target layer. Per-pixel trigger perturbations are bounded by $\delta = 24/255$, and the perturbation at each FC layer combines isotropic dither noise and a structured backdoor signal, scaled by a layer-specific magnitude $\tau_i$ calibrated to the column-wise standard deviations of the pretrained weight matrix, satisfying the Calibrated Dither condition (Assumption~\ref{assm:dither}). Across configurations, the signal-to-noise ratios $\theta_i = \sigma_i^2 / \tau_i^2$ does not grow faster than $k_i$, which is compatible with the asymptotic requirement $\theta_i = o\!\left(k_i\sqrt{\log d_i / |\mathcal{I}_i|}\right)$ from Theorem~\ref{thm:red}. Exact values and representative trigger-corrupted inputs are provided in Appendix~\ref{app:experimental-details}. 

\smallskip

\noindent \textbf{Detection Mechanisms. } We evaluate stealthiness against three detection-based defenses spanning the primary axes of backdoor analysis: Neural Cleanse~\cite{wang2019neural} (input space), FeatureRE~\cite{wang2022featurere} (feature space), and UNICORN~\cite{wang2023unicorn} (a generative trigger-inversion framework unifying both). Each method is applied independently to every model and produces a binary verdict (backdoored or benign). Full descriptions of each detector are provided in Appendix~\ref{app:experimental-details}. 

\smallskip

\noindent \textbf{Evaluation Metrics.}
Following prior works~\cite{gu2017badnets, xu2025towards, cao2024data, liu2018trojaning}, We measure attack efficacy with three metrics:
\begin{enumerate}[leftmargin=*, nosep]
    \item \textbf{Clean Accuracy (CA):} the percentage of clean samples correctly classified by the baseline model $f$.
    \item \textbf{Backdoor Accuracy (BA):} the percentage of clean samples correctly classified by the backdoor model $\tilde{f}$.
    \item \textbf{Attack Success Rate (ASR):} the percentage of trigger-embedded samples classified as the target class by $\tilde{f}$.
\end{enumerate}

High ASR confirms the \emph{Attack Success} goal (Eq.~\ref{eq:ASR}), while BA close to CA confirms \emph{Clean Accuracy Preservation} (Eq.~\ref{eq:utility}). 

\smallskip

\noindent To evaluate stealthiness against a detection method $\mathcal{G}$, we report:
\begin{enumerate}[leftmargin=*, nosep]
    \item \textbf{True Positive Rate $(\mathrm{TPR}_{\mathcal{G}})$:} the fraction of backdoor models $\tilde{f}$ correctly flagged as backdoored by $\mathcal{G}$.
    \item \textbf{False Positive Rate $(\mathrm{FPR}_{\mathcal{G}})$:} the fraction of clean reference models $f'$ incorrectly flagged as backdoored by $\mathcal{G}$.
\end{enumerate}

\noindent The distinguishing advantage $\mathrm{Adv}_{\mathcal{G}} = |\mathrm{TPR}_{\mathcal{G}} - \mathrm{FPR}_{\mathcal{G}}|$ is $1$ for a perfect detector and $0$ in expectation for random guessing.

 All results are averaged over $10$ independent random seeds; we report the mean and standard deviation unless stated otherwise.

\subsection{Attack Effectiveness (RQ1)}
For each configuration, we train a baseline model $f$ on the clean dataset and construct the backdoor model $\tilde{f}$ by applying our attack. Table~\ref{tab:performance} reports the Clean Accuracy (CA) of $f$ alongside the Backdoor Accuracy (BA) and Attack Success Rate (ASR) of $\tilde{f}$. 

\smallskip
\noindent \textbf{Backdoor Accuracy vs.\ Clean Accuracy.} Across all configurations, the backdoor model $\tilde{f}$ maintains BA close to the CA of the baseline model $f$. ViT exhibits the smallest degradation, with BA--CA gaps under $1.5$ percentage points on all three datasets ($97.3\%$ vs.\ $97.5\%$ on CIFAR-10, $95.5\%$ vs.\ $96.9\%$ on SVHN, $97.6\%$ vs.\ $98.0\%$ on GTSRB). ConvNet and ResNet-18 show larger but still moderate gaps: up to $8.5$ percentage points on CIFAR-10 ($78.3\%$ vs.\ $86.8\%$ for ResNet-18) and up to $9.1$ percentage points on SVHN ($85.2\%$ vs.\ $94.3\%$ for ResNet-18). These results confirm that the attack preserves clean performance across all three architectures. 

\smallskip
\noindent \textbf{Attack Success Rate. } The attack achieves strong ASR across all architectures and datasets. On CIFAR-10, ASR exceeds $93\%$ for architecture, reaching $99.5\%$ for ConvNet and $99.6\%$ for ViT. ViT also attains $95.5\%$ on SVHN, while ConvNet and ResNet-18 reach $75.5\%$ and $80.4\%$ respectively. Even on GTSRB, a relatively challenging benchmark with $43$ classes (compared to $10$ for CIFAR-10 and SVHN), the attack achieves $75.0\%$ for ConvNet and $70.8\%$ for ViT. The increased number of competing classes at the final layer makes this a harder setting for any backdoor attack that operates through the classification head, yet the attack still redirects the majority of triggered inputs to the target class. 

Standard deviations in Table~\ref{tab:performance} empirically confirm Section~\ref{sec:correctness}'s prediction that variation in the entry strength $\beta_1$ translates into ASR variance, since $\beta_1$ depends on $f_{\mathrm{enc}}$, sparse-direction, and trigger randomness. The effect is most visible for ResNet-18 and ViT on GTSRB. Mean ASR remains high on average, and the attacker can cheaply pre-screen candidate directions before injection (e.g., via the trigger's activation shift energy) to reject unfavorable seeds.

\begin{table}[ht]
\centering
\footnotesize
\setlength{\tabcolsep}{4pt}
\renewcommand{\arraystretch}{1.05}
\caption{Performance of the Sparse Backdoor attack across different configurations. We report the Clean Accuracy (CA) of the baseline model, the Backdoor Accuracy (BA), and the Attack Success Rate (ASR) of the backdoored model. }
\label{tab:performance}
\begin{tabular*}{\columnwidth}{@{\extracolsep{\fill}}llccc}
\toprule
\textbf{Dataset} & \textbf{Model} & \textbf{CA (\%)} & \textbf{BA (\%)} & \textbf{ASR (\%)} \\
\midrule
\multirow{3}{*}{CIFAR-10}
 & ConvNet   & $80.2 \pm 0.5$ & $74.1 \pm 6.8$  & $99.5 \pm 0.5$ \\
 & ResNet-18 & $86.8 \pm 0.6$ & $78.3 \pm 4.3$  & $93.5 \pm 7.4$ \\
 & ViT       & $97.5 \pm 0.2$ & $97.3 \pm 0.1$  & $99.6 \pm 0.3$ \\
\midrule
\multirow{3}{*}{SVHN}
 & ConvNet   & $91.1 \pm 0.5$ & $85.2 \pm 5.5$  & $75.5 \pm 9.4$ \\
 & ResNet-18 & $94.3 \pm 0.3$ & $85.2 \pm 5.2$  & $80.4 \pm 12.3$ \\
 & ViT       & $96.9 \pm 0.3$ & $95.5 \pm 2.9$  & $95.5 \pm 4.3$ \\
\midrule
\multirow{3}{*}{GTSRB}
 & ConvNet   & $88.7 \pm 0.6$ & $80.8 \pm 7.2$  & $75.0 \pm 27.0$ \\
 & ResNet-18 & $94.4 \pm 0.5$ & $87.4 \pm 3.3$  & $60.8 \pm 28.7$ \\
 & ViT       & $98.0 \pm 0.2$ & $97.6 \pm 1.0$  & $70.8 \pm 23.9$ \\
\bottomrule
\end{tabular*}
\end{table}

\subsection{Evasion of Detection (RQ2)}

We evaluate stealthiness against Neural Cleanse, FeatureRE, and UNICORN, applying each detector independently to the backdoor model $\tilde{f}$ and the clean reference $f'$. By Lemmas~\ref{lem:out-stab} and~\ref{lem:dither-no-bd}, $f'$ agrees with the baseline $f$ on all but an $o(1)$ fraction of inputs, so any detector advantage against $f'$ transfers to $f$.

\smallskip

\noindent \textbf{Validating the Clean Reference Model.}
Table~\ref{tab:clean_vs_noised} empirically validates Lemma~\ref{lem:dither-no-bd} on all nine configurations by comparing $f$ and $f'$ on both clean accuracy and the ASR of the trigger optimized for $\tilde{f}$. Across every configuration, the two models are nearly indistinguishable: CA differs by at most $0.2$ percentage points, and trigger ASR on $f'$ closely matches that on $f$. The non-zero ASR values observed for both models reflect the natural rate at which trigger-corrupted inputs land on the target class under a clean classifier, not any backdoor behavior. Appendix~\ref{app:clean-ref-verification} reports the full per-sample verification, direct prediction agreement $\Pr[f'(x)=f(x)] \in [97.17\%,\, 99.97\%]$, mean margins, and the Lemma~\ref{lem:out-stab} certification rate.

\begin{table}[ht]
\centering
\footnotesize
\setlength{\tabcolsep}{4pt}
\renewcommand{\arraystretch}{1.05}
\caption{Validation of Lemma~\ref{lem:dither-no-bd}: clean baseline $f$ vs.\ clean reference $f'$ (Gaussian dither only), trigger applied to both. Matching CA/ASR confirms that $f'$ computes essentially the same function as $f$. Per-sample verification in Appendix~\ref{app:clean-ref-verification}.}
\label{tab:clean_vs_noised}
\begin{tabular*}{\columnwidth}{@{\extracolsep{\fill}}llcccc}
\toprule
& & \multicolumn{2}{c}{\textbf{Clean} ($f$)}
  & \multicolumn{2}{c}{\textbf{Reference} ($f'$)} \\
\cmidrule(lr){3-4} \cmidrule(lr){5-6}
\textbf{Data} & \textbf{Model}
  & \textbf{CA} & \textbf{ASR}
  & \textbf{CA} & \textbf{ASR} \\
\midrule
\multirow{3}{*}{CIFAR}
  & ConvNet & $80.2 \pm 0.5$ & $0.3 \pm 0.5$
           & $80.0 \pm 0.5$ & $0.3 \pm 0.5$ \\
  & ResNet  & $86.8 \pm 0.6$ & $1.8 \pm 1.5$
           & $86.8 \pm 0.6$ & $1.8 \pm 1.5$ \\
  & ViT     & $97.5 \pm 0.2$ & $18.6 \pm 35.5$
           & $97.5 \pm 0.2$ & $18.8 \pm 35.7$ \\
\midrule
\multirow{3}{*}{SVHN}
  & ConvNet & $91.1 \pm 0.5$ & $28.6 \pm 15.7$
           & $90.9 \pm 0.5$ & $29.0 \pm 16.8$ \\
  & ResNet  & $94.3 \pm 0.3$ & $0.9 \pm 0.6$
           & $94.3 \pm 0.3$ & $0.9 \pm 0.6$ \\
  & ViT     & $96.9 \pm 0.3$ & $0.4 \pm 0.4$
           & $96.9 \pm 0.3$ & $0.4 \pm 0.4$ \\
\midrule
\multirow{3}{*}{GTSRB}
  & ConvNet & $88.7 \pm 0.6$ & $11.6 \pm 15.8$
           & $88.6 \pm 0.6$ & $12.4 \pm 17.5$ \\
  & ResNet  & $94.4 \pm 0.5$ & $2.8 \pm 1.2$
           & $94.4 \pm 0.5$ & $2.8 \pm 1.2$ \\
  & ViT     & $98.0 \pm 0.2$ & $1.2 \pm 0.9$
           & $98.0 \pm 0.2$ & $1.2 \pm 0.9$ \\
\bottomrule
\end{tabular*}
\end{table}

\smallskip 

\noindent \textbf{Detection Performance.} We now evaluate whether Neural Cleanse, FeatureRE, and UNICORN can distinguish $\tilde{f}$ from $f'$. By Lemma~\ref{lem:dither-no-bd}, the advantage against $f'$ reported below is, up to an $o(1)$ term, the same advantage each detector would achieve against the true baseline $f$. For each configuration, we apply all three detectors to $10$ independently seeded pairs $(\tilde{f}, f')$ and report TPR, FPR, and distinguishing advantage ($\mathrm{Adv}_{\mathcal{G}} = |\mathrm{TPR}_{\mathcal{G}} - \mathrm{FPR}_{\mathcal{G}}|$) in Table~\ref{tab:undetectability}.

All three detectors exhibit limited and inconsistent advantage. Mean advantages across the nine configurations are $0.14$ (Neural Cleanse), $0.01$ (FeatureRE), and $0.21$ (UNICORN), with an overall mean of $0.12$, close to random guessing. FeatureRE is degenerate in most settings, either flagging every model as backdoored (e.g., ConvNet on CIFAR-10 and GTSRB) or none (e.g., ResNet-18 and ViT on all three datasets). Neural Cleanse achieves at most $0.20$ across configurations and zero on ViT/CIFAR-10. UNICORN peaks at $0.50$ (ResNet-18/SVHN and ViT/GTSRB) but does not generalize: it drops to $0.00$ or $0.10$ with different architectures.

\begin{table}[ht]
\centering
\caption{Undetectability of Sparse Backdoor against state-of-the-art detection methods. We report the true positive rate (TPR), false positive rate (FPR), and distinguishing advantage (\textbf{Adv} = $|\mathrm{TPR} - \mathrm{FPR}|$) for Neural Cleanse (NC), FeatureRE, and UNICORN across datasets and architectures.}
\label{tab:undetectability}
\setlength{\tabcolsep}{3pt}
\resizebox{\columnwidth}{!}{%
\begin{tabular}{ll ccc ccc ccc}
\toprule
& & \multicolumn{3}{c}{\textbf{NC}} & \multicolumn{3}{c}{\textbf{FeatureRE}} & \multicolumn{3}{c}{\textbf{UNICORN}} \\
\cmidrule(lr){3-5} \cmidrule(lr){6-8} \cmidrule(lr){9-11}
\textbf{Dataset} & \textbf{Model} & TPR & FPR & \textbf{Adv} & TPR & FPR & \textbf{Adv} & TPR & FPR & \textbf{Adv} \\
\midrule
\multirow{3}{*}{CIFAR-10}
 & ConvNet   & 0.40 & 0.30 & \textbf{0.10} & 1.00 & 1.00 & \textbf{0.00} & 0.10 & 0.00 & \textbf{0.10} \\
 & ResNet-18 & 0.50 & 0.30 & \textbf{0.20} & 0.00 & 0.00 & \textbf{0.00} & 0.50 & 0.20 & \textbf{0.30} \\
 & ViT       & 0.20 & 0.20 & \textbf{0.00} & 0.00 & 0.00 & \textbf{0.00} & 0.10 & 0.00 & \textbf{0.10} \\
\midrule
\multirow{3}{*}{SVHN}
 & ConvNet   & 0.10 & 0.30 & \textbf{0.20} & 0.90 & 0.80 & \textbf{0.10} & 0.00 & 0.00 & \textbf{0.00} \\
 & ResNet-18 & 0.20 & 0.30 & \textbf{0.10} & 0.00 & 0.00 & \textbf{0.00} & 0.90 & 0.40 & \textbf{0.50} \\
 & ViT       & 0.60 & 0.40 & \textbf{0.20} & 0.00 & 0.00 & \textbf{0.00} & 0.50 & 0.70 & \textbf{0.20} \\
\midrule
\multirow{3}{*}{GTSRB}
 & ConvNet   & 0.20 & 0.10 & \textbf{0.10} & 1.00 & 1.00 & \textbf{0.00} & 0.10 & 0.00 & \textbf{0.10} \\
 & ResNet-18 & 0.30 & 0.50 & \textbf{0.20} & 0.00 & 0.00 & \textbf{0.00} & 0.10 & 0.20 & \textbf{0.10} \\
 & ViT       & 0.50 & 0.70 & \textbf{0.20} & 0.00 & 0.00 & \textbf{0.00} & 0.50 & 0.00 & \textbf{0.50} \\
\bottomrule
\end{tabular}%
}
\end{table}
\noindent \textbf{On adaptive evaluation.}
Sparse Backdoor uses no defense-aware objective; its stealthiness follows structurally from the sparse perturbation under Sparse PCA hardness, and no detector reliably distinguishes the backdoored model from the clean reference. The Sparse PCA reduction subsumes adaptive evaluation: any detector constructed with full knowledge of the attack is a PPT distinguisher, with advantage bounded by $o(1)$ by Theorem~\ref{thm:red}. Constructing a stronger adaptive detector is itself the Sparse PCA detection problem, requiring recovery of the unknown sparse direction $s_i$ from the perturbed weights, which is infeasible under our hardness assumption. The empirical results above therefore illustrate this universal bound on representative existing detectors rather than establish it.
\begin{figure}[!htbp]
  \centering
  \begin{tikzpicture}
  \begin{groupplot}[
      group style={
          group size=3 by 2,
          horizontal sep=0.3cm,
          vertical sep=0.7cm,
          x descriptions at=edge bottom,
          y descriptions at=edge left,
      },
      width=0.36\textwidth,
      height=0.26\textwidth,
      xlabel={Epoch},
      xmin=-1, xmax=21,
      xtick={0,5,10,20},
      grid=major,
      grid style={thin, gray!25},
      tick label style={font=\small},
      label style={font=\small},
      title style={font=\small\bfseries},
      every axis plot/.append style={semithick, mark size=2pt},
      legend style={font=\small, draw=gray!50, fill=white, inner sep=2pt, row sep=0.5pt},
  ]
  \nextgroupplot[title={CIFAR-10}, ylabel={ASR (\%)}, ymin=0, ymax=100, ytick={0,20,40,60,80,100},
                 legend to name=combinedlegend, legend columns=4,
                 legend style={/tikz/every even column/.append style={column sep=0.5cm}}]
  \addplot[name path=cu, draw=none, forget plot] coordinates {(0,100.0)(1,100.0)(2,100.0)(5,99.9)(10,99.9)(20,99.9)};
  \addplot[name path=cl, draw=none, forget plot] coordinates {(0,98.9)(1,99.0)(2,98.9)(5,98.7)(10,98.5)(20,98.3)};
  \addplot[blue, fill opacity=0.12, forget plot] fill between[of=cu and cl];
  \addplot[name path=ru, draw=none, forget plot] coordinates {(0,100.0)(1,100.0)(2,99.8)(5,65.2)(10,76.8)(20,68.3)};
  \addplot[name path=rl, draw=none, forget plot] coordinates {(0,86.1)(1,63.4)(2,35.3)(5,10.1)(10,16.8)(20,12.4)};
  \addplot[red, fill opacity=0.10, forget plot] fill between[of=ru and rl];
  \addplot[name path=vu, draw=none, forget plot] coordinates {(0,99.9)(1,99.9)(2,99.9)(5,99.8)(10,99.8)(20,99.8)};
  \addplot[name path=vl, draw=none, forget plot] coordinates {(0,99.3)(1,99.3)(2,99.3)(5,99.3)(10,99.1)(20,99.0)};
  \addplot[violet, fill opacity=0.12, forget plot] fill between[of=vu and vl];
  \addplot[blue, mark=triangle*, mark options={solid, fill=blue}] coordinates {(0,99.5)(1,99.5)(2,99.4)(5,99.3)(10,99.2)(20,99.1)};
  \addlegendentry{ConvNet}
  \addplot[red, mark=square*, mark options={solid, fill=red}] coordinates {(0,93.5)(1,84.1)(2,67.6)(5,37.6)(10,46.8)(20,40.3)};
  \addlegendentry{ResNet-18}
  \addplot[violet, mark=diamond*, mark options={solid, fill=violet}] coordinates {(0,99.6)(1,99.6)(2,99.6)(5,99.6)(10,99.5)(20,99.4)};
  \addlegendentry{ViT}
  \addlegendimage{black, dashed, semithick}
  \addlegendentry{Clean baseline accuracy}
  \nextgroupplot[title={SVHN}, ymin=0, ymax=100, ytick={0,20,40,60,80,100}]
  \addplot[name path=cu2, draw=none, forget plot] coordinates {(0,84.8)(1,84.5)(2,83.0)(5,81.4)(10,81.1)(20,80.9)};
  \addplot[name path=cl2, draw=none, forget plot] coordinates {(0,66.1)(1,67.8)(2,64.9)(5,62.9)(10,62.3)(20,61.3)};
  \addplot[blue, fill opacity=0.12, forget plot] fill between[of=cu2 and cl2];
  \addplot[name path=ru2, draw=none, forget plot] coordinates {(0,92.8)(1,69.2)(2,21.6)(5,12.9)(10,9.1)(20,5.5)};
  \addplot[name path=rl2, draw=none, forget plot] coordinates {(0,68.1)(1,14.5)(2,0.0)(5,0.0)(10,0.5)(20,0.7)};
  \addplot[red, fill opacity=0.10, forget plot] fill between[of=ru2 and rl2];
  \addplot[name path=vu2, draw=none, forget plot] coordinates {(0,99.7)(1,100.0)(2,100.0)(5,100.0)(10,100.0)(20,100.0)};
  \addplot[name path=vl2, draw=none, forget plot] coordinates {(0,91.2)(1,64.3)(2,37.5)(5,34.6)(10,28.4)(20,26.7)};
  \addplot[violet, fill opacity=0.12, forget plot] fill between[of=vu2 and vl2];
  \addplot[blue, mark=triangle*, mark options={solid, fill=blue}] coordinates {(0,75.5)(1,76.1)(2,73.9)(5,72.1)(10,71.7)(20,71.1)};
  \addplot[red, mark=square*, mark options={solid, fill=red}] coordinates {(0,80.4)(1,41.9)(2,10.2)(5,6.3)(10,4.8)(20,3.1)};
  \addplot[violet, mark=diamond*, mark options={solid, fill=violet}] coordinates {(0,95.5)(1,84.6)(2,73.8)(5,72.3)(10,67.0)(20,63.6)};
  \nextgroupplot[title={GTSRB}, ymin=0, ymax=100, ytick={0,20,40,60,80,100}]
  \addplot[name path=cu3, draw=none, forget plot] coordinates {(0,100.0)(1,27.8)(2,24.5)(5,8.7)(10,9.9)(20,14.7)};
  \addplot[name path=cl3, draw=none, forget plot] coordinates {(0,48.0)(1,0.0)(2,0.0)(5,0.0)(10,0.0)(20,0.0)};
  \addplot[blue, fill opacity=0.12, forget plot] fill between[of=cu3 and cl3];
  \addplot[name path=ru3, draw=none, forget plot] coordinates {(0,89.5)(1,85.4)(2,72.6)(5,43.0)(10,38.1)(20,43.7)};
  \addplot[name path=rl3, draw=none, forget plot] coordinates {(0,32.0)(1,25.6)(2,14.3)(5,0.0)(10,0.0)(20,0.2)};
  \addplot[red, fill opacity=0.10, forget plot] fill between[of=ru3 and rl3];
  \addplot[name path=vu3, draw=none, forget plot] coordinates {(0,94.6)(1,93.5)(2,89.9)(5,88.5)(10,85.8)(20,85.2)};
  \addplot[name path=vl3, draw=none, forget plot] coordinates {(0,46.9)(1,46.6)(2,43.7)(5,29.7)(10,27.8)(20,27.3)};
  \addplot[violet, fill opacity=0.12, forget plot] fill between[of=vu3 and vl3];
  \addplot[blue, mark=triangle*, mark options={solid, fill=blue}] coordinates {(0,75.0)(1,8.8)(2,7.4)(5,3.2)(10,3.7)(20,4.9)};
  \addplot[red, mark=square*, mark options={solid, fill=red}] coordinates {(0,60.8)(1,55.5)(2,43.4)(5,18.7)(10,18.8)(20,21.9)};
  \addplot[violet, mark=diamond*, mark options={solid, fill=violet}] coordinates {(0,70.8)(1,70.0)(2,66.8)(5,59.1)(10,56.8)(20,56.2)};
  \nextgroupplot[ylabel={BA (\%)}, ymin=65, ymax=100, ytick={70,80,90,100}]
  \addplot[blue, dashed, forget plot] coordinates {(-1,80.2)(21,80.2)};
  \addplot[red, dashed, forget plot] coordinates {(-1,86.8)(21,86.8)};
  \addplot[violet, dashed, forget plot] coordinates {(-1,97.5)(21,97.5)};
  \addplot[name path=cu_b, draw=none, forget plot] coordinates {(0,81.0)(1,80.6)(2,80.6)(5,79.8)(10,79.8)(20,79.7)};
  \addplot[name path=cl_b, draw=none, forget plot] coordinates {(0,67.2)(1,77.6)(2,78.4)(5,79.2)(10,78.8)(20,78.7)};
  \addplot[blue, fill opacity=0.12, forget plot] fill between[of=cu_b and cl_b];
  \addplot[name path=ru_b, draw=none, forget plot] coordinates {(0,82.5)(1,86.6)(2,87.6)(5,87.5)(10,88.0)(20,88.1)};
  \addplot[name path=rl_b, draw=none, forget plot] coordinates {(0,74.1)(1,81.6)(2,86.0)(5,86.1)(10,86.8)(20,86.9)};
  \addplot[red, fill opacity=0.10, forget plot] fill between[of=ru_b and rl_b];
  \addplot[name path=vu_b, draw=none, forget plot] coordinates {(0,97.4)(1,97.6)(2,97.5)(5,97.4)(10,97.3)(20,97.4)};
  \addplot[name path=vl_b, draw=none, forget plot] coordinates {(0,97.2)(1,97.2)(2,97.1)(5,97.0)(10,96.9)(20,97.0)};
  \addplot[violet, fill opacity=0.12, forget plot] fill between[of=vu_b and vl_b];
  \addplot[blue, mark=triangle*, mark options={solid, fill=blue}] coordinates {(0,74.1)(1,79.1)(2,79.5)(5,79.5)(10,79.3)(20,79.2)};
  \addplot[red, mark=square*, mark options={solid, fill=red}] coordinates {(0,78.3)(1,84.1)(2,86.8)(5,86.8)(10,87.4)(20,87.5)};
  \addplot[violet, mark=diamond*, mark options={solid, fill=violet}] coordinates {(0,97.3)(1,97.4)(2,97.3)(5,97.2)(10,97.1)(20,97.2)};
  \nextgroupplot[ymin=65, ymax=100, ytick={70,80,90,100}]
  \addplot[blue, dashed, forget plot] coordinates {(-1,91.1)(21,91.1)};
  \addplot[red, dashed, forget plot] coordinates {(-1,94.3)(21,94.3)};
  \addplot[violet, dashed, forget plot] coordinates {(-1,96.9)(21,96.9)};
  \addplot[name path=cu2_b, draw=none, forget plot] coordinates {(0,90.6)(1,91.9)(2,92.3)(5,92.6)(10,92.6)(20,92.7)};
  \addplot[name path=cl2_b, draw=none, forget plot] coordinates {(0,79.6)(1,90.1)(2,91.1)(5,91.8)(10,92.0)(20,92.1)};
  \addplot[blue, fill opacity=0.12, forget plot] fill between[of=cu2_b and cl2_b];
  \addplot[name path=ru2_b, draw=none, forget plot] coordinates {(0,90.4)(1,94.1)(2,94.7)(5,94.5)(10,94.6)(20,94.7)};
  \addplot[name path=rl2_b, draw=none, forget plot] coordinates {(0,80.0)(1,91.7)(2,92.7)(5,94.1)(10,94.4)(20,94.5)};
  \addplot[red, fill opacity=0.10, forget plot] fill between[of=ru2_b and rl2_b];
  \addplot[name path=vu2_b, draw=none, forget plot] coordinates {(0,98.4)(1,97.0)(2,97.0)(5,96.9)(10,97.0)(20,97.0)};
  \addplot[name path=vl2_b, draw=none, forget plot] coordinates {(0,92.6)(1,96.6)(2,96.6)(5,96.5)(10,96.6)(20,96.6)};
  \addplot[violet, fill opacity=0.12, forget plot] fill between[of=vu2_b and vl2_b];
  \addplot[blue, mark=triangle*, mark options={solid, fill=blue}] coordinates {(0,85.1)(1,91.0)(2,91.7)(5,92.2)(10,92.3)(20,92.4)};
  \addplot[red, mark=square*, mark options={solid, fill=red}] coordinates {(0,85.2)(1,92.9)(2,93.7)(5,94.3)(10,94.5)(20,94.6)};
  \addplot[violet, mark=diamond*, mark options={solid, fill=violet}] coordinates {(0,95.5)(1,96.8)(2,96.8)(5,96.7)(10,96.8)(20,96.8)};
  \nextgroupplot[ymin=65, ymax=100, ytick={70,80,90,100}]
  \addplot[blue, dashed, forget plot] coordinates {(-1,88.7)(21,88.7)};
  \addplot[red, dashed, forget plot] coordinates {(-1,94.4)(21,94.4)};
  \addplot[violet, dashed, forget plot] coordinates {(-1,98.0)(21,98.0)};
  \addplot[name path=cu3_b, draw=none, forget plot] coordinates {(0,87.9)(1,88.2)(2,90.6)(5,90.9)(10,91.5)(20,91.7)};
  \addplot[name path=cl3_b, draw=none, forget plot] coordinates {(0,73.5)(1,82.8)(2,83.4)(5,85.1)(10,88.5)(20,89.3)};
  \addplot[blue, fill opacity=0.12, forget plot] fill between[of=cu3_b and cl3_b];
  \addplot[name path=ru3_b, draw=none, forget plot] coordinates {(0,90.6)(1,92.2)(2,93.6)(5,94.5)(10,95.0)(20,95.6)};
  \addplot[name path=rl3_b, draw=none, forget plot] coordinates {(0,84.2)(1,89.8)(2,91.8)(5,93.3)(10,93.6)(20,94.6)};
  \addplot[red, fill opacity=0.10, forget plot] fill between[of=ru3_b and rl3_b];
  \addplot[name path=vu3_b, draw=none, forget plot] coordinates {(0,98.6)(1,98.3)(2,98.5)(5,98.8)(10,98.8)(20,98.8)};
  \addplot[name path=vl3_b, draw=none, forget plot] coordinates {(0,96.6)(1,97.9)(2,97.7)(5,97.8)(10,98.4)(20,98.4)};
  \addplot[violet, fill opacity=0.12, forget plot] fill between[of=vu3_b and vl3_b];
  \addplot[blue, mark=triangle*, mark options={solid, fill=blue}] coordinates {(0,80.7)(1,85.5)(2,87.0)(5,88.0)(10,90.0)(20,90.5)};
  \addplot[red, mark=square*, mark options={solid, fill=red}] coordinates {(0,87.4)(1,91.0)(2,92.7)(5,93.9)(10,94.3)(20,95.1)};
  \addplot[violet, mark=diamond*, mark options={solid, fill=violet}] coordinates {(0,97.6)(1,98.1)(2,98.1)(5,98.3)(10,98.6)(20,98.6)};
  \end{groupplot}
\node[anchor=north, yshift=-1mm] at (current bounding box.south) {\pgfplotslegendfromname{combinedlegend}};
  \end{tikzpicture}
\caption{\textbf{Effect of fine-tuning on Sparse Backdoor.} \textbf{Top:} ASR (\%); \textbf{bottom:} BA (\%) with dashed lines showing the baseline clean accuracy CA (\%). The defender fine-tunes FC layers on 1\% clean held-out data; shaded bands are $\pm 1$ std over 10 seeds. Clean accuracy recovers quickly while ASR persistence varies, making fine-tuning unreliable as a standalone mitigation.}
  \label{fig:ft_defense_combined}
\end{figure}

 \subsection{Persistence Under Fine-Tuning (RQ3)}   

Beyond detection, fine-tuning on clean data is a natural mitigation. Since the attack corrupts only the FC layers, the defender fine-tunes these via SGD (lr $0.01$, momentum $0.9$) on $1\%$ of the training set from the test split ($\approx 500$ samples for CIFAR-10, $730$ for SVHN, $390$ for GTSRB). Figure~\ref{fig:ft_defense_combined} reports ASR and BA across $20$ epochs.

\smallskip

\noindent \textbf{ASR Decay Under Fine-Tuning.} Resilience varies markedly across architectures. On CIFAR-10, ConvNet and ViT remain essentially immune (ASR $> 99\%$ through all $20$ epochs), while ResNet-18 drops from $93.5\%$ to $40.3\%$. On SVHN, ConvNet barely declines ($75.5\% \to 71.1\%$) and ViT retains $63.6\%$ (from $95.5\%$), but ResNet-18 collapses to $3.1\%$. On GTSRB, fine-tuning is most effective overall: ConvNet drops sharply ($75.0\% \to 4.9\%$), ResNet-18 from $60.8\%$ to $21.9\%$, with ViT again the most resilient at $56.2\%$ (from $70.8\%$).

\smallskip
\noindent \textbf{Clean Accuracy Recovery.} While the backdoor persists, fine-tuning rapidly restores the clean performance of $\tilde{f}$. For instance, ConvNet's BA on CIFAR-10 increases from $74.1\%$ (before fine-tuning) to $79.1\%$ at epoch $1$, closely matching the CA of $80.2\%$. ResNet-18's BA on SVHN recovers from $85.2\%$ to $92.9\%$ at epoch $1$, approaching the CA of $94.3\%$. ViT, whose BA is already close to CA before fine-tuning, shows negligible change throughout.

This creates a false sense of security: rapid BA recovery suggests sanitization while ASR persists, leaving fine-tuning unreliable as a standalone defense against the Sparse Backdoor attack. 

\section{Discussion and Limitations}
\label{sec:discussion}

\noindent \textbf{What the attack does \emph{not} cover.}
The construction modifies only the fully connected layers and leaves the feature encoder $f_{\mathrm{enc}}$ untouched. This is deliberate: it preserves the encoder's clean behavior and isolates the cryptographic argument to a tractable per-layer object. The cost is that the construction confines itself to architectures whose prediction head is a stack of FC layers; architectures with no FC head, such as fully convolutional segmentation networks, require a different per-layer reduction. The threat model further restricts the defender to inspecting parameters and probing the model with chosen inputs; defenses that operate during training, such as poisoned-data ~\cite{chen2018detecting, tran2018spectral} or robust training~\cite{wang2022rethinking, diakonikolas2019sever, choudhary2024attacking}.

\smallskip

\noindent \textbf{Task scope.}
We evaluate only on image classification (CIFAR-10, SVHN, GTSRB). Theorem~\ref{thm:red} is task-agnostic: undetectability follows from the per-layer perturbation structure alone. The attack success rate, by contrast, depends on whether trigger optimization can drive a strong $s_1$ component at the encoder output, which we confirm for image classification (Section~\ref{sec:eval}). Whether analogous triggers exist for tasks with discrete or structured inputs (language, speech, multi-modal), or for encoders whose representations concentrate away from sparse subspaces, is open.

\smallskip
\noindent \textbf{Empirical caveats.}
The undetectability proof rests on the Margin Regularity and Calibrated Dither assumptions of Section~\ref{sec:undetect_proof}, together with the orthogonality and non-degeneracy conditions of Theorem~\ref{thm:activation}. We verify each condition across all nine architecture--dataset configurations in Appendices~\ref{app:clean-ref-verification} and~\ref{app:theorem61-verification}. This verification is empirical and architecture-specific.

\section{Conclusion and Future Work}
\label{sec:conclusion}

We presented \emph{Sparse Backdoor}, a supply-chain attack that plants a provably undetectable backdoor in pretrained image classifiers by reducing detection to the Sparse PCA detection problem. The attack achieves $\geq 93\%$ ASR on CIFAR-10 across nine configurations while three representative detectors~\cite{wang2019neural, wang2022featurere, wang2023unicorn} operate at a mean advantage of $0.12$, close to random guessing.

 The central implication is that \textit{detection-based defenses against parameter-level backdoors are fundamentally limited}: any detector that flags our attack $\tilde{f}$ must also flag the clean reference $f'$, which computes the same function as the original pretrained classifier $f$. Future work should therefore prioritize \emph{mitigation-based defenses}~\cite{goldwasser2025oblivious} that neutralize backdoors without first identifying them, alongside extensions of our construction beyond FC prediction heads and beyond image classification.

\appendix
\section{Proofs of Main Results}
\label{app:formal-proofs}

This appendix contains the complete proofs of the results presented in Section~\ref{sec:theory}, including the correctness of signal propagation (Theorem~\ref{thm:activation}) and the hardness of detection (Theorem~\ref{thm:red}), as well as the supporting lemmas.

\subsection{Proof of Theorem~\ref{thm:activation} (Correctness of Signal Propagation)}
\label{app:proof-thm-activation}

\begin{proof}
Fix $j \in \mathcal{I}_i$ and write $Z_j(u) := \langle \tilde{w}_i^{(j)}, u \rangle$. \\

\textbf{Step 1: Decomposition.}
By orthogonality ($\langle w_i^{(j)}, s_i \rangle = 0$), the pre-ReLU shift reduces to
\[
Z_j(x_i^*) - Z_j(x_i)
\;=\; \underbrace{\beta_i \cdot \xi_i^{(j)}}_{\text{backdoor signal}}
\;+\; \underbrace{\beta_i \cdot \langle \eta_i^{(j)}, s_i \rangle}_{\text{dither interference}}.
\]
\\

\textbf{Step 2: Per-coordinate signal bound.} 
Let $D_j := \mathbb{E}_\eta[\mathrm{ReLU}(Z_j(x_i^*))] - \mathbb{E}_\eta[\mathrm{ReLU}(Z_j(x_i))]$. To isolate the backdoor signal term inside the ReLU, we use the $1$-Lipschitz property: for any reals $a, \delta$ we have $\mathrm{ReLU}(a + \delta) \ge \mathrm{ReLU}(a) - |\delta|$. \\

Applying this with $a = Z_j(x_i) + \beta_i \cdot \xi_i^{(j)}$ and $\delta = \beta_i \cdot \langle \eta_i^{(j)}, s_i \rangle$ and taking expectation over $\eta$:

\begin{multline}\label{eq:peel}
D_j \;\ge\;
\mathbb{E}_\eta\big[\mathrm{ReLU}(Z_j(x_i)
+ \beta_i\xi_i^{(j)})\big]
- \mathbb{E}_\eta\big[\mathrm{ReLU}(Z_j(x_i))\big] \\
-\; \beta_i\tau_i\sqrt{\tfrac{2}{\pi}},
\end{multline}

using $\mathbb{E}[|\langle \eta_i^{(j)}, s_i \rangle|] = \tau_i \cdot \sqrt{2/\pi}$ since $\langle \eta_i^{(j)}, s_i \rangle \sim \mathcal{N}(0, \tau_i^2)$. \\

We now lower-bound the leading term. For $\xi_i^{(j)} > 0$, the shift $\beta_i \cdot \xi_i^{(j)}$ is nonnegative. When $Z_j(x_i) > 0$, both $Z_j(x_i)$ and $Z_j(x_i) + \beta_i \cdot \xi_i^{(j)}$ lie in the linear regime of ReLU, so the difference is exactly $\beta_i \cdot \xi_i^{(j)}$. When $Z_j(x_i) \le 0$, the difference is non-negative. Therefore:
\[
\mathrm{ReLU}(Z_j(x_i) + \beta_i \cdot \xi_i^{(j)})
- \mathrm{ReLU}(Z_j(x_i))
\;\ge\; \beta_i \cdot \xi_i^{(j)}
\cdot \mathbf{1}\{Z_j(x_i) > 0\}.
\]

Taking expectation over $\eta$ and applying the non-degeneracy assumption:
\begin{multline*}
\mathbb{E}_\eta\big[\mathrm{ReLU}(Z_j(x_i)
+ \beta_i \cdot \xi_i^{(j)})\big]
- \mathbb{E}_\eta\big[\mathrm{ReLU}(Z_j(x_i))\big] \\
\ge\; \beta_i \cdot \xi_i^{(j)} \cdot
P_\eta(Z_j(x_i) > 0)
\;\ge\; c_0\, \cdot \beta_i \cdot \xi_i^{(j)}.
\end{multline*}
Substituting into~\eqref{eq:peel} yields $D_j \ge c_0\, \cdot \beta_i \cdot \xi_i^{(j)} - \beta_i \cdot \tau_i \cdot \sqrt{2/\pi}$. \\

\noindent \textit{Note: } This bound is conservative: it ignores the positive contribution from neurons that are inactive on clean inputs but activated by the signal shift. A tighter bound can be obtained via the Gaussian-ReLU identity by integrating $\Phi((\mu_j + t)/\nu)$ over the shift, which also captures this additional gain. \\

\textbf{Step 3: Aggregation.}
Since $s_{i+1}[j] = 1/\sqrt{k_{i+1}}$ for $\xi_i^{(j)} > 0$ and zero otherwise:
\begin{multline*}
\Big\langle
\mathbb{E}_\eta\big[\mathrm{ReLU}(\tilde{W}_i^\top x_i^*)\big]
- \mathbb{E}_\eta\big[\mathrm{ReLU}(\tilde{W}_i^\top x_i)\big],\;
s_{i+1}\Big\rangle \\
= \frac{1}{\sqrt{k_{i+1}}}
\!\sum_{j:\,\xi_i^{(j)} > 0}\! D_j.
\end{multline*}
Substituting the per-coordinate bound and noting that the dither penalty $\beta_i \cdot \tau_i \cdot \sqrt{2/\pi}$ is coordinate-independent and summed over $k_{i+1}$ terms yields the factor $k_{i+1}/\sqrt{k_{i+1}} = \sqrt{k_{i+1}}$, giving the claimed bound. Since
$\mathbb{E}[\xi_i^{(j)}\mathbf{1}\{\xi_i^{(j)} > 0\}] = \sigma_i/\sqrt{2\pi}$, taking expectation over $\xi$
confirms a net directional gain of order $\Omega(\beta_i \cdot \sigma_i)$.
\end{proof}

\subsection{Proof of Theorem~\ref{thm:red} (Hardness of Detection)}
\label{app:proof-thm-red}

\begin{proof}
    We prove the single-layer case and then extend to multiple layers. \\

    \textbf{Step 1: Single-layer reduction. } Fix a layer index $i$ with the base weights $W_i$ and candidate set $\mathcal{I}_i$. Suppose a PPT distinguisher $\mathcal{G}$ can distinguish $f'$ from $\tilde{f}$ at layer $i$ with advantage $\varepsilon$. We construct a PPT simulator $\mathcal{B}$ that solves the Sparse PCA detection problem with the same advantage. \\

    Given $m = |\mathcal{I}_i|$ i.i.d.\ samples $Y = \{y_1, \ldots, y_m\} \subset \mathbb{R}^{d_i}$ drawn from either $\mathcal{H}_{\mathrm{null}}$ or $\mathcal{H}_{\mathrm{alt}}$, the simulator $\mathcal{B}$ constructs a model as follows: for each $j \in \mathcal{I}_i$, set 
    \[
        \hat{w}_i^{(j)} = w_i^{(j)} + \tau_i \cdot y_j, 
    \]
    and leave all columns $j \notin \mathcal{I}_i$ unchanged. The simulator forwards the resulting model to $\mathcal{G}$ and outputs whatever $\mathcal{G}$ returns. 

    We verify that the constructed weights  match the distributions in Definition~\ref{def:weight-dist} under both hypotheses. \\

    \textbf{Under $\mathcal{H}_{\mathrm{null}}$:} Each sample $y_j \sim \mathcal{N}(0, I_{d_i})$, so $\tau_i \cdot y_j \sim \mathcal{N}(0, \tau_i^2 I_{d_i})$ and 
    \[
        \hat{w}^{(j)}
        = w_i^{(j)} + \tau_i \cdot y_j
        \sim \mathcal{N}(w_i^{(j)},\, \tau_i^2 I_{d_i}),
    \]
    which is identically distributed to the clean reference
    column $w_i'^{(j)} = w_i^{(j)} + \eta_i^{(j)}$ with
    $\eta_i^{(j)} \sim \mathcal{N}(0, \tau_i^2 I_{d_i})$. \\

    \textbf{Under $\mathcal{H}_{\mathrm{alt}}$:} Each sample $y_j \sim \mathcal{N}(0, I_{d_i} + \theta \cdot s_i s_i^\top)$, so $\tau_i \cdot y_j$ is zero-mean Gaussian with covariance 
    \[
    \tau_i^2 \cdot (I_{d_i} + \theta \cdot s_i s_i^\top)
    = \tau_i^2 \cdot I_{d_i} + \sigma_i^2 \cdot s_i s_i^\top,
    \]
    where the last equality uses $\theta = \sigma_i^2 / \tau_i^2$. The backdoor column $\tilde{w}_i^{(j)} = w_i^{(j)} + \eta_i^{(j)}+ \xi_i^{(j)} \cdot s_i$ has the same mean $w_i^{(j)}$ and the same covariance $\tau_i^2 \cdot I_{d_i} + \sigma_i^2 \cdot s_i s_i^\top$ (since $\eta_i^{(j)}$ and $\xi_i^{(j)}$) are independent and zero-mean). Both distributions are Gaussian with identical first two moments, hence $\hat{w}^{(j)} \stackrel{d}{=} \tilde{w}_i^{(j)}$. \\

    Therefore, $\mathcal{G}$ receives exactly the clean model under $\mathcal{H}_{\mathrm{null}}$ and exactly the backdoor model under $\mathcal{H}_{\mathrm{alt}}$, so $\mathcal{B}$ solves the Sparse PCA detection problem with the same advantage $\varepsilon$. Under Assumption~\ref{assm:hardness}, $\varepsilon = o(1)$. \\

    \textbf{Step 2: Extension to $L$ layers. }
    The full attack perturbs $L$ FC layers sequentially, where the sparse direction $s_{i+1}$ at layer $i+1$ is determined by the backdoor coefficients $\xi_i$ at layer $i$. We handle this via a hybrid argument. Define hybrid models $H_0, H_1, \ldots, H_L$, where $H_i$ has layers $1, \ldots, i$ backdoored and layer $i+1, \ldots, L$ receiving only Gaussian dither. Note that $H_0 = f'$ and $H_L = \tilde{f}$. \\

    Adjacent hybrids $H_{i-1}$ and $H_i$ differ only at layer $i$. The single-layer reduction above applies to this layer: the simulator generates layers $1, \ldots, i-1$ by running the attack algorithm (which it can do since it knows the full construction), embeds the Sparse PCA samples at layer $i$, and adds only Gaussian dither to layers $i+1, \ldots, L$.

    Although $s_i$ depends on the coefficients $\xi_{i-1}$ at layer $i-1$, this cross-layer dependency does not compromise the reduction. We establish this formally by showing that no PPT algorithm can recover $s_i$ from the observed weights at earlier layers.  \\
    
    In both hybrids $H_{i-1}$ and $H_i$, the weights at layers $1, \dots, i-1$ are constructed identically: both apply the full backdoor injection (Gaussian dither and structured perturbation) at these layers. We denote the shared weights at layers $1, \ldots, i-1$ by ${W}_{<i}$.

    To recover $s_i$ from $W_{<i}$, a PPT distinguisher would need to perform the following two steps: 
    
    \emph{(a) Recover $s_{i-1}$.} The observed perturbed columns at layer $i-1$ are $\tilde{w}_{i-1}^{(j)} = w_{i-1}^{(j)} + \eta_{i-1}^{(j)} + \xi_{i-1}^{(j)} \cdot s_{i-1}$ for $j \in \mathcal{I}_{i-1}$. After subtracting the known base weights and rescaling by $1/\tau_{i-1}$, the distinguisher observes samples of the form 
\[
    z_j \;=\; \frac{\eta_{i-1}^{(j)}}{\tau_{i-1}} + \frac{\xi_{i-1}^{(j)}}{\tau_{i-1}} 
    \cdot s_{i-1} \;\sim\; \mathcal{N}\!\left(0,\; I_{d_{i-1}} + \theta_{i-1} \cdot 
    s_{i-1} s_{i-1}^\top\right).
\]
    Recovering the unknown $k_{i-1}$-sparse unit direction $s_{i-1}$ from these samples requires at least the ability to detect whether a sparse spike is present, which is the Sparse PCA detection problem at layer $i-1$. Since the detection is computationally hard under Assumption~\ref{assm:hardness}, recovery of $s_{i-1}$ is also hard.

    \emph{(b) Estimate individual $\xi_{i-1}^{(j)}$.} Even if $s_{i-1}$ were known, the distinguisher can only compute
\[
    \langle \tilde{w}_{i-1}^{(j)} - w_{i-1}^{(j)},\, s_{i-1} \rangle 
    \;=\; \xi_{i-1}^{(j)} + \langle \eta_{i-1}^{(j)},\, s_{i-1} \rangle,
\]
    which is $\xi_{i-1}^{(j)}$ corrupted by independent noise $\langle \eta_{i-1}^{(j)}, s_{i-1} \rangle  \sim \mathcal{N}(0, \tau_{i-1}^2)$. The construction of $s_i$ selects the $k_i$ columns with the largest positive backdoor coefficients among $\{\xi_{i-1}^{(j)}\}_{j \in \mathcal{I}_{i-1}}$ and  normalizes the result. Reliably recovering this selection requires estimating the  sign and rank ordering of $|\mathcal{I}_{i-1}|$ values, each observed through additive  $\mathcal{N}(0, \tau_{i-1}^2)$ noise. Since step~(a) is already computationally infeasible, step~(b) cannot be reached by any PPT algorithm. \\

    Thus, by the triangle inequality:
    \begin{align*}
        &\left|
        \Pr[\mathcal{G}(\tilde{f}) = 1]
        - \Pr[\mathcal{G}(f') = 1]
        \right| \\
        &\quad\le\;
        \sum_{i=1}^{L}
        \left|
        \Pr[\mathcal{G}(H_i) = 1]
        - \Pr[\mathcal{G}(H_{i-1}) = 1]
        \right| \\
        &\quad\le\; L \cdot o(1)
        \;=\; o(1),
    \end{align*}

    where the last step uses the fact that $L$ is constant. This establishes the undetectability of the Sparse Backdoor attack. 
\end{proof}

\subsection{Proof of Lemma~\ref{lem:out-stab} (Output Stability under Gaussian Dither)}
\label{app:proof-lem-out-stab}

\begin{proof} [\textbf{Proof Sketch.}]
    We unroll the layers. For any input $x$, let $h_i(x)$ and $h_i'(x)$ denote the pre-activation at layer $i$ of $f$ and $f'$, respectively (so $g(x) = h_L(x)$ and $g'(x) = h_L'(x)$). At layer $i$, the difference $h_i'(x) - h_i(x)$ has two sources: (a) the propagated perturbation from earlier layers, multiplied by $W_i$, and (b) the fresh dither $\Delta W_i$ applied to the current layer, acting on the post-activation at layer $i-1$. \\

    Source (b) is a Gaussian random vector whose covariance is at most $\tau_i^2 \cdot \|h_{i-1}'(x)\|_2^2 \cdot I_{d_i}$ restricted to the $|\mathcal{I}_j|$ perturbed columns, so by standard Gaussian concentration its $\ell_2$-norm is at most $\tau_i \cdot\|h_{i-1}'(x)\|_2 \cdot\sqrt{|\mathcal{I}_i| + 2\log(L/\delta)}$ except with probability $\delta/L$. Source (a) is bounded by $\|W_i\|_{\mathrm{op}}\cdot\|h_{i-1}'(x) - h_{i-1}(x)\|_2$, because ReLU is $1$-Lipschitz. \\

    Iterating from $i=1$ to $L$ and bounding $\|h_{i-1}'(x)\|_2 \le 2 \cdot \prod_{j < i}\|W_j\|_{\mathrm{op}} \cdot \|f_{\mathrm{enc}}(x)\|_2 $. The factor of $2$ accounts for the difference between the clean and perturbed pre-activations: by the triangle inequality $\|h_{i-1}'(x)\|_2 \le \|h_{i-1}(x)\|_2 + \|h_{i-1}'(x) - h_{i-1}(x)\|_2$, and under the calibrated dither regime (Assumption~\ref{assm:dither}) the perturbation is small relative to the pre-activation magnitude, so $\|h_{i-1}'(x)\|_2 \le 2\|h_{i-1}(x)\|_2$ holds at every layer. Then, applying a union bound over layers, yields the claimed inequality with failure probability $\delta$.
\end{proof}

\subsection{Proof of Lemma~\ref{lem:dither-no-bd} (Dither Preserves Predictions)}
\label{app:proof-lem-dither-no-bd}
\begin{proof}[\textbf{Proof Sketch.}]
By Lemma~\ref{lem:out-stab} and Assumption~\ref{assm:dither}, with probability $1 - o(1)$ over the dither we have $\|g'(x) - g(x)\|_\infty \le \|g'(x) - g(x)\|_2 < \gamma/2$ uniformly over $x$. On any $x$ where $g$ has margin $\ge \gamma$, a perturbation of size $< \gamma/2$ in every coordinate of the logits cannot change the argmax: the gap between the top-$1$ and top-$2$ entries of $g(x)$ is at least $\gamma$, and each entry of $g'(x)$ differs from the corresponding entry of $g(x)$ by less than $\gamma/2$, so $\arg\max_y g'(x)_y = \arg\max_y g(x)_y$, i.e.\ $f'(x) = f(x)$. Parts (a) and (b) follow by integrating against $\mathcal{D}$ and absorbing the $o(1)$ bad-margin tail of Assumption~\ref{assm:margin}.
\end{proof}
\section{Algorithm Details}
\label{app:algorithm-details}

This appendix provides the full pseudocode for the three component procedures composed by the Sparse Backdoor attack (Algorithm~\ref{alg:full_attack}): trigger optimization, intermediate-layer injection, and final-layer injection. Each algorithm is discussed in prose in Section~\ref{sec:our_attack}; we include the formal procedures here for completeness and reproducibility.

\subsection{Trigger Optimization}
Algorithm~\ref{alg:trigger_opt} implements the trigger optimization step described in Section~\ref{sec:attack-construction}, producing the optimized trigger $\Delta^*$ and the initial $k_1$-sparse backdoor direction $s_1$.

\begin{algorithm}[!ht]
\caption{\texttt{Trigger-Optimization}}
\label{alg:trigger_opt}
\KwIn{Feature encoder $f_{enc}$; Sample images $X$; Sparsity $k_1$; Bound $\delta$}
\KwOut{Optimized trigger $\Delta^*$; $k_1$-sparse backdoor direction $s_1 \in \mathbb{R}^{d_1}$}
\BlankLine
\tcc{ Random selection for support of sparse direction}
$\mathcal{T} \leftarrow$ Randomly select $k_1$ indices from $\{0, \dots, d_1 - 1\}$\;
$\mathcal{T}^c \leftarrow \{0, \dots, d_1 - 1\} \setminus \mathcal{T}$ \hfill $\triangleright$ Non-target indices\;
\BlankLine
\tcc{ Optimization}
$\Delta \leftarrow \text{Initialize random trigger}$\;
\For{epoch $= 1$ \KwTo $N_{epochs}$}{
    \For{batch $x \in X$}{
        $\Delta \leftarrow \delta \cdot \tanh(\Delta)$ \hfill $\triangleright$ Enforce $\delta$ constraint\;
        $z_{clean} \leftarrow f_{enc}(x)$\;
        $z_{adv} \leftarrow f_{enc}(\mathcal{A}_{\Delta}(x))$\;
        $\delta_z \leftarrow z_{adv} - z_{clean}$\;
        \BlankLine
        \tcc{Loss: Maximize target shift, minimize leakage}
        $\mathcal{L}_{target} \leftarrow - \frac{1}{|\mathcal{T}|} \sum_{j \in \mathcal{T}} \delta_z[j]$\;
        $\mathcal{L}_{leak} \leftarrow \frac{1}{|\mathcal{T}^c|} \sum_{j \in \mathcal{T}^c} (\delta_z[j])^2$\;
        $\mathcal{L} \leftarrow \mathcal{L}_{target} + \mathcal{L}_{leak}$\;
        Update $\Delta$ via backpropagation to minimize $\mathcal{L}$\;
    }
}
$\Delta^* \leftarrow \delta \cdot \tanh(\Delta)$\;
\BlankLine
\tcc{Extract sparse backdoor signal direction}
$\bar{v} \leftarrow \mathbb{E}_{x \in X} [f_{enc}(\mathcal{A}_{\Delta^*}(x)) - f_{enc}(x)]$\;
$s_1 \leftarrow \mathbf{0}_{d}$\;
$s_1[\mathcal{T}] \leftarrow \bar{v}[\mathcal{T}]$ \hfill $\triangleright$ Hard mask on non-targets\;
$s_1 \leftarrow s_1 / \|s_1\|_2$ \hfill $\triangleright$ Normalize to unit norm\;
\Return $\Delta^*, s_1$\;
\end{algorithm}

\subsection{Intermediate-Layer Injection}
Algorithm~\ref{alg:backdoor_inj} implements the per-layer injection procedure for hidden FC layers, perturbing a sampled set of columns and selecting the next-layer sparse direction $s_{i+1}$ from the top positive coefficients.

\begin{algorithm}[!ht]
\caption{\texttt{Mid-Injection}}
\label{alg:backdoor_inj}
\KwIn{Weight matrix $W_i \in \mathbb{R}^{d_i \times d_{i+1}}$; $k$-sparse input backdoor signal $s_i$; Sparsity $k_{i+1}$; Oversampling factor $c$; Attack parameters $(\sigma_i^2, \tau_i^2)$}
\KwOut{Backdoored matrix $\tilde{W}_i \in \mathbb{R}^{d_i \times d_{i+1}}$; Output backdoor signal $s_{i+1} \in \mathbb{R}^{d_{i+1}}$}
\BlankLine
$\tilde{W}_i \leftarrow W_i$\;
$N \leftarrow \lfloor c \cdot k_{i+1} \rfloor$\;
$\mathcal{I}_{i} \leftarrow$ Sample $N$ indices from $\{0, \dots, d_{i+1}-1\}$\;
\BlankLine
$\mathcal{S}_{pos} \leftarrow \emptyset$ \tcp*{Indices with positive coefficients}
\BlankLine
\For{$j \in \mathcal{I}_{i}$}{
    $\eta_i^{(j)} \sim \mathcal{N}(0, \tau_i^2 I_{d_i})$ \hfill \tcp*{Sample Gaussian dither}
    $\xi_i^{(j)} \sim \mathcal{N}(0, \sigma_i^2)$  \hfill \tcp*{Sample backdoor coefficient}
    \BlankLine
    \tcc{Perturb sampled column}
    $\tilde{w}_i^{(j)} \leftarrow w_i^{(j)} + \eta_i^{(j)} + \xi_i^{(j)} \cdot s_i$\;
    \BlankLine
    \If{$\xi_i^{(j)} > 0$}{
        Add $(j, \xi_i^{(j)})$ to $\mathcal{S}_{pos}$\;
    }
}
\BlankLine
\tcc{Select top positive coefficients}
$k_{actual} \leftarrow \min(k_{i+1}, |\mathcal{S}_{pos}|)$\;
$\mathcal{I}_{top} \leftarrow$ Indices of the $k_{actual}$ largest coefficients in $\mathcal{S}_{pos}$\;
\BlankLine
$v \leftarrow \mathbf{0}_{d_{i+1}}$\;
\For{$j \in \mathcal{I}_{top}$}{
    $v[j] \leftarrow 1$\;
}
$s_{i+1} \leftarrow v / \|v\|_2$ \hfill \tcp*{Normalize to unit norm}
\BlankLine
\Return $\tilde{W}_i, s_{i+1}$\;
\end{algorithm}

\subsection{Final-Layer Injection}
Algorithm~\ref{alg:final_backdoor_inj} implements the final-layer injection, assigning the largest backdoor coefficient to the target class $y_t$ so that the propagated signal induces the targeted misclassification.

\begin{algorithm}[!ht]
\caption{\texttt{Final-Injection}}
\label{alg:final_backdoor_inj}
\KwIn{Weight matrix $W_L \in \mathbb{R}^{d_L \times d_{L+1}}$; Input signal $s_L$; Target class $y_t$; Attack parameters $(\sigma_L^2, \tau_L^2)$}
\KwOut{Backdoored matrix $\tilde{W}_L \in \mathbb{R}^{d_L \times d_{L+1}}$}
\BlankLine
$\tilde{W}_L \leftarrow W_L$\;
$N_{classes} \leftarrow d_{L+1}$\;
\BlankLine
\tcc{ Generate and assign coefficients}
$\Gamma \leftarrow$ Sample $N_{classes}$ values independently from $\mathcal{N}(0, \sigma_L^2)$\;
$\xi_{max} \leftarrow \max(\Gamma)$ \hfill \tcp*{Identify strongest coefficient}
$\Gamma_{rest} \leftarrow \Gamma \setminus \{\xi_{max}\}$ \hfill \tcp*{Remaining coefficients}
\BlankLine
$coeff\_map \leftarrow \mathbf{0}_{N_{classes}}$\;
$coeff\_map[y_t] \leftarrow \xi_{max}$ \hfill \tcp*{Assign max to target class}
\BlankLine
\For{$j \in \{0, \dots, N_{classes}-1\} \setminus \{y_t\}$}{
    $val \leftarrow$ Pop random element from $\Gamma_{rest}$\;
    $coeff\_map[j] \leftarrow val$\;
}
\BlankLine
\tcc{Inject perturbations into all classes}
\For{$j \leftarrow 0$ \KwTo $N_{classes}-1$}{
    $\eta_L^{(j)} \sim \mathcal{N}(0, \tau_L^2 I_{d_L})$ \hfill \tcp*{Sample Gaussian dither}
    $\xi \leftarrow coeff\_map[j]$\;
    \BlankLine
    $\tilde{w}_L^{(j)} \leftarrow w_L^{(j)} + \eta_L^{(j)} + \xi \cdot s_L$\;
}
\BlankLine
\Return $\tilde{W}_L$\;
\end{algorithm}
\section{Experimental Details}
\label{app:experimental-details}

This appendix provides the full experimental setup summarized in Section~\ref{sec:eval}, including dataset statistics, model architectures, training protocols, per-configuration attack parameters, and representative trigger-corrupted inputs. \\

\noindent \textbf{Datasets. } We follow standard settings in the backdoor attack and defense literature and evaluate on three image classification benchmarks: CIFAR-10~\cite{krizhevsky2009learning} ($60{,}000$ images across $10$ classes), SVHN~\cite{netzer2011reading} ($73{,}257$ training and $26{,}032$ testing images across $10$ classes), and GTSRB~\cite{stallkamp2012man} ($26{,}640$ training and $12{,}630$ testing examples across $43$ traffic sign classes). All inputs are processed as $32 \times 32$ RGB images. \\

\noindent \textbf{Models and Training.} We evaluate on three architectures: (i) a custom convolutional network (ConvNet) with two convolutional layers followed by two fully connected layers with ReLU activations, (ii) ResNet-18 adapted for $32 \times 32$ inputs, and (iii) ViT-Small with patch size $4$, initialized from ImageNet pre-trained weights and fine-tuned for $32 \times 32$ inputs. ConvNet and ResNet-18 are trained with SGD (learning rate $0.01$, momentum $0.9$, weight decay $5 \times 10^{-4}$) for $20$ epochs with batch size $128$. ViT is trained with AdamW (learning rate $10^{-4}$, weight decay $0.05$, cosine annealing) for $50$ epochs. Early stopping with patience $10$ is applied to all architectures. \\  

\noindent \textbf{Attack Parameters.}  We calibrate attack parameters per configuration to avoid significant degradation of clean accuracy. The subspace dimension for PCA-based direction extraction is set to $k = \lfloor \sqrt{d} \rfloor$, where $d$ is the feature dimension at the target layer: $k = 32$ for ConvNet ($d = 4{,}096$), $k = 22$ for ResNet-18 ($d = 512$), and $k = 19$ for ViT ($d = 384$). Per-pixel trigger perturbations are constrained to $[-\delta, \delta]$ with $\delta = 24/255$ for all configurations. Figure~\ref{fig:trigger_examples} presents examples of clean images alongside their trigger-corrupted counterparts.

The weight perturbation at each fully connected layer has two components: isotropic dither noise and a structured backdoor signal, both scaled by a layer-specific base magnitude $\tau_i$ calibrated to the column-wise standard deviations of the pre-trained weight matrix. The signal-to-noise ratio $\theta_i = \sigma_i^2/\tau_i^2$ is determined by the attack coefficients  tuned for effectiveness. The ratio $\theta_i / k_i$, where $k_i \approx \sqrt{d_i}$ is the subspace dimension, ranges from $128$--$512$ for ConvNet ($d = 4{,}096$, $k = 32$), $419$--$1{,}164$ for ResNet-18 ($d = 512$, $k = 22$), and $337$--$660$ for ViT ($d = 384$, $k = 19$), showing no systematic increase with $d_i$. Since the computational hardness threshold from Theorem~\ref{thm:red} is $k_i\sqrt{\log d_i / |\mathcal{I}_i|}$, this confirms that $\theta_i$ does not grow faster than $k_i$, which is compatible with the asymptotic requirement $\theta_i = o\left(k_i\sqrt{\log d_i / |\mathcal{I}_i|}\right)$ from Theorem~\ref{thm:red}.

For ConvNet, which has two fully connected layers, we modify $|\mathcal{I}_L| = 5$ class columns in the final FC layer for CIFAR-10 and SVHN, and all $10$ columns for GTSRB. For ResNet-18 and ViT, which each have a single classification head, all class columns are modified.  \\

\noindent \textbf{Trigger Examples.} Figure~\ref{fig:trigger_examples} shows representative clean images alongside their trigger-corrupted counterparts across datasets and architectures. \\

\begin{figure}[t]
\centering

\textbf{CIFAR10}\\[0.3em]
\includegraphics[width=0.28\columnwidth]{figures/CIFAR10_example_clean.png}\hfill
\includegraphics[width=0.28\columnwidth]{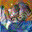}\hfill
\includegraphics[width=0.28\columnwidth]{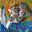}

\vspace{0.4em}

\textbf{SVHN}\\[0.3em]
\includegraphics[width=0.28\columnwidth]{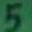}\hfill
\includegraphics[width=0.28\columnwidth]{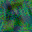}\hfill
\includegraphics[width=0.28\columnwidth]{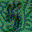}

\vspace{0.4em}

\textbf{GTSRB}\\[0.3em]
\includegraphics[width=0.28\columnwidth]{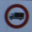}\hfill
\includegraphics[width=0.28\columnwidth]{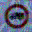}\hfill
\includegraphics[width=0.28\columnwidth]{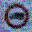}

\vspace{0.3em}

\begin{minipage}[t]{0.30\columnwidth}
  \centering\small \textbf{Clean}
\end{minipage}\hfill
\begin{minipage}[t]{0.30\columnwidth}
  \centering\small \textbf{Corrupted (ConvNet)}
\end{minipage}\hfill
\begin{minipage}[t]{0.30\columnwidth}
  \centering\small \textbf{Corrupted (ResNet18)}
\end{minipage}

\caption{Representative clean images and corresponding trigger-corrupted examples produced by the Sparse Backdoor attack across datasets and architectures.}
\label{fig:trigger_examples}
\end{figure}

\noindent \textbf{Detection Mechanisms. } We evaluate stealthiness against three detection-based defenses spanning the primary axes of backdoor analysis. Neural Cleanse~\cite{wang2019neural} operates in the input space: it reverse-engineers a minimal perturbation for each class via $L_1$-regularized optimization, then flags models whose trigger norms are statistically outliers under Median Absolute Deviation (MAD). FeatureRE~\cite{wang2022featurere} operates in the feature space, detecting backdoors by reverse- engineering learned feature representations. UNICORN~\cite{wang2023unicorn} unifies both perspectives through a generative trigger inversion framework that jointly optimizes a UNet to invert backdoor triggers in the input and feature spaces, improving robustness against attacks whose triggers lack a compact explanation in any single space. Together, these defenses cover input-space, feature-space, and joint detection strategies. Each method is applied independently to every model and produces a binary verdict (backdoored or benign). \\
\section{Practical Extensions}
\label{app:extensions}

This appendix describes two extensions to the core Sparse Backdoor attack that improve effectiveness without altering the undetectability guarantees established in Section~\ref{sec:undetect_proof}. Both extensions are used in all experiments reported in Section~\ref{sec:eval}.

\subsection{Dense Backdoor Signal via Change of Basis}
\label{app:dense-trigger}

Algorithm~\ref{alg:trigger_opt} constrains the trigger to shift activations along a randomly chosen $k_1$-sparse set of coordinates in the standard basis of weight vectors. This may miss directions in which $f_{conv}$ is most responsive, leaving useful signal energy on the table. We describe a variant that captures significantly more of the available shift energy while preserving the Sparse PCA-based undetectability guarantee.

The key observation is that Assumption~\ref{assm:hardness} requires the planted direction $s_1$ to be $k_1$-sparse in \emph{some} orthonormal basis, not necessarily the standard basis of weight vectors. We exploit this by first optimizing a dense activation shift, then constructing an orthogonal change of basis under which the dominant shift direction becomes $k_1$-sparse in a random direction.

\noindent \textbf{Basis changes preserve undetectability.}
The basis-aligned variant does not change the distributional structure used in the undetectability proof. Sparse PCA hardness is invariant under orthonormal transformations: if \(z \sim \mathcal{N}(0,I_d)\), then \(R^\top z \sim \mathcal{N}(0,I_d)\), and if \(z \sim \mathcal{N}(0,I_d+\theta vv^\top)\), then \(R^\top z \sim \mathcal{N}(0,I_d+\theta (R^\top v)(R^\top v)^\top)\). Thus, a direction that is \(k\)-sparse in the rotated basis may appear dense in the original parameter basis, but distinguishing the corresponding weight perturbations is equivalent to distinguishing a standard Sparse PCA instance after applying the known inverse rotation. Consequently, any PPT distinguisher for the basis-aligned construction would immediately yield a PPT distinguisher for the original sparse construction, contradicting Assumption~1.

Importantly, this basis change affects only the representation of the planted direction, not the indistinguishability argument. The orthogonality conditions used in Theorem~6.1 are correctness conditions: they ensure that the trigger-induced signal propagates through the pretrained network without substantially interfering with clean activations. They are not required for Theorem~\ref{thm:red}. Undetectability follows from the distributional equivalence between the dither-only reference columns and the dither-plus-spike columns under the Sparse PCA reduction.

\noindent \textbf{Construction. } The procedure has three stages:
\begin{enumerate}[leftmargin=*, nosep]
    \item \textbf{Pre-optimization.} Sample the random support $\mathcal{T} = \{t_1, \ldots, t_{k_1}\}$  from $\{0, \ldots, d_1 - 1\}$ by drawing a random permutation, exactly as in Algorithm~\ref{alg:trigger_opt}.

    \item \textbf{Dense trigger optimization.} Instead of penalizing leakage onto non-target coordinates, optimize the trigger $\Delta$ to maximize the \emph{total} shift energy across all $d_1$ neurons, subject to a regularizer that decorrelates the shift from the clean embedding:
    \[
        \mathcal{L} \;=\; -\,\mathbb{E}_{x}\!\left[
        \|\delta(x)\|_2^2\right]
        \;+\; \lambda_{\mathrm{bias}}\,
        \mathbb{E}_{x}\!\left[\cos^2\!\left(
        f_{conv}(x),\, \delta(x)\right)\right],
    \]
    where $\delta(x) = f_{conv}(\mathcal{A}_\Delta(x) - f_{conv}(x))$. The bias penalty encourages the shift to be orthogonal to the clean embedding, ensuring that the backdoor signal occupies directions not already used for clean classification and can therefore be steered independently toward the target class.

    \item \textbf{PCA rotation and dense direction. } After convergence, compute the average shift $\bar{\delta} = \mathbb{E}_{x}[\delta(x)]$ and the shift covariance $\Sigma = \mathbb{E}_{x}[\delta(x)\,
    \delta(x)^\top]$. $U_{k_1} \in \mathbb{R}^{d_1 \times k_1}$ denote the top-$k_1$ eigenvectors of $\Sigma$ (the directions capturing the most shift energy). Define $R$ by assigning:
    \[
    R[t_i, :] = U_{k_1}[:, i]^\top,
    \quad i = 1, \ldots, k_1,
    \]
    with the remaining $d_1 - k_1$ with any orthonormal basis for orthogonal complement of $\mathrm{span}(U_{k_1})$. The specific choice of these rows does not affect the attack, as the backdoor signal lies entirely within $\mathrm{span}(U_{k_1})$. Under this rotation, the projection of $\bar{\delta}$ onto $\mathcal{T}$ is 
    \[
        v = (R\,\bar{\delta})[\mathcal{T}]
        = U_{k_1}^\top\,\bar{\delta}
        \;\in\; \mathbb{R}^{k_1},
    \]
    which are the PCA coefficients. The dense backdoor direction is then 
    \[
        s_1 = U_{k_1} \cdot \frac{v}{\|v\|_2}.
    \]
    By construction, $s_1$ is a unit vector that is $k_1$-sparse in rotated basis (nonzero only at $\mathcal{T}$) yet dense in the original basis. \\
\end{enumerate} 

\noindent \textbf{Undetectability.} The change of basis does not affect the Sparse PCA reduction in Theorem~\ref{thm:red}. The simulator, who knows the attack construction, also knows $R$ and can apply $R^\top$ to the Sparse PCA samples before embedding them as weight perturbations. More precisely, given samples $y_j \sim \mathcal{N}(0, I_{d_1})$ or $y_j \sim \mathcal{N}(0, I_{d_1} + \theta\, e_1 e_1^\top)$
in the PCA basis, the simulator computes $R^\top y_j$ and sets $\hat{w}^{(j)} = w_1^{(j)} + \tau_1 \cdot R^\top y_j$. Under $\mathcal{H}_{\mathrm{alt}}$, the covariance of $\tau_1 \cdot R^\top y_j$ is $\tau_1^2(I + \theta\, R^\top e_1 e_1^\top R) = \tau_1^2 I + \sigma_1^2 s_1 s_1^\top$, matching the backdoor distribution exactly. Since $R$ is part of the attack construction (not a secret parameter), the simulator can carry out this transformation in polynomial time, and the reduction proceeds identically to the standard-basis case.

Moreover, the direction $s_1 = R^\top e_1$ inherits additional unpredictability from the trigger optimization step. Because the objective landscape of Algorithm~\ref{alg:trigger_opt} is non-convex, different random initializations converge to different local optima $\Delta^*$, each producing a different activation shift pattern and therefore a different PCA basis $R$. The resulting sparse direction $s_1$ varies across runs in a manner that is not controlled by the attacker and cannot be predicted by the defender without solving the optimization problem itself. This effective randomness of $s_1$ further strengthens the connection to the SPCA setting, where the planted sparse direction is drawn uniformly at random.

\subsection{Competitor Suppression at the Final Layer}
\label{app:suppression}

Algorithm~\ref{alg:final_backdoor_inj} assigns the largest positive final-layer coefficient to the target class. This makes the target column distributionally special in a purely statistical sense, but it does not by itself yield an efficient distinguisher. Our undetectability notion is computational: the defender is a PPT algorithm with white-box access to the labeled parameters, but does not know the planted sparse direction $s_L$. Sparse PCA detection already gives the distinguisher the full ordered collection of samples; the hardness is not based on hiding the sample order, but on hiding the sparse direction along which the covariance is spiked. Thus, arranging the coefficients in any particular order does not give the distinguisher any advantage, as the hardness is solely coming from not knowing the sparse direction $s_L$.
\\

\noindent \textbf{Procedure. } After sampling i.i.d.\ coefficients $\{\xi_L^{(j)}\}_{j=1}^{d_{L+1}} \sim \mathcal{N}(0, \sigma_L^2)$, sort them in decreasing order: $\xi_L^{(1)} \ge \xi_L^{(2)} \ge \cdots \ge \xi_L^{(d_{L+1})}$. As in the base algorithm, assign $\xi_L^{(1)}$ (the largest positive value) to the target class $y_t$. For the remaining classes, compute the clean-model responsiveness $r_j = \langle w_L^{(j)}, s_L \rangle$ for each $j \neq y_t$, and sort them in decreasing order. Assign the coefficients so that the class with the largest $r_j$ receives $\xi_L^{(d_{L+1} - 1)}$, and so on. This way, the classes most responsive to the backdoor direction receive the largest negative shifts, widening the margin between the target class and its closest competitors. \\

\noindent \textbf{Undetectability. } The suppression strategy permutes the assignment of coefficients to output neurons but does not change the set of coefficients themselves. Since the coefficients are drawn i.i.d.\ from $\mathcal{N}(0, \sigma_L^2)$, any permutation of their assignment produces the same joint distribution over the weight columns: each column $\tilde{w}_L^{(j)}$ still has perturbation $\eta_L^{(j)} + \xi_L^{(j)} s_L$ with $\xi_L^{(j)} \sim \mathcal{N}(0, \sigma_L^2)$. The distribution of each individual column's perturbation remains unchanged, so the SPCA reduction in Theorem~\ref{thm:red} applies without modification.

\section{Empirical Verification of the Clean Reference Assumptions}
\label{app:clean-ref-verification}

In this section we empirically verify Assumptions~\ref{assm:margin} (Margin Regularity) and~\ref{assm:dither} (Calibrated Dither), and the consequence proved in Lemmas~\ref{lem:out-stab} and~\ref{lem:dither-no-bd}, namely that the clean reference model $f'$ computes essentially the same function as the baseline model $f$. The verification is done on the same nine (architecture,~dataset) configurations and the same ten seeds used elsewhere in the paper, so that the numbers reported here describe exactly the models from which our attack is launched.

\subsection{Protocol}
\label{app:clean-ref-protocol}

For each configuration $(\text{architecture},\text{dataset},\text{seed})$ we load two networks:
\begin{enumerate}
    \item the clean trained model $f$ (with logit map $g$), and
    \item the \emph{clean reference model} $f'$ (with logit map $g'$), obtained by perturbing each weight column $j \in \mathcal{I}_i$ of every targeted layer $i$ by an isotropic Gaussian dither $\eta_i^{(j)} \sim \mathcal{N}(0,\tau_i^2 \cdot I_{d_i})$, with the same per-layer $\tau_i$ used by our attack and \emph{no} backdoor signal.
\end{enumerate}
We sweep the test set of the corresponding dataset and, for every test input $x$, record the four per-sample quantities defined below.

\subsection{Per-sample quantities}
\label{app:clean-ref-quantities}

\paragraph{Margin of $f$.}
The classification margin of the clean baseline at $x$ is
\begin{equation}
    \mathrm{margin}(x)
    \;\triangleq\;
    g(x)_{\hat y(x)} \;-\; \max_{y \ne \hat y(x)} g(x)_y,
    \qquad
    \hat y(x) \;=\; \arg\max_y g(x)_y .
    \label{eq:def-margin}
\end{equation}
This is exactly the quantity that appears in Assumption~\ref{assm:margin}.

\paragraph{Logit perturbation.}
The change in logits induced by the dither alone is
\begin{equation}
    \Delta(x)
    \;\triangleq\;
    \|\,g'(x) - g(x)\,\|_{\infty}
    \;=\;
    \max_{y}\,\bigl|\,g'(x)_y - g(x)_y\,\bigr|.
    \label{eq:def-perturbation}
\end{equation}
We use the $\ell_\infty$ norm because it is precisely the quantity that controls preservation of $\arg\max$: if every individual logit moves by less than $\mathrm{margin}(x)/2$, the top-1 cannot flip. The same quantity is upper-bounded by Lemma~\ref{lem:out-stab}, with the bound there expressed in $\ell_2$ (which we also record in our artifacts).

\paragraph{Lemma~\ref{lem:out-stab} sufficient predicate.}
The Lemma~\ref{lem:out-stab} bound combined with Assumption~\ref{assm:dither} yields the point-wise sufficient condition
\begin{equation}
    \mathrm{margin}(x)
    \;\ge\;
    2\,\Delta(x)
    \;\;\Longrightarrow\;\;
    f'(x) = f(x).
    \label{eq:lemma-predicate}
\end{equation}
We record the empirical fraction of test inputs on which this predicate holds; call it the \emph{Lemma~\ref{lem:out-stab} certification rate}.

\paragraph{Direct prediction agreement.}
Independently of the predicate, we measure the per-sample agreement
\begin{equation}
    \mathrm{Agree}
    \;\triangleq\;
    \Pr_{x\sim\mathcal{D}_{\text{test}}}\!\bigl[\,f'(x) = f(x)\,\bigr],
    \label{eq:def-agree}
\end{equation}
which is exactly the conclusion of Lemma~\ref{lem:dither-no-bd}. By construction $\text{Lemma certification rate} \le \mathrm{Agree}$, with equality only if every disagreement is anticipated by the worst-case bound.

\subsection{Results}
\label{app:clean-ref-results}

Table~\ref{tab:clean-reference-verification} reports, for each of the nine configurations, the mean and standard deviation across the ten seeds of the clean accuracy of $f$, the accuracy of $f'$, the direct agreement of Eq.~\eqref{eq:def-agree}, the mean margin of Eq.~\eqref{eq:def-margin}, the $99$th percentile of the logit perturbation $\Delta(x)$ of Eq.~\eqref{eq:def-perturbation}, and the Lemma~\ref{lem:out-stab} certification rate of Eq.~\eqref{eq:lemma-predicate}.

\begin{table*}[t]
\centering

\setlength{\tabcolsep}{4pt}
\resizebox{\textwidth}{!}{%
\begin{tabular}{llcccccc}
\toprule
\textbf{Arch.} & \textbf{Dataset} &
\textbf{CA}~(\%) & \textbf{Acc.\ of $f'$}~(\%) &
\textbf{Agree}~(\%) &
\textbf{$\overline{\mathrm{margin}}$} &
\textbf{$\Delta_{\,p99}$} &
\textbf{Lemma cert.}~(\%) \\
\midrule
\multirow{3}{*}{ConvNet}
 & CIFAR-10 & $80.25 \pm 0.47$ & $80.01 \pm 0.49$ & $97.17 \pm 0.52$ & $\phantom{0}3.73 \pm 0.16$ & $1.66 \pm 0.21$ & $76.56 \pm 1.75$ \\
 & SVHN     & $91.13 \pm 0.46$ & $90.94 \pm 0.52$ & $98.48 \pm 0.13$ & $\phantom{0}5.56 \pm 0.42$ & $2.06 \pm 0.09$ & $87.78 \pm 0.80$ \\
 & GTSRB    & $88.74 \pm 0.59$ & $88.57 \pm 0.60$ & $98.78 \pm 0.39$ & $10.79 \pm 0.35$           & $5.52 \pm 1.01$ & $79.12 \pm 2.82$ \\
\midrule
\multirow{3}{*}{ResNet-18}
 & CIFAR-10 & $86.84 \pm 0.57$ & $86.81 \pm 0.64$ & $99.53 \pm 0.13$ & $\phantom{0}6.61 \pm 0.29$ & $0.35 \pm 0.11$ & $96.79 \pm 0.80$ \\
 & SVHN     & $94.30 \pm 0.31$ & $94.30 \pm 0.32$ & $99.85 \pm 0.03$ & $\phantom{0}6.61 \pm 0.20$ & $0.22 \pm 0.04$ & $98.89 \pm 0.20$ \\
 & GTSRB    & $94.38 \pm 0.49$ & $94.37 \pm 0.48$ & $99.96 \pm 0.01$ & $\phantom{0}8.14 \pm 0.21$ & $0.27 \pm 0.05$ & $98.80 \pm 0.39$ \\
\midrule
\multirow{3}{*}{ViT}
 & CIFAR-10 & $97.47 \pm 0.15$ & $97.48 \pm 0.15$ & $99.96 \pm 0.01$ & $12.04 \pm 0.31$ & $0.22 \pm 0.01$ & $99.73 \pm 0.05$ \\
 & SVHN     & $96.86 \pm 0.27$ & $96.85 \pm 0.28$ & $99.95 \pm 0.02$ & $11.20 \pm 2.54$ & $0.22 \pm 0.03$ & $99.70 \pm 0.10$ \\
 & GTSRB    & $98.01 \pm 0.22$ & $98.00 \pm 0.22$ & $99.97 \pm 0.01$ & $10.69 \pm 0.23$ & $0.24 \pm 0.02$ & $99.67 \pm 0.04$ \\
\bottomrule
\end{tabular}
}
\caption{Empirical verification of the clean reference assumptions. For each (architecture, dataset) pair, we report the mean $\pm$ standard deviation across $10$ seeds. \textbf{CA} is the accuracy of the clean baseline $f$; \textbf{Acc.\ of $f'$} is the accuracy of the clean reference model. \textbf{Agree} is the per-sample prediction agreement $\Pr[f'(x)=f(x)]$ of Eq.~\eqref{eq:def-agree}, the empirical version of the conclusion of Lemma~\ref{lem:dither-no-bd}. $\overline{\mathrm{margin}}$ is the mean of $\mathrm{margin}(x)$ over the test set; $\Delta_{\,p99}$ is the $99$th percentile of the logit perturbation $\Delta(x) = \|g'(x)-g(x)\|_\infty$. \textbf{Lemma cert.} is the fraction of test inputs on which the sufficient predicate $\mathrm{margin}(x) \ge 2\,\Delta(x)$ of Eq.~\eqref{eq:lemma-predicate} holds.}
\label{tab:clean-reference-verification}
\end{table*}

\subsection{Interpretation}
\label{app:clean-ref-discussion}

\paragraph{Lemma~\ref{lem:dither-no-bd} a (clean accuracy preserved).}
The accuracy of the clean reference model $f'$ matches the accuracy of the baseline model $f$ to within $0.24$ percentage points on every configuration; on six of the nine configurations, the gap is at most $0.03$ points. The clean reference therefore inherits the clean accuracy of $f$ up to an empirically negligible additive term, which is exactly the statement of Lemma~\ref{lem:dither-no-bd}(a).

\paragraph{Lemma~\ref{lem:dither-no-bd} b (predictions preserved per sample).} The direct agreement $\Pr[f'(x) = f(x)]$ is between $97.17\%$ and $99.97\%$ across all 9 configurations, with the worst case occurring on ConvNet/CIFAR-10 and the best on ViT/GTSRB and ResNet-18/GTSRB. This quantity is strictly stronger than the integrated equality of accuracies above: it counts a disagreement even when the two models both reach the correct label by different routes, and even when the disagreement happens to cancel out in aggregate. The empirical value is therefore the most direct evidence we have for the conclusion of Lemma~\ref{lem:dither-no-bd}(b), namely that no input outside an $o(1)$-fraction is reclassified by the dither.

\paragraph{Assumption~\ref{assm:margin} (margin regularity).} The mean margin of the clean baseline $f$ is between $3.7$ and
$12.0$ logit units across the nine configurations and grows with model capacity, from $3.7$ for ConvNet to $\sim 12$ for ViT. The distribution of the margin is recorded in our released artifacts as the empirical CDF; choosing $\gamma = 1.0$ for ResNet-18 and ViT gives $\Pr[\,\mathrm{margin}(x) \ge \gamma\,] \ge 0.896$ on every configuration in those two architectures, and $\ge 0.987$ on every ViT configuration. Larger thresholds are admissible on stronger backbones; smaller ones are required for ConvNet, but the resulting $\gamma$ is still strictly positive on the overwhelming majority of inputs. In every case, the models satisfy the finite-sample analogue of Assumption~\ref{assm:margin} at $\gamma = 1.0$, with the margin-regularity probability bounded below by an architecture-dependent constant close to $1$ (at least $0.896$ on ResNet-18 and $0.987$ on ViT).

\paragraph{Assumption~\ref{assm:dither} and Lemma~\ref{lem:out-stab}.} The $99$th percentile of the per-sample logit perturbation $\Delta(x)$ is at most $0.27$ for ResNet-18 and at most $0.24$ for ViT, while the corresponding mean margin is $6.6$--$8.1$ for ResNet-18 and $10.7$--$12.0$ for ViT. The ratio $\overline{\mathrm{margin}} \,/\, \Delta_{p99}$ therefore exceeds $18$ for ResNet-18 and $44$ for ViT on every dataset. Equivalently, picking $\gamma = 1.0$ on ResNet-18 and ViT gives $\Pr[\Delta(x) < \gamma/2] \ge 0.996$ on every configuration in those two architectures, so Assumption~\ref{assm:dither} is satisfied with the same $\gamma$ used above for Assumption~\ref{assm:margin}.

For ConvNet, the picture is different in form but not in conclusion: $\Delta(x)$ is comparable in magnitude to the smaller margins produced by the lower-capacity head, so the worst-case Lemma~\ref{lem:out-stab} predicate $\mathrm{margin}(x) \ge 2\Delta(x)$ holds on only $76$--$88\%$ of inputs, well below the corresponding direct agreement of $97$--$99\%$. The gap is exactly what is expected: the lemma's bound treats the dither as if it were aligned adversarially with the margin direction, whereas in our setting the dither is isotropic and enjoys no such alignment in expectation. Lemma~\ref{lem:out-stab} remains a valid sufficient condition; it is just that the average behavior of an isotropic Gaussian, which is what actually controls the prediction agreement, is much more benign than the worst case it bounds. The conclusion of Lemma~\ref{lem:dither-no-bd}, measured directly via $\Pr[f'(x) = f(x)]$, holds with overwhelming probability on ConvNet just as it does on ResNet-18 and ViT.

\paragraph{Capacity-scaling consistency with the asymptotic theory.} As the capacity of the architecture grows from ConvNet to ResNet-18 to ViT, the mean margin of $f$ grows from $\sim 3.7$ to $\sim 6.6$ to $\sim 12.0$, while the $99$th-percentile logit perturbation $\Delta_{p99}$ shrinks from $\sim 1.66$--$5.52$ to $\sim 0.22$--$0.35$ to $\sim 0.22$--$0.24$. The empirical safety ratio $\overline{\mathrm{margin}} \,/\, \Delta_{p99}$ therefore grows from $\sim 2$ on ConvNet to $\sim 19$--$30$ on ResNet-18 to $\sim 44$--$54$ on ViT. This monotone growth is the empirical analogue of the asymptotic statement that the calibrated-dither condition becomes strictly easier to satisfy as the dimension of the layers increases, and is what makes Assumption~\ref{assm:dither} hold trivially on the larger architectures while still holding (in its conclusion if not in its sufficient form) on the smallest one.

\medskip

We conclude that on every (architecture, dataset) configuration used in the paper, the clean reference model $f'$ behaves as Lemma~\ref{lem:dither-no-bd} predicts: it preserves the clean accuracy of $f$ to within fractions of a percent, and it agrees with $f$ on $97$--$99.97\%$ of test inputs at the per-sample level. The sufficient conditions in Assumptions~\ref{assm:margin} and~\ref{assm:dither} are met directly on ResNet-18 and ViT with a single common choice $\gamma = 1.0$; on ConvNet, they are met in their conclusion (the agreement) even when the worst-case sufficient predicate is conservative. In all cases, the role of $f'$ as a clean reference for the analysis of our attack is empirically justified.

\section{Empirical Verification of the Signal Propagation Assumptions}
\label{app:theorem61-verification}

Theorem~\ref{thm:activation} establishes the correctness of signal propagation through a single FC layer under two assumptions: \emph{orthogonality} of the sparse backdoor direction $s_i$ with respect to the clean features and weights, and \emph{non-degeneracy} of neuron activations in the candidate set $\mathcal{I}_i$.  

Both orthogonality conditions are justified by the geometry of high-dimensional spaces. Since $s_i$ is a random $k_i$-sparse unit vector in $\mathbb{R}^{d_i}$, its inner product with any fixed vector in $\mathbb{R}^{d_i}$ concentrates around zero at rate $O(\sqrt{k_i/d_i})$~\cite{vershynin2018high}, which is negligible for the sparsity levels used in our construction. In practice, the inner products $\langle s_i, x_i \rangle$ and $\langle w_i^{(j)}, s_i \rangle$ are not exactly zero but are small enough that the resulting error terms do not affect the qualitative conclusion.

The non-degeneracy condition requires that each neuron in the candidate set fires with at least a constant probability on clean inputs. This holds whenever the clean pre-activation mean $\langle w_i^{(j)}, x_i \rangle$ is not too negative relative to the dither variance $\tau_i^2 \|x_i\|_2^2$, which is a mild requirement for neurons in a well-trained network. Neurons that are permanently inactive on the data distribution can simply be excluded from the candidate set $\mathcal{I}_i$.

The following subsections describe the empirical protocol and report per-configuration measurements of both quantities on the nine (architecture, dataset) settings used throughout the paper.

\subsection{Protocol}
\label{app:thm61-protocol}

For each configuration $(\text{architecture}, \text{dataset}, \text{seed})$ we load three checkpoints: the clean model~$f$, the clean reference model~$f'$ (dither only), and the backdoored model~$\tilde{f}$. We recover the backdoor direction $s_i$ by computing the residual $\Delta W = \tilde{W}_i - W'_i$ between the backdoored and dithered weight matrices at the attacked FC layer (fc1 for ConvNet, fc for ResNet-18, head for ViT), then extracting the top right singular vector of $\Delta W$ via SVD. Since $\Delta W$ is approximately rank-1 (each row $j$ is $\xi_i^{(j)} \cdot s_i^\top$), this recovers $s_i$ up to sign, which we resolve by requiring the mean backdoor coefficient $\bar\xi$ to be positive.

We then sample $512$ test images per configuration and compute their feature representations $x_i = f_{\mathrm{enc}}(x)$ (input to the attacked layer) using the clean model. All per-sample quantities below are computed on this test subset and averaged across the ten seeds.

\subsection{Per-sample quantities}

\paragraph{Orthogonality.}
For each test sample $x$ we measure the normalized projection of the clean feature vector onto the backdoor direction:
\[
\text{x\_leak}(x) \;=\; \frac{|\langle s_i,\, x_i \rangle|}{\|x_i\|_2},
\]
and for each candidate weight column $j \in \mathcal{I}_i$:
\[
\text{w\_leak}(j) \;=\; \frac{|\langle w_i^{(j)},\, s_i \rangle|}{\|w_i^{(j)}\|_2}.
\]
These are the cosine similarities that the orthogonality assumption requires to be zero. The theoretical concentration bound for a random $k$-sparse unit vector in $\mathbb{R}^d$ predicts both quantities scale as $O(\sqrt{k/d})$.

\paragraph{Non-degeneracy.}
For each candidate neuron $j \in \mathcal{I}_i$ we compute the fraction of test inputs on which the neuron is active:
\[
\hat{c}_0(j) \;=\; \frac{1}{n}\sum_{x} \mathbf{1}\!\bigl\{\langle \tilde{w}_i^{(j)},\, x_i \rangle + b_j > 0\bigr\}.
\]
We report the $10$th percentile of $\hat{c}_0$ across the candidate set as a conservative estimate of the constant $c_0$ in the non-degeneracy assumption, and the fraction of candidates with $\hat{c}_0 \ge 0.05$.

\subsection{Results}

Tables~\ref{tab:thm61-orthogonality} and~\ref{tab:thm61-nondegeneracy} report the mean across 10 seeds for all nine configurations.

\begin{table}[t]
\centering
\small
\setlength{\tabcolsep}{4pt}
\begin{tabular}{llrccc}
\toprule
\textbf{Arch.} & \textbf{Dataset} & $d_i$ &
$\sqrt{k_i/d_i}$ &
\textbf{w\_leak} &
\textbf{x\_leak} \\
\midrule
\multirow{3}{*}{ConvNet}
 & CIFAR-10 & 4096 & $0.088$ & $0.046$ & $0.226$ \\
 & SVHN     & 4096 & $0.088$ & $0.058$ & $0.300$ \\
 & GTSRB    & 4096 & $0.088$ & $0.017$ & $0.195$ \\
\midrule
\multirow{3}{*}{ResNet-18}
 & CIFAR-10 & 512  & $0.207$ & $0.227$ & $0.266$ \\
 & SVHN     & 512  & $0.207$ & $0.222$ & $0.302$ \\
 & GTSRB    & 512  & $0.207$ & $0.257$ & $0.312$ \\
\midrule
\multirow{3}{*}{ViT}
 & CIFAR-10 & 384  & $0.222$ & $0.094$ & $0.148$ \\
 & SVHN     & 384  & $0.222$ & $0.127$ & $0.178$ \\
 & GTSRB    & 384  & $0.222$ & $0.070$ & $0.112$ \\
\bottomrule
\end{tabular}
\caption{Orthogonality verification. \textbf{w\_leak} is the mean normalized projection of clean weight columns onto $s_i$; \textbf{x\_leak} is the same for clean feature vectors. The reference column $\sqrt{k_i/d_i}$ is the theoretical concentration rate for random $k_i$-sparse unit vectors.}
\label{tab:thm61-orthogonality}
\end{table}

\begin{table}[t]
\centering
\small
\setlength{\tabcolsep}{4pt}
\begin{tabular}{llccc}
\toprule
\textbf{Arch.} & \textbf{Dataset} &
$|\mathcal{I}_i|$ &
$\hat{c}_0$ (p10) &
frac $\ge 0.05$ \\
\midrule
\multirow{3}{*}{ConvNet}
 & CIFAR-10 & 39 & $0.011$ & $0.654$ \\
 & SVHN     & 39 & $0.015$ & $0.618$ \\
 & GTSRB    & 39 & $0.000$ & $0.062$ \\
\midrule
\multirow{3}{*}{ResNet-18}
 & CIFAR-10 & 10 & $0.289$ & $1.000$ \\
 & SVHN     & 10 & $0.230$ & $1.000$ \\
 & GTSRB    & 10 & $0.364$ & $1.000$ \\
\midrule
\multirow{3}{*}{ViT}
 & CIFAR-10 & 10 & $0.156$ & $1.000$ \\
 & SVHN     & 10 & $0.172$ & $1.000$ \\
 & GTSRB    & 10 & $0.263$ & $1.000$ \\
\bottomrule
\end{tabular}
\caption{Non-degeneracy verification. $\hat{c}_0$~(p10) is the 10th percentile of the per-neuron activation rate across the candidate set $\mathcal{I}_i$, serving as a conservative estimate of the constant $c_0$. The last column reports the fraction of candidates with activation rate $\ge 0.05$.}
\label{tab:thm61-nondegeneracy}
\end{table}

\subsection{Interpretation}

\paragraph{Orthogonality (weight columns).}
The normalized projection w\_leak is at or below $\sqrt{k_i/d_i}$ on every configuration: $0.017$--$0.058$ for ConvNet (vs.\ $0.088$), $0.222$--$0.257$ for ResNet-18 (vs.\ $0.207$), and $0.070$--$0.127$ for ViT (vs.\ $0.222$). The condition $\langle w_i^{(j)}, s_i \rangle \approx 0$ is well-satisfied empirically.

\paragraph{Orthogonality (features).}
For ViT, x\_leak is strictly below $\sqrt{k_i/d_i}$ on all three datasets ($0.112$--$0.178$ vs.\ $0.222$), so the feature orthogonality assumption holds directly. For ConvNet and ResNet-18, x\_leak exceeds the random-sparse reference by a factor of $1.3$--$3.4{\times}$. This gap arises because the dense backdoor direction $s_1$ (Appendix~\ref{app:dense-trigger}) is not a random sparse vector in the standard basis: although the trigger optimization explicitly penalizes alignment between the shift $f_{\mathrm{enc}}(x + \Delta^*) - f_{\mathrm{enc}}(x)$ and the clean embedding $f_{\mathrm{enc}}(x)$, the extracted PCA direction retains residual coupling through the encoder Jacobian. This residual coupling is an artifact of the dense-direction construction, not the standard-basis attack analyzed by the theorem. In practice, the attack succeeds even under this approximate orthogonality, as demonstrated by the empirical ASR reported in the main paper.

\paragraph{Non-degeneracy.}
On ResNet-18 and ViT, every candidate neuron fires on at least $5\%$ of clean inputs, with the 10th-percentile activation rate $\hat{c}_0 \ge 0.15$. The non-degeneracy assumption is comfortably satisfied. On ConvNet, the picture is weaker: only $6$--$65\%$ of candidate neurons exceed the $5\%$ activation threshold, and $\hat{c}_0$ is near zero on GTSRB. This reflects the larger candidate set ($|\mathcal{I}_i| = 39$ columns in ConvNet's 4096-dimensional fc1, versus 10 columns in the final layers of ResNet-18 and ViT) combined with the lower-capacity architecture producing more near-dormant neurons.

As noted in the discussion of assumptions following Theorem~\ref{thm:activation}, permanently inactive neurons can be excluded from the candidate set $\mathcal{I}_i$ without affecting the undetectability guarantee, since the per-column perturbation distribution is independent of candidate membership. An attacker can therefore screen the candidate set before injection—retaining only neurons with activation rate above a chosen threshold—to ensure that all injected columns contribute to signal propagation. This filtering reduces the effective candidate set size but strictly improves the non-degeneracy constant $c_0$ and, consequently, the directional gain. We did not apply this filtering in our experiments, so the ConvNet numbers represent a worst case; an attacker who pre-screens candidates would observe a stronger signal propagation on this architecture.

\end{document}